\DeclareMathOperator{\Enc}{\mathsf{Enc}}
\DeclareMathOperator{\Dec}{\mathsf{Dec}}
\DeclareMathOperator{\test}{\mathsf{Test}}
\DeclareMathOperator{\Ver}{\mathsf{Ver}}
\DeclareMathOperator{\Eval}{\mathsf{Eval}}
\providecommand{\qsiO}{\pcalgostyle{qsiO}}
\DeclareMathOperator{\Obf}{\mathsf{Obf}}
\DeclareMathOperator{\calA}{\mathcal{A}}
\DeclareMathOperator{\calB}{\mathcal{B}}
\DeclareMathOperator{\calC}{\mathcal{C}}
\DeclareMathOperator{\calD}{\mathcal{D}}
\DeclareMathOperator{\calF}{\mathcal{F}}
\DeclareMathOperator{\calR}{\mathcal{R}}
\DeclareMathOperator{\calZ}{\mathcal{Z}}
\DeclareMathOperator{\Puncture}{\mathsf{Puncture}}
\DeclareMathOperator{\Supp}{\text{Supp}}
\DeclareMathOperator{\GL}{\mathsf{GL}}
\renewcommand{\CP}{\mathsf{CP}}
\renewcommand{\E}{\mathop{{}\mathbb{E}}}
\newcommand{\F}{\mathbb{F}}
\newcommand{\from}{\leftarrow}
\newcommand{\Adv}{\mathsf{Adv}}
\newcommand{\Red}{\mathsf{Red}}
\newcommand{\PRG}{\mathsf{PRG}}
\newcommand{\RandExpt}{\texttt{Rand-Expt}}
\newcommand{\SearchExpt}{\texttt{Search-Expt}}
\newcommand{\UEExpt}{\texttt{UE-Expt}}
\newcommand{\cUEExpt}{\texttt{cUE-Expt}}
\newcommand{\CPExpt}{\texttt{CP-Expt}}
\newcommand{\CPExptPF}{\texttt{CP-Expt-PtFunc}}
\newcommand{\CPExptDecision}{\texttt{CP-Expt-Decision}}
\newcommand{\CPExptSearch}{\texttt{CP-Expt-Search}}
\newcommand{\CPExptPRF}{\texttt{CP-Expt-PRF}}
\newcommand{\Hybrid}{\texttt{Hybrid}}
\newtheorem{theorem}{Theorem}
\newtheorem{definition}{Definition}
\newtheorem{lemma}[theorem]{Lemma}
\newtheorem{corollary}[theorem]{Corollary}
\newtheorem{const}{Construction}
\newtheorem{remark}{Remark}
\Crefname{Claim}{Claim}{Claims}
\Crefname{Game}{Game}{Games}
\title{How to Use Quantum Indistinguishability Obfuscation}
\author{Andrea Coladangelo\thanks{Paul G. Allen School of Computer Science and Engineering, University of Washington. Email: \texttt{coladan@cs.washington.edu}.} \and Sam Gunn\thanks{UC Berkeley. Email: \texttt{gunn@berkeley.edu}. Supported by a Google PhD Fellowship.}}
\begin{document}

\maketitle

\begin{abstract}
    Quantum copy protection, introduced by Aaronson \cite{Aar09}, enables giving out a quantum program-description that cannot be meaningfully duplicated. Despite over a decade of study, copy protection is only known to be possible for a very limited class of programs.

    As our first contribution, we show how to achieve ``best-possible'' copy protection for all programs. We do this by introducing \emph{quantum state indistinguishability obfuscation} ($\qsiO$), a notion of obfuscation for \emph{quantum} descriptions of classical programs. We show that applying $\qsiO$ to a program immediately achieves best-possible copy protection.

    Our second contribution is to show that, assuming injective one-way functions exist, $\qsiO$ is concrete copy protection for a large family of puncturable programs --- significantly expanding the class of copy-protectable programs.
    A key tool in our proof is a new variant of unclonable encryption (UE) that we call \emph{coupled unclonable encryption} (cUE). While constructing UE in the standard model remains an important open problem, we are able to build cUE from one-way functions.
    If we additionally assume the existence of UE, then we can further expand the class of puncturable programs for which $\qsiO$ is copy protection.

    Finally, we construct $\qsiO$ relative to an efficient quantum oracle.
\end{abstract}

\newpage

\thispagestyle{empty}
{\small\tableofcontents}
\newpage

\section{Introduction}
A copy-protected program is one that can be evaluated by a user on arbitrary inputs, but not duplicated into a second, functionally equivalent program.
Since copy protection is impossible to achieve with classical information alone, Aaronson \cite{Aar09} proposed leveraging quantum information as a way to achieve provable copy protection.

Despite significant research, constructions of copy protection remain elusive. Even \emph{defining} copy protection is often quite subtle, with the right definition depending on the class of programs being copy protected. On the positive side, we know that copy protection can be achieved in either black-box models or for special classes of programs like pseudorandom functions and point functions \cite{AMP20, AP21, ALLZZ21, CLLZ21, AKL+22}. On the negative side, it is immediate that learnable programs cannot be copy protected \cite{Aar09}, and it is also known that there exist unlearnable programs that cannot be copy protected \cite{AP21}. Outside of these extremes, the landscape of copy protection remains poorly understood. For instance, our current understanding does not address copy protection for complex non-cryptographic software, e.g.\ video games. In general, the input/output behavior of a video game has almost no formal guarantees, so it seems difficult to achieve provable copy protection. This leads us to ask,
\begin{flalign} \label{question-1}
    && \textit{When are non-cryptographic programs copy protectable?} &&
\end{flalign}
A useful answer to this question should include conditions that can be heuristically verified in order to determine whether a given program is plausibly copy protectable.

Of course, even if a program \emph{can} be copy protected, it is not in general clear \emph{how} to copy-protect it. We would additionally like to know,
\begin{flalign} \label{question-2}
    && \textit{Is there a principled strategy for copy-protecting programs in general?} \tag{2} &&
\end{flalign}
In this work we introduce \emph{quantum state indistinguishability obfuscation} ($\qsiO$), which allows us to make progress on both of these questions.
To address Question (\ref{question-2}), we show that $\qsiO$ is \emph{optimal} copy protection for every class of programs. Therefore, assuming $\qsiO$ exists, Question (\ref{question-1}) reduces to determining which programs are actually copy protected by $\qsiO$.
We provide a partial answer to this question by showing that, roughly, copying a $\qsiO$ obfuscation is at least as hard as ``filling in'' the program on an input that has been redacted from the program description. 

\paragraph{Quantum state indistinguishability obfuscation ($\qsiO$).}
An obfuscator is an algorithm that takes as input a circuit $C$ and outputs an ``unintelligible'' program $C'$ with the same functionality as $C$ \cite{BGI+01}.

The most immediate generalization of this to the quantum setting is an obfuscator that takes as input a (classical description of) a quantum circuit $Q$ and outputs a (classical description of) a functionally equivalent quantum circuit $Q'$.

However, in this work we will be interested in encoding functionalities (classical or quantum) in \emph{quantum states}. In more detail, if $Q$ is a quantum circuit and $\rho$ is a quantum state, then we say that $(Q, \rho)$ is a \emph{quantum implementation} of a function $f$ if $\Pr[Q(\rho,x)=f(x)]=1$ for all $x$ in the domain of $f$. 

Several prior works have studied the question of whether obfuscators that are allowed to output quantum implementations are more powerful than obfuscators that can only output classical information, i.e.\ whether they can obfuscate a larger class of functionalities \cite{AF16,BK21,AP21,ABDS21, BM22, BKNY23}. However, all of these works consider obfuscators with classical input (and only the output is possibly a quantum state).

In contrast, a quantum state indistinguishability obfuscator $\Obf$ takes as input a quantum implementation of some function $f$, and outputs another quantum implementation of $f$. We say that $\Obf$ is a \emph{quantum state indistinguishability obfuscator} if, for any pair of quantum implementations $(Q_1, \rho_1)$ and $(Q_2, \rho_2)$ of the same function $f$, 
\[
    \Obf(Q_1, \rho_1) \approx \Obf(Q_2, \rho_2) 
\]
(where ``$\approx$'' denotes computational indistinguishability).
Note that we only consider obfuscation for $(Q, \rho)$ that implement some function $f$. In general, one could consider obfuscation for arbitrary quantum functionalities, but this is outside of the scope of our work.

\subsection{Our results}

\paragraph{Best-possible copy protection.}
The connection between $\qsiO$ and copy protection becomes clear through the observation that $\qsiO$ is \emph{best-possible} copy protection in the following (informal) sense: if a program $f$ can be copy protected, then obfuscating it using $\qsiO$ will copy-protect it. This follows from the fact that the $\qsiO$ obfuscation of a program is indistinguishable from the $\qsiO$ obfuscation of any copy-protected version of the program. Therefore, assuming $\qsiO$ exists, Question (\ref{question-1}) reduces to determining which $\qsiO$ obfuscations result in copy protection.

This result also directly addresses Question \eqref{question-2} by providing a universal heuristic to achieve copy protection. Furthermore, when using $\qsiO$ one does not need to worry about the subtleties that arise when defining copy protection for a particular class of programs; we are guaranteed that $\qsiO$ will achieve the best possible \emph{kind} of copy protection as well.

\paragraph{A construction of $\qsiO$ relative to a quantum oracle.}
In order to support the plausibility of $\qsiO$, we describe a proof-of-principle construction relative to an efficient quantum oracle. It is unclear how this quantum oracle can be heuristically instantiated --- however, it is often the case that such oracle constructions are the precursors to simpler instantiable constructions, or standard model constructions.

\paragraph{Copy protection for puncturable programs.}
The fact that $\qsiO$ is best-possible copy protection suggests that we should try to prove that it \emph{is} copy protection for certain classes of functions. We find that exploring conditions under which $\qsiO$ is copy protection sheds new light on Question \eqref{question-1} as well.

Assuming injective one-way functions, we show that $\qsiO$ copy-protects:
\begin{itemize}
    \item[(A)] Any puncturable program with ``indistinguishability'' at the punctured point.
    \item[(B)] Any puncturable program with ``non-reproducibility'' at the punctured point, under the additional assumption that unclonable encryption exists.
\end{itemize}
The idea of puncturing, along with techniques for how to use it, comes from \cite{SW21} where it is used extensively to build applications of classical $iO$. For convenience, we refer to puncturing with indistinguishability and non-reproducibility at the punctured point as \emph{decision} and \emph{search} puncturing, respectively. A puncturing procedure for a class of programs $\calF$ is an efficient algorithm $\Puncture$ that takes as input a description of a program $f \in \calF$ and a point $x \in \textnormal{Domain}(f)$, and outputs the description of a new program $f_x$. This program should satisfy $f_x(z) = f(z)$ for all $z \in \textnormal{Domain}(f) \setminus \{x\}$ as well as an additional security property:
\begin{itemize}
    \item For \emph{decision puncturing}, we require $(f_x, f(x)) \approx (f_x, f(x'))$ for a random $x'$. In \cite{SW21} it was shown that one-way functions imply the existence of decision puncturable pseudorandom functions.
    \item For \emph{search puncturing}, we require that no efficient adversary can compute from $f_x$ any output $y$ such that $\Ver(f,x,y) = 1$, for some efficient (public or private) verification procedure $\Ver$. For example, if $f$ is a signing function with a hard-coded secret key or a message authentication code, $\Ver(f,x,y)$ would use the verification key to check that $y$ is a valid signature or authentication tag for $x$. In \cite{BSW16} it was shown how to build search puncturable signing functions from indistinguishability obfuscation and one-way functions.
\end{itemize}

These results highlight some generic properties of programs that imply copy protectability, making progress on Question \eqref{question-1}: if a program can be described on \emph{all but one} input (i.e.\ it can be punctured), then in order to copy a $\qsiO$ obfuscation of the original program one must spend a comparable amount of work to that required to fill in the program's value at the missing point.

\paragraph{Techniques for the use of $\qsiO$.} One of the main contributions of this work is a technical toolkit for the use of $\qsiO$. The reader familiar with classical indistinguishability obfuscation ($\iO$) will recall that it is often used in conjunction with puncturing to obtain interesting applications. For $\qsiO$, we identify \emph{unclonable encryption} as the key primitive that, alongside puncturing, unlocks applications to copy protection. For $\qsiO$, we identify \emph{unclonable encryption}~\cite{BL20} as the key primitive that, alongside puncturing, unlocks applications to copy protection. Informally, unclonable encryption is a secret-key encryption scheme where ciphertexts are ``unclonable''.

As a key technical tool in our proof of (A), we introduce a new variant of unclonable encryption which we call \emph{coupled unclonable encryption}. Whereas constructing (full-fledged) unclonable encryption in the standard model remains an important open problem, we are able to build our variant from one-way functions,\footnote{If one is satisfied with encrypting messages of a \emph{fixed} polynomial length, then cUE exists unconditionally. This is a simple corollary of our result. However, in our applications of cUE, the messages are potentially much longer than the secret keys, and we therefore require a pseudorandom generator.} and we show that it suffices for (A). Given the notorious difficulty of building unclonable encryption in the standard model, we believe that our variant is of independent interest.

To further showcase our techniques, we show that assuming injective one-way functions and unclonable encryption, $\qsiO$ achieves a strong notion of copy protection for point functions which is beyond the reach of existing techniques.

\subsection{Comparison to previous work}

Two works are particularly related to ours: \cite{ALLZZ21}, which also studies copy protection for general programs; and \cite{CLLZ21}, which considers provable copy protection for specific functionalities that are similar to some of the ones we consider here.

\cite{ALLZZ21} takes a very different approach than ours to copy protection for general programs. By moving to a black-box model, they are able to build copy protection for \emph{all} unlearnable programs. However, it is known that there exist unlearnable programs that cannot be copy protected \cite{AP21}, so the black-box construction of \cite{ALLZZ21} does not address Question \eqref{question-1} about \emph{which} programs could be copy protectable. In contrast, $\qsiO$ could plausibly exist in the standard model for \emph{all programs}. Furthermore, we are able to identify specific properties that differentiate programs for which $\qsiO$ is copy protection.

While the black-box construction of \cite{ALLZZ21} does naturally suggest a heuristic copy protection scheme for arbitrary programs (by replacing black-box obfuscation with $\iO$), there is no ``best-possible'' guarantee comparable to $\qsiO$. There may exist programs that can be copy protected, and yet this heuristic construction nonetheless fails to copy-protect them. In order to address Question \eqref{question-1}, \cite{ALLZZ21} give a non-black-box construction of copy \emph{detection} for any watermarkable program, assuming public-key quantum money. They interpret this construction as evidence that copy \emph{protection} might exist for watermarkable programs as well.

\cite{CLLZ21} does not directly consider the problem of copy protection for general functionalities. Instead, one of the main results (under an information-theoretic conjecture that was later proven to be true in \cite{CV22}) is that punturable pseudorandom functions can be copy protected using $\iO$, assuming sub-exponentially-secure LWE. Compared to our provable copy protection results, the advantage of \cite{CLLZ21} is that $\iO$ is much more well-studied than $\qsiO$.\footnote{Despite significant research though, a construction of post-quantum $\iO$ from well-founded assumptions is still not known.} However, their result is limited to puncturable pseudorandom functions (and does not seem to extend further), while our results are applicable to a much broader class of puncturable functionalities. Additionally, our results do not rely on ``structured'' assumptions like LWE.


\subsection{Technical overview}
\label{subsec:techo}

\paragraph{Definitions.}
Throughout this technical overview, we will fix a universal quantum evaluation circuit $\Eval$. Instead of considering implementations as circuit-state pairs $(C, \rho)$, we will assume that the description of $C$ is included in $\rho$. Therefore we will view $\qsiO$ schemes as acting only on the quantum part, $\rho$.

As in the introduction, we say that $\rho$ \emph{implements} a function $f$ if, for all $x$, $\Pr[\Eval(\rho,x) = f(x)] = 1$ (or is negligibly close to $1$). An obfuscator $\Obf$ is a $\qsiO$ scheme if it satisfies:
\begin{itemize}
    \item (Correctness) if $\rho$ implements $f$, then $\Obf(\rho)$ implements $f$, and
    \item (Security) if $\rho, \rho'$ both implement $f$, then $\Obf(\rho) \approx \Obf(\rho')$.
\end{itemize}
We will write $\qsiO(\rho)$ to refer to a $\qsiO$ obfuscation of $\rho$.

\paragraph{Best-possible copy protection.}
With the definition of $\qsiO$ in hand, it is not difficult to prove that $\qsiO(f)$ is best-possible copy protection for any functionality $f$. Here is a sketch of the argument; for a more complete treatment see \Cref{theorem:best-possible}.

Let $\calF$ be any class of programs for which some copy protection scheme $\CP$ exists. That is, $\CP$ is an efficient quantum algorithm such that for $f \in \calF$, $\CP(f)$ outputs a quantum state $\rho$ such that $\Eval(\rho, x) = f(x)$ for all $x \in \textnormal{Domain}(f)$, and there is some guarantee of ``unclonability'' on $\rho$. It turns out that \Cref{theorem:best-possible} is not sensitive to the the precise definition of ``unclonability'' --- whatever definition of unclonability is satisfied by $\CP$, $\qsiO$ achieves the same guarantee. The key observation is that any adversary who wins the unclonability game for $\qsiO(f)$ must necessarily win the unclonability game for $\qsiO(\CP(f))$ as well, or else it would break the $\qsiO$ security guarantee! Since we can efficiently apply $\qsiO$ to $\CP(f)$ to prepare $\qsiO(\CP(f)) \approx \qsiO(f)$, it follows that $\qsiO(f)$ is at least as secure as $\CP(f)$.

\paragraph{Construction of $\qsiO$ relative to a quantum oracle}
Our construction of $\qsiO$ relative to a quantum oracle is simple, although the security proof is fairly involved. On input a quantum implementation $\rho$ of some function $f$, $\qsiO$ samples a uniformly random Clifford unitary $C$ and outputs the state $\tilde{\rho} = C \rho C^{\dagger}$, alongside an oracle implementing the unitary $G_C = C^{\dagger} \mathsf{Eval} C$, where $\mathsf{Eval}$ is a universal circuit. In other words, $\qsiO$ applies a Clifford one-time pad to the input state $\rho$; the oracle $G_C$ undoes the one-time pad, evaluates the function $f$, and then re-applies the one-time pad.

The ``Clifford twirl'' is sufficient to argue security against adversaries that make a \emph{single} query, but a more careful argument is required to handle general adversaries. This argument makes use of the ``admissible oracle lemma'' from \cite{GJMZ23}.

\paragraph{Unclonable encryption.}
As is often the case with classical $\iO$ \cite{SW21}, we find that $\qsiO$ does not by itself yield the applications we are most interested in. Instead, we combine $\qsiO$ with one-way functions and variants of unclonable encryption to build copy protection. We describe some background and a new result on unclonable encryption before discussing copy protection.

Unclonable encryption (UE), formally introduced by Broadbent and Lord \cite{BL20},\footnote{The notion was informally put forward by Gottesman in \cite{Got03}, who left constructing it as an open question. Broadbent and Lord \cite{BL20} formalized the notion, and achieved the first provably secure construction. We remark that Broadbent and Lord refer to what we call unclonable encryption as unclonable encryption with ``unclonable indistinguishability.''} can be viewed as an unclonable version of secret key encryption. A UE scheme consists of a generation algorithm that samples a classical secret key $\sk$, an encryption algorithm $\enc$ that outputs a quantum state, and a decryption algorithm $\dec$ that outputs a message. The security guarantee says that, without the secret key, an adversary given $\enc(\sk; m)$ cannot prepare two states which can later be used to decrypt the message $m$ (when provided the secret key $\sk$). We require UE schemes to have semantic security --- that is, the two states cannot both be used to learn non-negligible information about the message. Formally, a UE scheme $(\enc,\dec)$ is secure if no efficient adversary can win the following security game with probability noticeably greater than $1/2$:

\indent $\UEExpt(\secpar)$:
\begin{enumerate}
    \item The adversary sends the challenger a message $m$.
    \item The challenger samples a challenge bit $c \from \{0,1\}$ and a secret key $\sk \from \{0,1\}^\secpar$.
    \begin{enumerate}
        \item If $c=0$, the challenger samples a random message $r$ of the same length as $m$ and sends $\enc(\sk; r)$ to the adversary.
        \item If $c=1$, the challenger sends $\enc(\sk; m)$ to the adversary.
    \end{enumerate}
    \item The adversary splits into two non-communicating parties $A$ and $B$.
    \item The challenger sends each of $A$ and $B$ the secret key $\sk$.
    \item $A$ outputs a bit $a'$ and $B$ outputs a bit $b'$. The adversary wins if $a'=b'=c$.
\end{enumerate}

The first provably secure construction of UE was proposed in \cite{BL20}, and it satisfied a ``search-based'' notion of security in the quantum random oracle model (QROM). Subsequent work \cite{AKL+22, AKL23} achieved the ``decision'' version of UE that we consider here, still in the QROM. We conjecture that UE for single-bit messages can be built (for general messages) in the standard model, assuming one-way functions. 

One of the key insights of Broadbent and Lord \cite{BL20} is to link the ``search-based'' notion of UE to the following ``monogamy of entanglement'' result from \cite{TFKW13}, which says that 
no (unbounded) adversary can win the following security game with probability noticeably greater than 0:

\indent $\SearchExpt(\secpar)$:
\begin{enumerate}
    \item The challenger samples $x, \theta \from \{0,1\}^\secpar$ and sends $\ket{x^\theta}$ to the adversary. Here, $\ket*{x^\theta}$ is shorthand for $H^\theta \ket*{x}$, where $H^\theta$ denotes Hadamard gates applied to the qubits where the corresponding bit in $\theta$ is 1.
    \item The adversary splits into two non-communicating parties $A$ and $B$.
    \item The challenger sends each of $A$ and $B$ the basis $\theta$.
    \item $A$ and $B$ output strings $x_A, x_B$. The adversary wins if $x_A = x_B = x$.
\end{enumerate}

The reason that this result does not immediately yield UE (by using $x$ as a one-time pad for the message) is that the adversaries are required to guess \emph{all of} the message in $\SearchExpt$, whereas the adversaries in $\UEExpt$ are merely required to learn \emph{anything at all about} the message. For instance, if the adversary simply passes the first half of the qubits of $\ket{x^\theta}$ to $A$ and the second half to $B$, then both $A$ and $B$ can learn half of $x$. It is natural to attempt to evade this issue by using a randomness extractor. For a single-bit message $m$, we could use the following as a candidate unclonable encryption:
\begin{equation} \label{eq:candidate-single-bit-ue}
    \ket*{x^\theta},\, m \oplus u \cdot x
\end{equation}
where $x, \theta, u \from \{0,1\}^\secpar$, and the dot product $u \cdot x$ is taken over $\F_2$. The secret key is $\sk = (\theta, u)$, and the decryption algorithm simply reads $x$, computes $u \cdot x$, and removes the one-time pad on $m$.

Intuitively, it would seem that an adversary needs to learn all of $x$ in order to guess $u \cdot x$. This is typically proven using the quantum Goldreich-Levin reduction \cite{BV97,AC02}. Given a single quantum query to a predictor that successfully guesses $u \cdot x$ with probability $1/2+\varepsilon$ (over a random choice of $u$), the quantum Goldreich-Levin reduction produces a guess for the entire string $x$ with probability $\poly(\varepsilon)$. Since an adversary that wins $\UEExpt$ must have both parts $A$ and $B$ guess $u \cdot x$ correctly, we can run the quantum Goldreich-Levin reduction to show that each of $A$ and $B$ has at least a $\poly(\varepsilon)$ probability of guessing $x$. However, there is no guarantee that they guess $x$ correctly \emph{simultaneously}, so this reduction might never win $\SearchExpt$!

We do not know how to prove that the candidate UE scheme of \Cref{eq:candidate-single-bit-ue} is secure. Instead, we relax the requirement of UE so that a similar reduction works. This results in a variant of UE that we call \emph{coupled unclonable encryption} (cUE). In cUE, a ciphertext encrypts two messages under two independent secret keys. Each secret key alone works to decrypt the corresponding message. In the security game, $A$ receives one secret key, and $B$ receives the other. Our cUE encryption scheme for single-bit messages $m_A, m_B$ is:
\begin{equation} \label{eq:single-bit-cue}
    \ket*{x^\theta},\, m_A \oplus u \cdot x,\, m_B \oplus v \cdot x
\end{equation}
where $x, \theta, u, v \from \{0,1\}^\secpar$. The secret keys are $\sk_A = (\theta, u)$ and $\sk_B = (\theta, v)$. Now that $u$ and $v$ are independent, it is possible to prove that the above reduction works. Indeed, as we were working on this manuscript, similar ``simultaneous'' Goldreich-Levin theorems were proven in \cite{KT23, AKL23}. However, both of these works leave open the question of running a similar reduction for \emph{many-bit} messages. Specifically, in \cite{KT23}, the authors ask whether one can use many inner products to encrypt many bits, noting that their techniques do not extend to this setting. We answer this question in the affirmative in \Cref{subsec:unc-rand}, by carrying out a version of a ``hybrid argument'' on quantum operators.

This result is crucial for our copy protection applications, which require cUE for many-bit messages. Formally, the security guarantee of cUE states that an adversary cannot win the following game with probability noticeably greater than 1/2:

\indent $\cUEExpt(\secpar)$:
\begin{enumerate}
    \item The adversary sends the challenger two messages $m_A, m_B$.
    \item The challenger samples two challenge bits $a,b \from \{0,1\}$, two secret keys $\sk_A, \sk_B \from \{0,1\}^\secpar$, and two random messages $r_A, r_B$ of the same lengths as $m_A, m_B$, respectively.
    \item Let $m_A^0 = m_A, m_B^0 = m_B$, and $m_A^1 = r_A, m_B^1 = r_B$. The challenger sends $\enc(\sk_A, \sk_B; m_A^a, m_B^b)$ to the adversary.
    \item The adversary splits into two non-communicating parties $A$ and $B$.
    \item The challenger sends $\sk_A$ to $A$ and $\sk_B$ to $B$.
    \item $A$ outputs a bit $a'$ and $B$ outputs a bit $b'$. The adversary wins if $a'=a$ and $b'=b$.
\end{enumerate}
For general (many-bit) messages $m_A, m_B$, our cUE encryptions are essentially\footnote{This construction does not technically satisfy the syntax of $\cUEExpt$, because the secret keys $(\theta, U)$ and $(\theta, V)$ are not independent. This minor issue is resolved in \Cref{subsec:cue}.}
\begin{equation} \label{eq:many-bit-cue}
    \ket*{x^\theta},\, m_A \oplus \PRG(Ux),\, m_B \oplus \PRG(Vx).
\end{equation}
where $U, V$ are wide $\F_2$ matrices of appropriate dimensions, $Ux, Vx$ denote matrix-vector products, and $\PRG$ is any pseudorandom generator with appropriate stretch. Since the lengths of $Ux$ and $Vx$ are fixed as a function of $\secpar$, but the adversary can choose $m_A, m_B$ of whatever length it wishes, we need to use pseudorandom generators to potentially stretch $Ux$ and $Vx$ to the proper lengths.

We divide the proof of security for \Cref{eq:many-bit-cue} into two steps. First, in \Cref{subsec:unc-rand} we show that one of $Ux$ and $Vx$ is completely unpredictable to the corresponding pirate; we call this property \emph{unclonable randomness}. This is the core of the cUE proof and perhaps the most technical part of this work, requiring a new and delicate argument that resolves the aforementioned open question of \cite{KT23}. In \Cref{subsec:cue}, we invoke the security of the PRG to see that the cUE scheme is secure. Thus, assuming only the existence of one-way functions, there exists a cUE scheme that encrypts messages of arbitrary polynomial length.


In \Cref{subsec:decision-cp}, we show that cUE suffices to show that $\qsiO$ copy-protects puncturable programs with indistinguishability at the punctured point.

\begin{remark}
    In \cite{AKL+22}, the authors discuss ``issues with using extractors.'' The proposal for UE in \Cref{eq:candidate-single-bit-ue} falls within the category of extractor-based schemes that they are referring to, so the issues with natural proof techniques discussed there apply. However, the security of the UE scheme described above is not ruled out by their impossibility result (Theorem 1.3). Furthermore, our constructions of single-bit and general cUE in \Cref{eq:single-bit-cue,eq:many-bit-cue} are also extractor-based schemes in a similar sense, and we are nonetheless able to prove them secure. Therefore, we hope that our insights for constructing cUE may eventually be useful for constructing UE, as they may evade some of the barriers discussed in \cite{AKL+22}.
\end{remark}

Finally, we show that one can generically add a functionality that we call \emph{key testing} to any UE or cUE scheme, using $\qsiO$ and injective one-way functions. Key testing means that there is an algorithm $\test$ which determines whether a given string $z$ is a valid key for a given encryption $\sigma$. Key testing turns out to be crucial for our proofs of copy protection from $\qsiO$. The main idea to upgrade a UE or cUE scheme to one with key testing is to append to the ciphertext a $\qsiO$ obfuscation of the program $\delta_{\sk}$ (which is zero everywhere except at $\sk$). Intuitively, this allows one to check the validity of a secret key, while at the same time preserving unclonability thanks to the properties of $\qsiO$.

\paragraph{Copy protection for PRFs.}
Armed with cUE, we can apply $\qsiO$ to achieve copy protection for certain classes of functions. For the purposes of the technical overview, we will only describe how $\qsiO$ copy-protects pseudo-random functions (PRFs). This description highlights some of the main ideas behind our proof technique for the more general results of \Cref{sec:cp}. The basic idea of the proof technique is to use the $\qsiO$ guarantee to replace the PRF with a punctured version, where the values of the PRF at the challenge points are hard coded under a cUE encryption.

We explain this more precisely. Suppose that $\calF_\secpar$ is a family of puncturable PRFs with domain $\{0,1\}^{\secpar}$ and range $\{0,1\}^{n(\secpar)}$. It was shown in \cite{SW21} that puncturable PRFs can be built from any one-way function. We will prove that $\qsiO$ is a secure copy protection scheme for $\calF_\secpar$ via a sequence of hybrids, beginning with the PRF copy protection security game:

\indent $\CPExptPRF(\secpar)$:
\begin{enumerate}
    \item The challenger samples $f \from \calF_\secpar$, $a,b \from \{0,1\}$, $x_A, x_B \from \{0,1\}^{\secpar}$, and $y_A^0, y_B^0 \from \{0,1\}^{n(\secpar)}$. Let $y_A^1 = f(x_A)$ and $y_B^1 = f(x_B)$.
    \item The challenger sends the adversary $\qsiO(f)$.
    \item The adversary splits into two non-communicating parties $A$ and $B$.
    \item The challenger sends $x_A, y_A^a$ to $A$ and $x_B, y_B^b$ to $B$.
    \item $A$ outputs a bit $a'$ and $B$ outputs a bit $b'$. The adversary wins if $a'=a$ and $b'=b$.
\end{enumerate}
In other words, in this security game, the parties $A$ and $B$ are trying to decide whether they received a pair $(x,y)$ where $y = f(x)$ or where $y$ is uniformly random.

Let $f_{x_A,x_B}$ be $f$ punctured at $x_A, x_B$, let $\enc$ be a cUE scheme with key testing, and let
\[
    \sigma = \enc(x_A, x_B; f(x_A), f(x_B)).
\]
Our first hybrid uses the $\qsiO$ guarantee to replace $\qsiO(f)$ with $\qsiO(P[f_{x_A,x_B},\sigma])$, where $P[f_{x_A,x_B},\sigma]$ is a program (formally a \emph{quantum implementation} of a program) that does the following on input $z$:
\begin{enumerate}
    \item Use key testing to check whether $z$ is a valid key for $\sigma$. If not, terminate and output $f_{x_A,x_B}(z)$.
    \item Otherwise, use $z$ to decrypt $\sigma$ and output the result.
\end{enumerate}
Since $P[f_{x_A,x_B},\sigma](z) = f(z)$ for all $z$, $\qsiO(P[f_{x_A,x_B},\sigma]) \approx \qsiO(f)$. Therefore, the adversary's success probability in $\CPExptPRF(\secpar)$ does not change if the challenger instead sends $\qsiO(P[f_{x_A,x_B},\sigma])$ instead of $\qsiO(f)$ in step 2. Call this modified experiment $\Hybrid_1(\secpar)$.

Now, the pseudorandomness of $f$ at the punctured points implies that
\[
    (f_{x_A,x_B}, f(x_A), f(x_B), \enc(x_A, x_B; f(x_A), f(x_B))) \approx (f_{x_A,x_B}, \tilde{y}_A^1, \tilde{y}_B^1, \enc(x_A, x_B; \tilde{y}_A^1, \tilde{y}_B^1))
\]
where $\tilde{y}_A^1, \tilde{y}_B^1$ are random strings from the range of $f$. Therefore, the adversary's success probability is again preserved if we replace $f(x_A), f(x_B)$ with $\tilde{y}_A^1, \tilde{y}_B^1$ in $\Hybrid_1(\secpar)$. We also rename $y_A^0, y_B^0$ (introduced in step 1 of the original experiment) to $\tilde{y}_A^0, \tilde{y}_B^0$ for convenience  of notation. Then,  $\Hybrid_2(\secpar)$ is the following.

\indent $\Hybrid_2(\secpar)$:
\begin{enumerate}
    \item The challenger samples $f \from \calF_\secpar$, $a,b \from \{0,1\}$, $x_A, x_B \from \{0,1\}^{\secpar}$, and $\tilde{y}_A^0, \tilde{y}_B^0, \tilde{y}_A^1, \tilde{y}_B^1 \from \{0,1\}^{n(\secpar)}$.
    \item The challenger prepares $\tilde{\sigma} = \enc(x_A, x_B; \tilde{y}_A^1, \tilde{y}_B^1)$ and sends the adversary $\qsiO(P[f_{x_A,x_B}, \tilde{\sigma}])$.
    \item The adversary splits into two non-communicating parties $A$ and $B$.
    \item The challenger sends $x_A, \tilde{y}_A^a$ to $A$ and $x_B, \tilde{y}_B^b$ to $B$.
    \item $A$ outputs a bit $a'$ and $B$ outputs a bit $b'$. The adversary wins if $a'=a$ and $b'=b$.
\end{enumerate}
Our last hybrid, $\Hybrid_3(\secpar)$, will be the same as $\Hybrid_2(\secpar)$ except that the challenger sends the adversary $\qsiO(P[f,\tilde{\sigma}])$ instead of $\qsiO(P[f_{x_A,x_B},\tilde{\sigma}])$ in step 2. The adversary's success probability is negligibly close between $\Hybrid_2(\secpar)$ and $\Hybrid_3(\secpar)$ because $P[f,\tilde{\sigma}]$ and $P[f_{x_A,x_B},\tilde{\sigma}]$ are functionally equivalent, and so $\qsiO(P[f_{x_A,x_B},\tilde{\sigma}]) \approx \qsiO(P[f,\tilde{\sigma}])$.

Finally, notice that $\Hybrid_3(\secpar)$ is now quite close to the cUE experiment $\cUEExpt(\secpar)$! It's not difficult to see that there is a direct reduction from $\cUEExpt(\secpar)$ to $\Hybrid_3(\secpar)$, because $\qsiO(P[f, \tilde{\sigma}])$ can be generated from $\tilde{\sigma}$ by sampling $f \from \calF_\secpar$.

\subsection{Preliminaries}
We introduce some notation that we will use throughout the paper. 

We denote a \textit{quantum polynomial-time algorithm} with the acronym QPT. Formally, this is a polynomial-time uniform family of quantum circuits, where each circuit in the family is specified by a sequence of unitary operations and measurements. A quantum algorithm may in general receive (mixed) quantum states as inputs and produce (mixed) quantum states as outputs.

For a distribution $D$, the notation $x \from D$ denotes sampling an element from $D$; for a set $S$, $x \from S$ denotes sampling an element uniformly at random from $S$. For distributions $\calD, \calD'$, we write $\calD \approx \calD'$ and $\calD \equiv \calD'$ to indicate computational and statistical indistinguishability, respectively.

We denote by $\calC_d$ the Clifford group for dimension $d$, i.e., the set of $d$-dimensional unitary operators that conjugate $d$-dimensional generalized Pauli matrices to $d$-dimensional generalized Pauli matrices. If the dimension is clear from the context, we simply write $\calC$. 

We denote by $\delta_S$ the indicator function for a set $S$, with $\delta_S(x) = 1$ if $x \in S$ and $\delta_S(x) = 0$ otherwise. For a point $s$, it is understood that $\delta_s := \delta_{\{s\}}$.

For a string $x \in \{0,1\}^n$, we use $\abs{x} = n$ to denote the length of the string. By default, all operations on bitstrings are assumed to be performed over $\F_2$.

\subsection{Acknowledgements}
This material is based upon work supported by the U.S. Department of Energy, Office of Science, National Quantum Information Science Research Centers, Quantum Systems Accelerator.

We thank James Bartusek, Bhaskar Roberts, Mark Zhandry, Henry Yuen, and Fermi Ma for helpful conversations. We would also like to thank Prabhanjan Ananth for alerting us to the fact that we were missing citations of existing simultaneous Goldreich-Levin theorems in a previous version.

\section[Quantum State Indistinguishability Obfuscation (qsiO)]{Quantum State Indistinguishability Obfuscation ($\qsiO$)}
\label{sec:qsiO}
In this section, we define quantum state indistinguishability obfuscation ($\qsiO$). We show that $\qsiO$ achieves ``best-possible'' copy protection, and we describe a construction of $\qsiO$ relative to a quantum oracle.

\subsection{Definitions}
We start by defining a ``quantum implementation'' of a classical function.

\begin{definition}[Quantum implementation of a classical function]
\label{def:quantum-implementation}
Let $l_{in}, l_{out} \in \mathbb{N}$, $f: \{0,1\}^{l_{in}} \rightarrow \{0,1\}^{l_{out}}$, and $\epsilon \in [0,1]$.
A $(1-\epsilon)$-\emph{quantum implementation} of $f$ is a pair $(\rho, C)$, where $\rho$ is a state and $C$ is a quantum circuit, such that
\[
    \forall x \in \{0,1\}^{l_{in}}, \ \Pr[C(\rho,x) = f(x)] \geq 1-\epsilon \,.
\]
\end{definition}

For a quantum implementation $(\rho, C)$, we refer to its \emph{size} as the maximum between the number of qubits of $\rho$ and the number of gates of the circuit $C$. We now define $\qsiO$.

\begin{definition}[Quantum state indistinguishability obfuscator ($\qsiO$)]
Let $\{Q_\secpar\}_{\secpar \in \mathbb{N}}$ be a family of $(1-\negl)$-quantum implementations (of possibly different functions), where $\negl[]$ is some negligible function. A \emph{quantum state indistinguishability obfuscator} for $\{Q_\secpar\}_{\secpar \in \mathbb{N}}$ is a QPT algorithm $\qsiO$ that takes as input a security parameter $\secparam$, a quantum implementation $(\rho,C) \in Q_{\secpar}$, and outputs a pair $(\rho',C')$. Additionally, $\qsiO$ should satisfy the following. 
    \begin{itemize}
        \item (Correctness) There exists a negligible function $\negl[]'$ such that, for any $\secpar \in \mathbb{N}$, if $(\rho, C) \in Q_{\secpar}$ is a $(1-\negl)$-quantum implementation of some function $f$, then $\qsiO(\rho, C)$ is a $(1-\mathsf{negl}'(\lambda))$-quantum implementation of $f$.
        \item (Security) For any QPT distinguisher $D$, there exists a negligible function $\negl[]''$ such that the following holds. For all $\secpar$ and all pairs of $(1-\negl)$-quantum implementations $(\rho_0, C_0), (\rho_1, C_1) \in Q_{\secpar}$ of the same function $f$,
        \[
            \Big|\Pr[D(\secparam, \qsiO(\secparam,(\rho_0,C_0))) = 1] - \Pr[D(\secparam, \qsiO(\secparam,(\rho_1,C_1))) = 1]\Big| \le \negl[]''(\lambda) \,.
        \]
    \end{itemize}
\end{definition}

In this paper, we will make use of $\qsiO$ for all polynomial-size quantum implementations. That is, we will assume the existence of $\qsiO$ for $\{Q_{\secpar}\}_{\secpar \in \mathbb{N}}$, where $Q_{\secpar}$ is the set of $(1-\negl)$-quantum implementations of size at most $\secpar$, for some negligible function $\negl[]$.

For ease of notation, we will often omit writing $\secparam$ as an input to $\qsiO$. We will sometimes apply $\qsiO$ to a circuit $C$ \emph{without} auxiliary quantum input, or to a classical circuit $C$. In this case, we simply write $\qsiO(C)$. If the circuit $C$ is classical we sometimes identify it with the function $f$ that it is computing, and simply write $\qsiO(f)$.

\subsection{Best-possible copy protection}

It is not hard to see that $\qsiO$, as defined in the previous section,  achieves \emph{best-possible} copy protection. In this section, we state a definition of copy protection that is quite general, and encompasses all the variants that we will later consider in \Cref{sec:cp}. 

\begin{definition}[Copy protection, correctness]\label{def:cp-correctness}
Let $\calF = \{\calF_{\secpar}\}_{\secpar \in \mathbb{N}}$ be a family of classical circuits. A QPT algorithm $\CP$ is a copy protection scheme for $\mathcal{F}$ if the following holds, for some negligible function $\negl[]$:
\begin{itemize}
    \item $\CP$ takes as input a security parameter $\secparam$ and a circuit $f \in \mathcal{F}_{\secpar}$, and outputs a $(1-\negl)$-quantum implementation $(\rho,C)$ of $f$ with probability $1-\negl$.
\end{itemize}
 \end{definition}

The definition of security below is stated in terms of a circuit $\Ver$ that the challenger runs on each half of a state received from the adversary (the ``pirate''). Some readers may be more familiar with a security game where the challenger samples a pair of inputs to the copy-protected function $f$, and expects two parties Alice and Bob to return the value of $f$ at those inputs. The security game in \Cref{game:CP} subsumes such a security game (by taking $\Ver(f,A)$ to be the circuit that first samples an input $x$ to $f$, and then runs ``Alice's circuit'' $A^x$ on the input state). We elect to keep the definition general here, so as not to limit the applicability of our ``best-possible copy protection'' result (\Cref{theorem:best-possible}). Later, when we discuss copy protection of concrete functionalities in \Cref{sec:cp}, we opt for a more explicit description of the security game. 

\begin{definition}[Copy protection, security] \label{def:cp}
    Let $\calF = \{\calF_{\secpar}\}_{\secpar \in \mathbb{N}}$ be a family of classical circuits. Let $\CP$ be a copy protection scheme for $\mathcal{F}$.
    
    Let $\Ver = \{\Ver_{\secpar} \}_{\secpar \in \mathbb{N}}$ be a uniform family of polynomial-size quantum circuits, where $\Ver_{\secpar}$ takes as input a function $f \in \calF_{\lambda}$, a family of $\poly(\lambda)$-size quantum circuits $\{Q^x\}_{x \in \textnormal{Domain}(f)}$, and a quantum state, and outputs a single bit. Let $\delta: \mathbb{N} \rightarrow [0,1]$.
    
    We say that $\CP$ is $(\Ver, \delta)$-secure if, for all QPT algorithms $\Adv$, there exists a negligible function $\negl[]$ such that, for all $\secpar$,
\[
    \Pr[\emph{\CPExpt}_{\CP,\Adv,\Ver}(\secpar) = 1] \leq \delta(\secpar) + \negl \,,
\]
where $\emph{\CPExpt}_{\CP,\Adv,\Ver}$ is defined in \Cref{game:CP}.\footnote{Note that in this security game the function $f$ is sampled uniformly from $\mathcal{F}_{\lambda}$. This restriction is essentially without loss of generality, since one can pad the description of the circuits $f$ with additional bits that do not affect the circuit itself, but serve the purpose of changing the probability mass on a particular circuit.} For a function $f \in F_{\lambda}$ and a family of circuits $Q$, we use the notation $\Ver(f, Q) := \Ver(f,Q, \cdot)$ (so $\Ver(f, Q)$ denotes a quantum circuit that takes as input a state and outputs a single bit).
\end{definition}

\begin{figure}[H]
\pcb{
    \textbf{Challenger} \< \< \textbf{Adversary} \\[][\hline]
    \< \< \\[-0.5\baselineskip]
    f \from \calF_\secpar\<\< \\
    \sigma := \CP(\secparam, f) \< \< \\
    \< \sendmessageright*{\sigma} \< \\
    \<\< (A, B, \rho_{\calA,\calB}) \from \Adv(\sigma) \\
    \< \sendmessageleft*{A, B, \rho_{\calA,\calB}} \< \\
    \textnormal{Run } \Ver_{\secpar}(f,A) \otimes \Ver_{\secpar}(f,B) \,\rho_{\mathcal{A, B}} \\
    \textnormal{Output $1$ if both outcomes are 1}  
}
\caption{$\CPExpt_{\CP,\Adv,\Ver}(\secpar)$. The challenger samples $f \from \calF_\secpar$, creates the quantum implementation $\sigma = \CP(f)$, and sends it to the adversary. The adversary maps this to a state $\rho_{\calA\calB}$ on the two registers $\calA$, $\calB$, and sends $\rho_{\calA\calB}$ back to the challenger, along with (descriptions of) families of quantum circuits $A$ and $B$ on $\calA$ and $\calB$ respectively. The challenger runs $\Ver_{\secpar}(f,A) \otimes \Ver_{\secpar}(f,B)$ on $\rho_{\mathcal{A, B}}$, and outputs $1$ if both outcomes are $1$.} \label{game:CP}
\end{figure} 

\begin{theorem}[``Best-possible'' copy protection] \label{theorem:best-possible}
    Let $\calF = \{\calF_{\secpar}\}_{\secpar \in \mathbb{N}}$ be a family of classical circuits. Suppose there exists a copy protection scheme for $\calF = \{\calF_{\secpar}\}_{\secpar \in \mathbb{N}}$ that is $(\Ver, \delta)$-secure (for some $\Ver$, and $\delta$ as in Definition \ref{def:cp}). Let $\qsiO$ be a secure quantum state indistinguishability obfuscator for $\calF$. Then, $\qsiO$ is a $(\Ver, \delta)$-secure copy protection scheme for $\calF$.
\end{theorem}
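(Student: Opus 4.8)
The plan is to show that if $\qsiO$ failed to be $(\Ver,\delta)$-secure as a copy protection scheme for $\calF$, then we could distinguish two quantum implementations of the same function, contradicting $\qsiO$ security. Let $\CP$ be the assumed $(\Ver,\delta)$-secure copy protection scheme, and suppose for contradiction that there is a QPT adversary $\Adv$ and a non-negligible function $\varepsilon$ such that $\Pr[\CPExpt_{\qsiO,\Adv,\Ver}(\secpar) = 1] \geq \delta(\secpar) + \varepsilon(\secpar)$ for infinitely many $\secpar$. Here the ``copy protection scheme'' being attacked is the one that, on input $f$, outputs $\qsiO(f)$ — which is a valid copy protection scheme in the sense of \Cref{def:cp-correctness} by the correctness of $\qsiO$ (it outputs a $(1-\negl')$-quantum implementation of $f$).

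First I would observe that, for any fixed $f \in \calF_\secpar$, both $\qsiO(f)$ and $\qsiO(\CP(f))$ are $\qsiO$ obfuscations of $(1-\negl)$-quantum implementations of the \emph{same} function $f$: the former obfuscates the canonical (classical) implementation of $f$, and the latter obfuscates $\CP(f)$, which by copy protection correctness is a $(1-\negl)$-quantum implementation of $f$ with overwhelming probability. (A minor point to handle: $\CP(f)$ is only a valid implementation with probability $1-\negl$, so I would condition on that event, which costs only a negligible additive term, and note the sizes are polynomial so they lie in the relevant family $Q_\secpar$ after padding.) By $\qsiO$ security applied with the distinguisher $D$ that runs the whole experiment $\CPExpt$ — feeding the obfuscated state to $\Adv$, receiving $(A,B,\rho_{\calA\calB})$, and running $\Ver_\secpar(f,A)\otimes\Ver_\secpar(f,B)$ — we get
\[
    \Big|\Pr[\CPExpt_{\qsiO,\Adv,\Ver}(\secpar)=1] - \Pr[\CPExpt_{\qsiO\circ\CP,\Adv,\Ver}(\secpar)=1]\Big| \leq \negl''(\secpar),
\]
where $\qsiO\circ\CP$ denotes the scheme $f\mapsto\qsiO(\CP(f))$. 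The subtlety here is that the $\qsiO$ security guarantee is stated for a \emph{fixed} pair of implementations, whereas $f$ is sampled from $\calF_\secpar$; so I would either invoke a standard averaging/hybrid argument over the (efficiently sampleable) choice of $f$, or note that $D$ can itself sample $f$ and the randomness of $\CP$, so the reduction goes through with $f$ sampled inside $D$. This gives $\Pr[\CPExpt_{\qsiO\circ\CP,\Adv,\Ver}(\secpar)=1] \geq \delta(\secpar) + \varepsilon(\secpar) - \negl''(\secpar)$ for infinitely many $\secpar$.

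Next I would build an adversary $\Adv'$ against the copy protection security of the original scheme $\CP$: on receiving $\sigma = \CP(f)$, $\Adv'$ computes $\qsiO(\sigma)$ itself (it can, since $\qsiO$ is QPT) and runs $\Adv$ on it, forwarding whatever $(A,B,\rho_{\calA\calB})$ that $\Adv$ produces. Since the view of $\Adv$ inside $\Adv'$ is exactly its view in $\CPExpt_{\qsiO\circ\CP,\Adv,\Ver}$, and the challenger's final check $\Ver_\secpar(f,A)\otimes\Ver_\secpar(f,B)$ is identical in both experiments, we get $\Pr[\CPExpt_{\CP,\Adv',\Ver}(\secpar)=1] = \Pr[\CPExpt_{\qsiO\circ\CP,\Adv,\Ver}(\secpar)=1] \geq \delta(\secpar) + \varepsilon(\secpar) - \negl''(\secpar)$ for infinitely many $\secpar$, contradicting the $(\Ver,\delta)$-security of $\CP$. (One should double-check that $\Adv'$ is QPT — it is, being the composition of two QPT algorithms — and that the obfuscation done by $\Adv'$ uses the same security parameter, which it does.)

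The main obstacle is not any single deep step but rather getting the quantifier bookkeeping exactly right: ensuring that $\qsiO$'s worst-case-over-implementation-pairs security combines correctly with the average-over-$f$ structure of $\CPExpt$, and that the ``with probability $1-\negl$'' caveat in copy protection correctness (the event that $\CP(f)$ is actually a valid implementation) is absorbed cleanly into the negligible slack. I expect the write-up to spend most of its effort making the reduction $D$ (which must internally simulate the full copy-protection experiment including running the verification circuits) and $\Adv'$ explicit, and confirming that all intermediate objects are of polynomial size so that they lie in $Q_\secpar$ (after padding) and the QPT constraints are met.
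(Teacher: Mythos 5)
Your proposal is correct and follows essentially the same route as the paper's proof: use $\qsiO$ security to pass from $\CPExpt_{\qsiO,\Adv,\Ver}$ to $\CPExpt_{\qsiO\circ\CP,\Adv,\Ver}$, then observe that the adversary who first applies $\qsiO$ to $\CP(f)$ and then runs $\Adv$ reduces this to $\CPExpt_{\CP,\cdot,\Ver}$. You phrase it as a contradiction rather than a direct chain of inequalities, and you are somewhat more careful than the paper about averaging over $f$ and about the negligible failure probability of $\CP$, but these are refinements of the same argument rather than a different approach.
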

\begin{proof}
    Let $\CP$ be the $(\Ver,\delta)$-secure copy protection scheme for $\calF$ that exists by hypothesis. Let $\Adv$ be any efficient adversary for $\CPExpt$. Since the challenger in $\CPExpt$ is also efficient, we have by the security guarantee of $\qsiO$, that there exists a negligible function $\negl[]$ such that, for all $\lambda$,
    \begin{equation} \label{eq:best-possible-eq1}
        \Pr[\CPExpt_{\qsiO,\Adv,\Ver}(\secpar) = 1] \le \Pr[\CPExpt_{\qsiO \circ \CP,\Adv,\Ver}(\secpar) = 1] + \negl.
    \end{equation}
    Consider a reduction $\Red[\Adv]$ for $\CPExpt$ that simply applies $\qsiO$ before forwarding $\sigma$ to $\Adv$, and then forwards the response from $\Adv$ to the challenger. Formally, $\Red$ is defined by the following behavior in $\CPExpt$.
    
    \indent $\CPExpt_{\CP,\Red[\Adv],\Ver}(\secpar)$:
    \begin{enumerate}
        \item The challenger samples $f \from \calF_\secpar$, computes $\sigma = \CP(f)$, and sends $\sigma$ to $\Red$.
        \item $\Red$ computes $\sigma' = \qsiO(\sigma) = (\qsiO \circ \CP)(f)$ and sends $\sigma'$ to $\Adv$.
        \item $\Adv$ sends $(A,B, \rho_{\calA\calB})$ to $\Red$, which forwards this to the challenger.
        \item The challenger then runs $\Ver_{\secpar}(f,A) \otimes \Ver_{\secpar}(f,B) \,\rho_{\mathcal{A, B}}$ and outputs $1$ if both outcomes are $1$.
    \end{enumerate}
    By construction,
    \begin{equation} \label{eq:best-possible-eq2}
        \Pr[\CPExpt_{\qsiO \circ \CP,\Adv,\Ver}(\secpar) = 1] = \Pr[\CPExpt_{\CP,\Red[\Adv],\Ver}(\secpar) = 1].
    \end{equation}
    Finally, the assumption that $\CP$ is a $(\Ver,\delta)$-secure copy protection scheme for $\calF$ implies that
    \begin{equation} \label{eq:best-possible-eq3}
        \Pr[\CPExpt_{\CP,\Red[\Adv],\Ver}(\secpar) = 1] = \negl.
    \end{equation}
    Combining Equations \eqref{eq:best-possible-eq1}, \eqref{eq:best-possible-eq2}, and \eqref{eq:best-possible-eq3} gives the result.
\end{proof}

\begin{remark}
    In this work we only consider $\qsiO$ for quantum implementations of deterministic functions. It would be interesting to explore an extended definition that allows for quantum implementations of \emph{randomized} functions. It is plausible that a proper formalization would yield best-possible one time programs in a similar way to \Cref{theorem:best-possible}.
\end{remark}

\subsection[Constructing qsiO]{Constructing $\qsiO$}
\label{subsec:construction}
In this section, we give a construction $\Obf$ of $\qsiO$ relative to a quantum oracle. Before describing it formally, we give an informal description:
\begin{itemize}
    \item $\Obf$ takes as input a quantum implementation $(\rho, \mathsf{Eval})$ of some function $f$, where $\Eval$ is assumed to be a universal evaluation circuit without loss of generality.
    \item $\Obf$ samples a uniformly random Clifford unitary $C$ and outputs the state $\tilde{\rho} = C \rho C^{\dagger}$, alongside an oracle implementing the unitary $G_C = C \mathsf{Eval} C^{\dagger}$ (where $\mathsf{Eval}$ here refers to the unitary part of the evaluation circuit).
\end{itemize} 
In other words, $\qsiO$ applies a Clifford one-time pad to ``hide'' the input state $\rho$; the oracle $G_C$ undoes the one-time pad, evaluates the function $f$, and then re-applies the one-time pad. This allows a user to evaluate $f$, while intuitively keeping the state $\rho$ hidden at all times.

The main tool in our proof is the ``Clifford twirl'' \cite{ABOEM17}, which would already suffice if the adversary were only allowed to make a \emph{single} query to $G_C$. However, the adversary can make any polynomial number of queries, so a more careful argument is required. Our argument additionally makes use of a recently-introduced tool called the ``admissible oracle lemma'' \cite{GJMZ23}, which allows us to reduce the security of the many-query game to the security of the one-query game.

\begin{const}
\label{const:qsiO}
$\Obf$ takes as input a quantum implementation $(\rho, \Eval)$ of some function $f$, where $\rho$ is a state on a register $\calA$. We assume without loss of generality that the circuit $\Eval$ consists of a unitary on $\calA$ as well as an input register $\mathcal{X}$ and  an output register $\mathcal{Y}$, followed by a measurement of register $\mathcal{Y}$. For ease of notation, we will identify the algorithm $\Eval$ (which includes a measurement) with its unitary part when it is clear from the context. We assume without loss of generality that the unitary $\Eval$ uncomputes all of its intermediate steps, leaving the result on $\mathcal{Y}$.
 
$\Obf(\rho, U)$ proceeds as follows:
 
\begin{itemize}
\item Sample $C \leftarrow \calC$, where $\calC$ is the Clifford group. Let $\tilde{\rho} = C \rho C^{\dagger}$.
\item Let $G_C$ be the unitary acting on registers $\calA, \mathcal{X}, \mathcal{Y}$ defined as $G_C= (C_{\calA} \otimes I_{\mathcal{X}\mathcal{Y}}) \Eval (C^{\dagger}_{\calA}\otimes I_{\mathcal{X}\mathcal{Y}})$.
\item Let $\tilde{U}^{G_C}$ be the quantum circuit, with oracle access to $G_C$, that behaves as follows: On input $\rho$, run $G_C$ on input $\rho_{\calA} \otimes \ketbra{x}_{\mathcal{X}} \otimes \ketbra{0}_{\mathcal{Y}}$; measure register $\mathcal{Y}$ and output the outcome.
\item Output $(\tilde{\rho}, \,\tilde{U}^{G_C})$ (since this is an oracle construction, what we mean is that the algorithm $\Obf$ outputs the description of the oracle algorithm $\tilde{U}$, and the oracle $G_C$ is publicly available).
\end{itemize}
\end{const}

We show that Construction \ref{const:qsiO} is $\qsiO$ in a model where the adversary has access to the oracle $G_C$.

\begin{theorem} \label{theorem:construction}
Construction \ref{const:qsiO} is $\qsiO$.
\end{theorem}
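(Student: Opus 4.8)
The plan is to establish two things: (i) correctness, namely that $(\tilde\rho, \tilde U^{G_C})$ is again a $(1-\negl')$-quantum implementation of the same $f$; and (ii) security, namely that for two implementations $\rho_0,\rho_1$ of the same $f$, the obfuscations are computationally indistinguishable to an adversary with oracle access to $G_C$. Correctness is the easy half: since $G_C = (C_\calA\otimes I)\Eval(C_\calA^\dagger\otimes I)$, running $G_C$ on $\tilde\rho_\calA\otimes\ketbra{x}\otimes\ketbra{0}$ first applies $C^\dagger$ to undo the twirl, recovering $\rho$, then runs $\Eval$, then re-applies $C$; measuring $\mathcal Y$ returns $f(x)$ with the same success probability as the original implementation, so $\negl' = \negl$. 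I would write this out in one short paragraph, being careful about the fact that $\Eval$ uncomputes its intermediate registers so that conjugation by $C_\calA$ is clean.

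For security, the first step is the single-query case. Here the key tool is the Clifford twirl \cite{ABOEM17}: for a Haar-random (or just $2$-design / full Clifford) $C$, the channel $\rho \mapsto \E_C\, (C^\dagger \otimes I)\, M\, (C\otimes I)\,(C\rho C^\dagger)_\calA\otimes(\cdot)$ — more precisely, the effect of one query to $G_C$ followed by tracing out or by the adversary's processing — depends on $\rho$ only through... well, the point is that twirling $C\rho C^\dagger$ by a random Clifford destroys all information about $\rho$ except what leaks through the single allowed oracle call, and that oracle call only reveals $f$, which is the same for $\rho_0$ and $\rho_1$. Concretely, I would argue that an adversary making a single query to $G_C$ on an arbitrary input state can be simulated given only black-box access to the function $f$ (via $\Eval$) together with fresh Clifford-twirled maximally mixed garbage, and this simulation is identical whether the input was $\rho_0$ or $\rho_1$. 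This reduces the one-query security to the statement that $C\rho_0 C^\dagger$ and $C\rho_1 C^\dagger$, together with one application of $G_C$, are indistinguishable — which the twirl gives exactly.

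The main obstacle, and the bulk of the proof, is lifting from one query to polynomially many queries. The difficulty is that after the first query the adversary holds a state entangled with the $\calA$ register through $C$, so subsequent queries to $G_C$ are no longer "fresh" and the naive twirl argument breaks. This is precisely where I would invoke the admissible oracle lemma of \cite{GJMZ23}: the idea is to show that $G_C$ is an "admissible" oracle in the sense that, from the adversary's viewpoint, it can be replaced by an idealized oracle that does not depend on $C$ at all (e.g. one that acts as $\Eval$ on a "virtual" copy of $\rho$ held by the challenger, returning only $f(x)$), incurring only negligible error over the whole interaction. Once $G_C$ is replaced by such a $C$-independent idealized oracle, the only remaining $C$-dependence is in the initial state $C\rho_b C^\dagger$, which a single application of the twirl renders independent of $b$; hence the two obfuscations are indistinguishable. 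I would structure this as: (1) define the idealized oracle and the hybrid where the challenger secretly keeps $\rho_b$ and answers queries by running $\Eval$ on it; (2) apply the admissible oracle lemma to show this hybrid is close to the real game; (3) observe that in the hybrid, $C$ appears only as a one-time pad on the adversary's copy, apply the twirl, and conclude $b$ is information-theoretically hidden. The delicate points to get right are checking the precise hypotheses of the admissible oracle lemma (in particular that $G_C$ and the idealized oracle agree on the relevant subspace and that the "bad" events are negligible), and handling the fact that the adversary may feed adaptively chosen, entangled states into $G_C$.
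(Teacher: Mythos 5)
Your correctness paragraph is fine and matches the paper's (implicit) argument. On security, your high-level ingredients — Clifford twirl plus the admissible oracle lemma from \cite{GJMZ23} — are exactly the paper's two tools, and you correctly identify that the one-query case is easy and that lifting to many queries is the real obstacle. However, the proposal misreads what the admissible oracle lemma does, and as a result contains a genuine gap precisely at the many-query lift.

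The admissible oracle lemma (as stated in the paper as \Cref{lemma:admissible-oracles}) does \emph{not} say that a given oracle can be replaced by an idealized one with negligible error. It says: if a $(W,\Pi)$-distinguishing game already has zero advantage with no oracle at all, and $G$ is a unitary that \emph{exactly} commutes with $W$ and $\Pi$ and acts as the identity on $I-\Pi$, then giving the adversary oracle access to $G$ still yields exactly zero advantage. It is a preservation-of-zero-advantage statement about an \emph{already nice} oracle, not a hybrid/closeness statement about the \emph{real} oracle. The real (purified) oracle $G'$ of the construction is not admissible — it does not act as the identity outside the relevant subspace — so you cannot invoke the lemma on it directly, and the lemma will not "incur only negligible error" for you.

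What the paper actually does, and what is missing from your plan, is a separate and substantial step: it first purifies the game (putting the random Clifford $C$ into superposition on a challenger-held register, which is needed so that the oracle becomes a fixed unitary $G'$ to which the lemma's framework applies), and then proves by induction on the number of queries $q$ that replacing $G'$ by a genuinely admissible oracle $G$ — one that only "fires" when the control register is in the uniform Clifford superposition $\ket{\tau}$ — changes the global state by at most $O(q^2) 2^{-\secpar}$ in trace distance. Each inductive step maintains the invariant that the joint state is (up to negligible error) supported on vectors of the form $\abs{\calC}^{-1/2}\sum_C \ket{C}\otimes X^x Z^z C(\ket{\psi_b}\otimes\ket{0^\secpar})\otimes\ket{\phi_{x,z}}$, and the Clifford twirl is applied at \emph{every} query, not just once at the start, to show that the "bad" Pauli branches have negligible overlap with the $\ket{0^\secpar}$ sub-register that gates the evaluation. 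This is the paper's Lemma \ref{lem:5}, and it is the technical heart of the proof. Only after this replacement is the admissible oracle lemma invoked, and then its conclusion is that the advantage against $G$ is \emph{exactly} zero (the 1-design property of the Clifford group gives the oracle-free base case). So your steps (2) and (3) essentially both land inside the final admissible-oracle application, while the real-to-ideal oracle step — the bulk of the proof — has no argument in your proposal. To fix this, you would need to supply the invariant and the per-query Clifford-twirl estimate, or else invoke a more general approximate version of the admissible oracle lemma than the one the paper uses (and verify its hypotheses), rather than the exact special case.
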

\begin{proof}
We prove security via three hybrids. The first hybrid corresponds to the original $\qsiO$ security game. The second corresponds to a ``purified'' version of the $\qsiO$ game, which is easily seen to be equivalent to the original. The third hybrid is identical to the second, except that the adversary has access to a different oracle: This new oracle does not evaluate the function unless the register containing $C$ is in uniform superposition.

Finally, we show that the distinguishing advantage in the third hybrid is \emph{zero} by invoking the ``admissible oracle lemma'' of \cite{GJMZ23}.

\vspace{2mm}
\noindent \textit{Hybrid 1}: The original $\qsiO$ security game. 

\vspace{2mm}
\noindent \textit{Hybrid 2}: A ``purified'' version of the $\qsiO$ game. Let $(\ket{\psi_1}, U_1)$ and $(\ket{\psi_2}, U_2)$ be two quantum implementations of the same classical functionality $f: \{0,1\}^{l_{in}} \rightarrow \{0,1\}^{l_{out}}$, where $\ket{\psi_1}$ and $\ket{\psi_2}$ are states on some register $\calB_1$, and $U_1, U_2$ are unitaries on $\calB_1$ and some other register $\calR$. For the rest of the proof we assume, for simplicity and without loss of generality, that the unitaries $U_1$ and $U_2$ are equal to a fixed universal unitary $\Eval_{\calB_1, \calR}$.

Let $\calA,\calB_1, \calB_2,\calR$ be registers, and let $\Pi'$ be the subspace spanned by all the states of the form
\begin{equation}
\label{eq:1}
\ket{C}_{\calA} \otimes C(\ket*{\psi}_{\calB_1} \otimes \ket*{0^\secpar}_{\calB_2}) \otimes \ket{x, y}_{\calR} \,, 
\end{equation} 
where $C$ is any Clifford unitary on register $\calB := \calB_1\calB_2$, $\ket{\psi}$ is any state on $\calB_1$, and $(x,y) \in \{0,1\}^{l_{in}} \times \{0,1\}^{l_{out}}$.

The unitary $G'$ acts as identity on the orthogonal complement of $\Pi'$, and as follows on $\Pi'$:
\[
    G' \Pi' = \sum_{C \in \calC} \ketbra{C}_{\calA} \otimes (C_{\calB} \otimes I_{\calR}) (\Eval_{\calB_1,\calR} \otimes I_{\calB_2}) (C^\dagger_{\calB} \otimes I_{\calR}).
\]
In the rest of the section, when it is clear from the context, we omit writing tensor products with identities, e.g.\ we write 
\[
    G' \Pi' = \sum_{C \in \calC} \ketbra{C}_{\calA} \otimes C_{\calB} (\Eval_{\calB_1,\calR}) C^\dagger_{\calB}.
\]

The game is as follows:
\begin{enumerate}
\item The challenger samples $b \leftarrow \{0,1\}$. Then, it creates the state $\abs{\calC}^{-1/2} \sum_{C \in \calC} \ket*{C}_{\calA} \otimes C(\ket*{\psi_b} \otimes \ket*{0^\secpar})_{\calB}$.

\item The adversary receives register $\calB$ from the challenger, as well as query access to the oracle $G$ (where the adversary controls $\calR$). The adversary returns a guess $b' \in \{0,1\}$. 
\end{enumerate}
The adversary wins if $b' = b$.

\vspace{2mm}
\noindent \textit{Hybrid 3}: Identical to Hybrid 2, except the adversary has access to a different oracle $G$ defined as follows. 
\begin{itemize}
\item Let $\Pi$ be the subspace spanned by all the states of the form
\begin{equation}
\label{eq:2}
\abs{\calC}^{-1/2} \sum_{C \in \calC} \ket*{C} \otimes C(\ket*{\psi} \otimes \ket*{0^\secpar}) \otimes \ket{x,y} \,,
\end{equation}
where $\ket{\psi}$ is any state on $\calB_1$ and $(x,y) \in \{0,1\}^{l_{in}} \times \{0,1\}^{l_{out}}$. The unitary $G$ acts as identity on the orthogonal complement of $\Pi$, and as follows on $\Pi$:
\[
G \Pi = \sum_{C \in \calC} \ketbra{C} \otimes C (\Eval) C^\dagger
\]
\end{itemize}

\vspace{2mm}

We first show that the adversary's advantage in Hybrid 1 and Hybrid 2 is identical.

\begin{lemma}
For any adversary $A$,
$$ \Pr[A \textnormal{ wins in Hybrid } 1] = \Pr[A \textnormal{ wins in Hybrid } 2]  \,.$$
\end{lemma}
\begin{proof}
This is immediate since Hybrid 2 is just a purification of Hybrid 1.
\end{proof}

\begin{lemma}
\label{lem:4}
    For any adversary $A$ for Hybrids 2 and 3, there exists a negligible function $\negl[]$ such that, for all $\secpar$,
$$ \big| \Pr[A \textnormal{ wins in Hybrid } 2] - \Pr[A \textnormal{ wins in Hybrid } 3] \big| \leq \negl \,.$$
\end{lemma}

\begin{proof}
Let $W_{\calC} = \sum_{C \in \calC} \ketbra{C} \otimes C$. Define 
\begin{equation*}
\Eval_1 := \Eval \cdot \big(I_{\calA \calB_1 \calR} \otimes \ketbra{0^{\secpar}}_{\calB_2} \big) + I_{\calA \calB_1 \calR} \otimes (I -\ketbra{0^{\secpar}}_{\calB_2}) \,.
\end{equation*}
Then, notice that we can write $G'$ (from Hybrid 2) as
\begin{equation}
\label{eq:g1}
G' = W_{\calC} \Eval_1 (W_{\calC})^{\dagger} \,.
\end{equation}

Let $\ket{\tau} = |\calC|^{-1/2} \sum_{C \in \calC} \ket{C}$, and define 
\begin{equation*}
\Eval_2 := \Eval \cdot   \big(\ketbra{\tau}_{\calA} \otimes I_{\calB_1 \calR} \otimes \ketbra{0^{\secpar}}_{\calB_2} \big) + (I-\ketbra{\tau}_{\calA}) \otimes I_{\calB_1 \calR} \otimes \ketbra{0^{\secpar}}_{\calB_2}+ I_{\calA \calB_1 \calR} \otimes (I -\ketbra{0^{\secpar}}_{\calB_2}) \,.
\end{equation*}
We can write $G$ (from Hybrid 3) as
\begin{equation}
G = W_{\calC} \Eval_2 (W_{\calC})^{\dagger} \,. \label{eq:g2}
\end{equation}

Let $A$ be an adversary for Hybrid 2 and 3. Recall that Hybrids 2 and 3 are identical except that the oracle is $G'$ in Hybrid 2 and $G$ in Hybrid 3. Recall that the challenger initializes registers $\calA, \calB$ in the state 
\begin{equation}
\label{eq:psi0}
\ket{\Psi_0} := \abs{\calC}^{-1/2} \sum_{C \in \calC} \ket*{C}_{\calA} \otimes C(\ket*{\psi_b} \otimes \ket*{0^\secpar})_{\calB} \,.
\end{equation}
for some $b \in \{0,1\}$
Then $A$ then receives register $\calB$. Let $\calR$ denote the register where $\Eval$ writes its output, and let $\mathcal{Z}$ denote an additional work register used by $A$ (which includes an output register). Let $q$ be the number of queries to the oracle ($G'$ or $G$) made by $A$. Without loss of generality, for some unitary $U$ on $\calB \calR \mathcal{Z}$, we have that $A$ applies the sequence of unitaries $(G'U)^q$ in Hybrid 2, and $(GU)^q$ in Hybrid 3.

We will prove \Cref{lem:5}, which implies that $A$'s success probability in Hybrids 2 and 3 is negligibly close, as long as $q = \poly(\secpar)$. This will complete the proof of \Cref{lem:4}. 
\end{proof}

\begin{lemma}
\label{lem:5}
Let $\ket{\Psi_0}_{\calA\calB}$ be as defined in \eqref{eq:psi0} (note that this state depends on $\secpar$). Let $\ket{\phi}_{\mathcal{R}\mathcal{Z}}$ be any state, and $U$ any unitary (both of which implicitly depend on $\secpar$), and let $q \in \mathbb{N}$. Then, for all $\secpar$,
\begin{equation}
\|(G'U)^q \ket{\Psi_0} \otimes \ket{\phi} - (GU)^q \ket{\Psi_0} \otimes \ket{\phi}\| = q (q+1)2^{-\secpar} \,.
\end{equation}
\end{lemma}
\begin{proof} For convenience, we use the following notation throughout this proof: for $\epsilon >0$ and states $\ket{u}$ and $\ket{v}$, we write $\ket{u} \equiv_{\epsilon} \ket{v}$ as a shorthand for $\|\ket{u} - \ket{v} \| \leq \epsilon$. Let $m$ be the number of qubits in register $\calB_1$.
We prove \Cref{lem:5} by induction on the number of queries. Precisely, we show that, for all $i \in \{0,\ldots,q\}$:
\begin{itemize}
    \item[(i)] There exist unnormalized states $\ket{\phi_{x,z}}_{\mathcal{R}\mathcal{Z}}$ for $x,z \in \{0,1\}^{m+\secpar}$ such that, for all $\secpar$, $$(G'U)^i\ket{\Psi_0} \equiv_{i \cdot 2^{-\secpar}} \abs{\calC}^{-1/2} \sum_{C \in \calC} \sum_{x,z \in \{0,1\}^{m+\secpar}} \ket*{C}_{\calA} \otimes X^x Z^z C(\ket*{\psi_b} \otimes \ket*{0^\secpar})_{\calB} \otimes  \ket{\phi_{x,z}}\,.$$
    \item[(ii)] For all $\secpar$, $$(G'U)^i \ket{\Psi_0} \otimes \ket{\phi} \equiv_{i(i+1)\cdot 2^{-\secpar}} (GU)^i\ket{\Psi_0}\otimes \ket{\phi} \,.$$
\end{itemize} 
Clearly, both (i) and (ii) hold for $i=0$. Now, suppose (i) and (ii) hold for some $i$. We show that they both hold also for $i+1$. By the inductive hypothesis, we have
$$(G'U)^i\ket{\Psi_0} \otimes \ket{\phi} \equiv_{i \cdot 2^{-\secpar}} \abs{\calC}^{-1/2} \sum_{C \in \calC} \sum_{x,z \in \{0,1\}^{m+\secpar}} \ket*{C}_{\calA} \otimes X^x Z^z C(\ket*{\psi_b} \otimes \ket*{0^\secpar})_{\calB} \otimes  \ket{\phi_{x,z}}\,,$$
for some unnormalized states $\ket{\phi_{x,y}}_{\mathcal{R}\mathcal{Z}}$ for $x,z \in \{0,1\}^{m+\secpar}$.
Then, we have
\begin{align}
\label{eq:5}
(W_{\calC})^{\dagger} U (G'U)^i\ket{\Psi_0} \otimes \ket{\phi} \equiv_{i \cdot 2^{-\secpar}} (W_{\calC})^{\dagger} U \,\abs{\calC}^{-1/2} \sum_{C \in \calC} \sum_{x,z \in \{0,1\}^{m+\secpar}} \ket*{C}_{\calA} \otimes X^xZ^z C(\ket*{\psi_b} \otimes \ket*{0^\secpar})_{\calB} \otimes  \ket{\phi_{x,z}} \,.
\end{align}
By expanding the $\calB$ register in the Pauli basis, we can write $U = \sum_{x,y \in \{0,1\}^{m+\secpar}} X^x Z^z \otimes U_{xz}$ for some operators $U_{xz}$. Then, plugging this into \eqref{eq:5} we get
\begin{align}
(W_{\calC})^{\dagger} U (G'U)^i\ket{\Psi_0} \otimes \ket{\phi} &\equiv_{i \cdot 2^{-\secpar}} (W_{\calC})^{\dagger} \,\abs{\calC}^{-1/2} \sum_{x',z'} \sum_{C \in \calC} \sum_{x,z} \ket*{C}_{\calA} \otimes X^{x'}Z^{z'} X^xZ^z C(\ket*{\psi_b} \otimes \ket*{0^\secpar})_{\calB} \otimes  U_{x',z'}\ket{\phi_{x,z}} \nonumber \\
&= (W_{\calC})^{\dagger} \,\abs{\calC}^{-1/2} \sum_{C \in \calC} \sum_{x,z, x',z'} \ket*{C}_{\calA} \otimes (-1)^{z' \cdot x} X^{x+x'}Z^{z+z'} C(\ket*{\psi_b} \otimes \ket*{0^\secpar})_{\calB} \otimes  U_{x',z'}\ket{\phi_{x,z}} \nonumber\\
&= \abs{\calC}^{-1/2} \sum_{C \in \calC} \sum_{x,z, x',z'} \ket*{C}_{\calA} \otimes (-1)^{z' \cdot x} C^{\dagger} X^{x+x'}Z^{z+z'} C(\ket*{\psi_b} \otimes \ket*{0^\secpar})_{\calB} \otimes  U_{x',z'}\ket{\phi_{x,z}} \nonumber\\
&= \abs{\calC}^{-1/2} \sum_{C \in \calC} \ket*{C}_{\calA} \otimes (\ket*{\psi_b} \otimes \ket*{0^\secpar})_{\calB} \otimes  |\tilde{\phi}_{0,0} \rangle \nonumber\\
&+ \abs{\calC}^{-1/2} \sum_{C \in \calC} \ket*{C}_{\calA} \otimes \sum_{(x,z)\neq (0^{m+\secpar}, 0^{m+\secpar}) } C^{\dagger} X^{x}Z^{z} C(\ket*{\psi_b} \otimes \ket*{0^\secpar})_{\calB} \otimes  |\tilde{\phi}_{x,z}\rangle \,, \label{eq:88}
\end{align}
for some unnormalized states $|\tilde{\phi}_{x,z} \rangle$.
We will show that the second summand in the last expression has exponentially small weight on states such that the state on register $\calB_2$ is $\ket{0}^{\secpar}$. Precisely, we will show that
\begin{equation}
\label{eq:6}
\Big\|I_{\calA \calB_1 \calR} \otimes \ketbra{0^{\secpar}}_{\calB_2}\,\,\abs{\calC}^{-1/2} \sum_{C \in \calC} \ket*{C}_{\calA} \otimes \sum_{(x,z)\neq (0^{m+\secpar}, 0^{m+\secpar}) } C^{\dagger} X^{x}Z^{z} C(\ket*{\psi_b} \otimes \ket*{0^\secpar})_{\calB} \otimes  |\tilde{\phi}_{x,z}\rangle \Big\| \leq 2^{-\secpar}  \,.  
\end{equation}
We will prove \eqref{eq:6} at the end. Now, recall that a Clifford operator is uniquely specified by the fact that it maps Pauli operators to Pauli operators when acting by conjugation. For $C \in \calC$ and $x, z \in \{0,1\}^{m+\secpar}$, let $\pi^X_C(x,z) \in \{0,1\}^{\secpar}$ 
be defined such that
$$(X^{\tilde{x}_1} \otimes X^{\pi^X_C(x,z)} ) Z^{\tilde{z}} = C^{\dagger} X^x Z^z C \,. $$
for some $\tilde{x}_1 \in \{0,1\}^m, \tilde{z} \in \{0,1\}^{m+\secpar}$. In other words, $\pi^X_C(x,z)$ corresponds to the \emph{last $\secpar$ bits} of the Pauli $X$ string obtained by conjugating $X^xZ^z$ by $C$.

Assuming \eqref{eq:6} is true, we have 
\begin{equation}
\label{eq:7}
   \Big\| \abs{\calC}^{-1/2} \sum_{C \in \calC} \ket*{C}_{\calA} \otimes \sum_{\substack{(x,z)\neq (0^{m+\secpar}, 0^{m+\secpar}): \\ \pi_C^X(x,z) = 0^{\secpar}}} C^{\dagger} X^{x}Z^{z} C(\ket*{\psi_b} \otimes \ket*{0^\secpar})_{\calB} \otimes  |\tilde{\phi}_{x,z}\rangle \Big\| \leq 2^{-\secpar} \,.
\end{equation}
Then, we have
\begin{align}
&\Eval_1 (W_{\calC})^{\dagger} U (G'U)^i\ket{\Psi_0} \otimes \ket{\phi} \nonumber \\&\equiv_{i \cdot 2^{-\secpar}+2^{-\secpar}} \Eval_1  \bigg( \abs{\calC}^{-1/2} \sum_{C \in \calC} \ket*{C}_{\calA} \otimes (\ket*{\psi_b} \otimes \ket*{0^\secpar})_{\calB} \otimes  |\tilde{\phi}_{0,0} \rangle \nonumber\\ 
&\quad \quad \quad \quad \quad \quad \quad + \abs{\calC}^{-1/2} \sum_{C \in \calC} \ket*{C}_{\calA} \otimes \sum_{\substack{(x,z)\neq (0^{m+\secpar}, 0^{m+\secpar}): \\ \pi_C^X(x,z) \neq 0^{\secpar}}} C^{\dagger} X^{x}Z^{z} C(\ket*{\psi_b} \otimes \ket*{0^\secpar})_{\calB} \otimes  |\tilde{\phi}_{x,z}\rangle \bigg) \label{eq:99}\\
&= \Eval \,\abs{\calC}^{-1/2} \sum_{C \in \calC} \ket*{C}_{\calA} \otimes (\ket*{\psi_b} \otimes \ket*{0^\secpar})_{\calB} \otimes  |\tilde{\phi}_{0,0} \rangle \nonumber\\
&\quad+ \abs{\calC}^{-1/2} \sum_{C \in \calC} \ket*{C}_{\calA} \otimes \sum_{\substack{(x,z)\neq (0^{m+\secpar}, 0^{m+\secpar}): \\ \pi_C^X(x,z) \neq 0^{\secpar}}} C^{\dagger} X^{x}Z^{z} C(\ket*{\psi_b} \otimes \ket*{0^\secpar})_{\calB} \otimes  |\tilde{\phi}_{x,z}\rangle \,, \label{eq:100}
\end{align}
where \eqref{eq:99} follows from \eqref{eq:88}, \eqref{eq:7}, unitarity of $\Eval_1$, and a triangle inequality; and
\eqref{eq:100} follows from the definition of $\Eval_1$.

Notice that, by definition of $\Eval$, $$\Eval (\ket*{\psi_b} \otimes \ket*{0^\secpar})_{\calB} \otimes  |\tilde{\phi}_{0,0} \rangle = (\ket*{\psi_b} \otimes \ket*{0^\secpar})_{\calB} \otimes  |\tilde{\tilde{\phi}}_{0,0} \rangle \,,$$
for some other state $|\tilde{\tilde{\phi}}_{0,0} \rangle$. So, plugging this into \eqref{eq:100}, gives
\begin{align}
\Eval_1 (W_{\calC})^{\dagger} U (G'U)^i\ket{\Psi_0} \otimes \ket{\phi} &\equiv_{(i+1) \cdot 2^{-\secpar}} \abs{\calC}^{-1/2} \sum_{C \in \calC} \ket*{C}_{\calA} \otimes (\ket*{\psi_b} \otimes \ket*{0^\secpar})_{\calB} \otimes  |\tilde{\tilde{\phi}}_{0,0} \rangle \nonumber\\
&\quad+ \abs{\calC}^{-1/2} \sum_{C \in \calC} \ket*{C}_{\calA} \otimes \sum_{\substack{(x,z)\neq (0^{m+\secpar}, 0^{m+\secpar}): \\ \pi_C^X(x,z) \neq 0^{\secpar}}} C^{\dagger} X^{x}Z^{z} C(\ket*{\psi_b} \otimes \ket*{0^\secpar})_{\calB} \otimes  |\tilde{\phi}_{x,z}\rangle \nonumber \,.
\end{align}
Applying $W_{\calC}$ to both sides, we get, by the unitarity of $W_{\calC}$ and using its definition,
\begin{align}
W_{\calC} \Eval_1 (W_{\calC})^{\dagger} U (G'U)^i\ket{\Psi_0} \otimes \ket{\phi} & \equiv_{(i+1) \cdot 2^{-\secpar}}  \,\abs{\calC}^{-1/2} \sum_{C \in \calC} \ket*{C}_{\calA} \otimes C (\ket*{\psi_b} \otimes \ket*{0^\secpar})_{\calB} \otimes  |\tilde{\tilde{\phi}}_{0,0} \rangle \nonumber \\
&\quad+ \abs{\calC}^{-1/2} \sum_{C \in \calC} \ket*{C}_{\calA} \otimes \sum_{\substack{(x,z)\neq (0^{m+\secpar}, 0^{m+\secpar}): \\ \pi_C^X(x,z) \neq 0^{\secpar}}}  X^{x}Z^{z} C(\ket*{\psi_b} \otimes \ket*{0^\secpar})_{\calB} \otimes  |\tilde{\phi}_{x,z}\rangle \,. \nonumber
\end{align}
Plugging $G' = W_{\calC} \Eval_1 (W_{\calC}^{\dagger})$ into the above gives
\begin{align}
(G'U)^{i+1} \ket{\Psi_0} \otimes \ket{\phi} & \equiv_{(i+1) \cdot 2^{-\secpar}}  \,\abs{\calC}^{-1/2} \sum_{C \in \calC} \ket*{C}_{\calA} \otimes C (\ket*{\psi_b} \otimes \ket*{0^\secpar})_{\calB} \otimes  |\tilde{\tilde{\phi}}_{0,0} \rangle \nonumber \\
&\quad+ \abs{\calC}^{-1/2} \sum_{C \in \calC} \ket*{C}_{\calA} \otimes \sum_{\substack{(x,z)\neq (0^{m+\secpar}, 0^{m+\secpar}): \\ \pi_C^X(x,z) \neq 0^{\secpar}}}  X^{x}Z^{z} C(\ket*{\psi_b} \otimes \ket*{0^\secpar})_{\calB} \otimes  |\tilde{\phi}_{x,z}\rangle \,,
\end{align}
which establishes item (i) of the inductive step. Next, we establish item (ii). From \eqref{eq:100}, we know that
\begin{align}
&\Eval_1 (W_{\calC})^{\dagger} U (G'U)^i\ket{\Psi_0} \otimes \ket{\phi} \nonumber \\&\equiv_{(i+1) \cdot 2^{-\secpar}}  \Eval \,\abs{\calC}^{-1/2} \sum_{C \in \calC} \ket*{C}_{\calA} \otimes (\ket*{\psi_b} \otimes \ket*{0^\secpar})_{\calB} \otimes  |\tilde{\phi}_{0,0} \rangle \nonumber\\
&\quad+ \abs{\calC}^{-1/2} \sum_{C \in \calC} \ket*{C}_{\calA} \otimes \sum_{\substack{(x,z)\neq (0^{m+\secpar}, 0^{m+\secpar}): \\ \pi_C^X(x,z) \neq 0^{\secpar}}} C^{\dagger} X^{x}Z^{z} C(\ket*{\psi_b} \otimes \ket*{0^\secpar})_{\calB} \otimes  |\tilde{\phi}_{x,z}\rangle \,. \label{eq:101}
\end{align}
Then, we have
\begin{align}
\textnormal{RHS of }\eqref{eq:101} &= \Eval \big(\ketbra{\tau}_{\calA} \otimes I_{\calB_1 \calR} \otimes \ketbra{0^{\secpar}}_{\calB_2} \big) \,\abs{\calC}^{-1/2} \sum_{C \in \calC} \ket*{C}_{\calA} \otimes (\ket*{\psi_b} \otimes \ket*{0^\secpar})_{\calB} \otimes  |\tilde{\phi}_{0,0} \rangle \nonumber\\
&\quad+ \abs{\calC}^{-1/2} \sum_{C \in \calC} \ket*{C}_{\calA} \otimes \sum_{\substack{(x,z)\neq (0^{m+\secpar}, 0^{m+\secpar}): \\ \pi_C^X(x,z) \neq 0^{\secpar}}} C^{\dagger} X^{x}Z^{z} C(\ket*{\psi_b} \otimes \ket*{0^\secpar})_{\calB} \otimes  |\tilde{\phi}_{x,z}\rangle \label{eq:11} \\
&= \Eval_2 \bigg( \abs{\calC}^{-1/2} \sum_{C \in \calC} \ket*{C}_{\calA} \otimes (\ket*{\psi_b} \otimes \ket*{0^\secpar})_{\calB} \otimes  |\tilde{\phi}_{0,0} \rangle \nonumber\\ 
&\quad \quad \quad \quad \quad \quad \quad + \abs{\calC}^{-1/2} \sum_{C \in \calC} \ket*{C}_{\calA} \otimes \sum_{\substack{(x,z)\neq (0^{m+\secpar}, 0^{m+\secpar}): \\ \pi_C^X(x,z) \neq 0^{\secpar}}} C^{\dagger} X^{x}Z^{z} C(\ket*{\psi_b} \otimes \ket*{0^\secpar})_{\calB} \otimes  |\tilde{\phi}_{x,z}\rangle \bigg) \label{eq:12} \\
&\equiv_{(i+1) \cdot 2^{-\secpar}} \Eval_2 (W_{\calC})^{\dagger} U (G'U)^i\ket{\Psi_0} \otimes \ket{\phi} \,, \label{eq:133}
\end{align}
where \eqref{eq:11} follows from the definition of $\ket{\tau}$; \eqref{eq:12}  from the definition of $\Eval_2$; and \eqref{eq:133} follows from the identical reasons as \eqref{eq:99}.

Overall, this gives, by a triangle inequality, 
\begin{equation}
    \label{eq:13}
\Eval_1 (W_{\calC})^{\dagger} U (G'U)^i\ket{\Psi_0} \otimes \ket{\phi} \equiv_{2 (i+1) \cdot 2^{-\secpar}}  \Eval_2 (W_{\calC})^{\dagger} U (G'U)^i\ket{\Psi_0} \otimes \ket{\phi}\,.
\end{equation}

Hence, we have
\begin{align}
(G'U)^{i+1}\ket{\Psi_0} \otimes \ket{\phi} &= G'U (G'U)^{i}\ket{\Psi_0} \otimes \ket{\phi} \nonumber \\ 
&= W_{\calC} \Eval_1 W_{\calC}^{\dagger} U (G'U)^i\ket{\Psi_0} \otimes \ket{\phi} \label{eq:15} \\
&\equiv_{2 (i+1) \cdot 2^{-\secpar}} W_{\calC} \Eval_2 W_{\calC}^{\dagger} U (G'U)^i\ket{\Psi_0} \otimes \ket{\phi} \label{eq:16} \\
&= GU (G'U)^i\ket{\Psi_0} \otimes \ket{\phi} \label{eq:17}\\
&\equiv_{i \cdot (i+1) \cdot 2^{-\secpar}} GU (GU)^i\ket{\Psi_0} \otimes \ket{\phi} \label{eq:18}\\
&= (GU)^{i+1} \ket{\Psi_0} \otimes \ket{\phi} \,, \nonumber
\end{align}
where \eqref{eq:15} is due to \eqref{eq:g1}; \eqref{eq:16} is due to \eqref{eq:13}; \eqref{eq:17} is due to \eqref{eq:g2}; and \eqref{eq:18} is by the inductive hypothesis. Overall, by a triangle inequality, we have
\begin{equation*}
(G'U)^{i+1}\ket{\Psi_0} \otimes \ket{\phi} \equiv_{2 (i+1) \cdot 2^{-\secpar} + i \cdot (i+1) \cdot 2^{-\secpar}} (GU)^{i+1} \ket{\Psi_0} \otimes \ket{\phi} \,,
\end{equation*}
which simplifies to
\begin{equation*}
(G'U)^{i+1}\ket{\Psi_0} \otimes \ket{\phi} \equiv_{(i+1)(i+2)\cdot 2^{\secpar}} (GU)^{i+1} \ket{\Psi_0} \otimes \ket{\phi} \,,
\end{equation*}
This establishes exactly item (ii) of the inductive hypothesis, and hence \Cref{lem:5} (assuming Equation \eqref{eq:6}).

To conclude the proof of \Cref{lem:5}, we are left with proving Equation \eqref{eq:6}. We will make use of the ``Clifford twirl.''
\begin{lemma}[Clifford twirl \cite{ABOEM17}]
\label{lem:clifford-twirl}
Let $n\in\mathbb{N}$. Let $\ket{\Psi}$ be any state on $n$ qubits. Let $x,z, x',z' \in \{0,1\}^n$ such that $(x,z) \neq (x',z')$. Let $\calC$ be the Clifford group on $n$ qubits. Then, 
$$\sum_{C \in \calC} C^{\dagger}X^{x'}Z^{z'} C \ketbra{\Psi} C^{\dagger}X^xZ^z C = 0 \,.$$
\end{lemma}

We have
\begin{align}
&\Big\|I_{\calA \calB_1 \calR} \otimes \ketbra{0^{\secpar}}_{\calB_2}\,\,\abs{\calC}^{-1/2} \sum_{C \in \calC} \ket*{C}_{\calA} \otimes \sum_{(x,z)\neq (0^{m+\secpar}, 0^{m+\secpar}) } C^{\dagger} X^{x}Z^{z} C(\ket*{\psi_b} \otimes \ket*{0^\secpar})_{\calB} \otimes  |\tilde{\phi}_{x,z}\rangle \Big\|^2 \nonumber \\
&= \frac{1}{|\calC|} \sum_{C \in \calC} \big\|(I \otimes \ketbra{0^{\secpar}}) \sum_{(x,z)\neq (0^{m+\secpar}, 0^{m+\secpar})} C^{\dagger} X^{x}Z^{z} C(\ket*{\psi_b} \otimes \ket*{0^\secpar})_{\calB} \otimes  |\tilde{\phi}_{x,z}\rangle \big\|^2 \nonumber\\
&=\frac{1}{|\calC|} \sum_{C \in \calC} \sum_{(x,z)\neq (0^{m+\secpar}, 0^{m+\secpar})} \big\|(I \otimes \ketbra{0^{\secpar}})  C^{\dagger} X^{x}Z^{z} C(\ket*{\psi_b} \otimes \ket*{0^\secpar})_{\calB} \otimes  |\tilde{\phi}_{x,z}\rangle  \big\|^2 \nonumber \\
&+ \frac{1}{|\calC|} \sum_{C \in \calC} \sum_{(x,z) \neq (x',z') \neq (0^{m+\secpar}, 0^{m+\secpar})} \bra{\psi}\bra{0^{\secpar}}  C^{\dagger}X^xZ^z C (I\otimes \ketbra{0^{\secpar}})C^{\dagger}X^{x'}Z^{z'} C\ket{\psi}\ket{0^{\secpar}} \cdot \langle \tilde{\phi}_{x,z} | \tilde{\phi}_{x',z'} \rangle \nonumber\\
&=\frac{1}{|\calC|} \sum_{C \in \calC} \sum_{(x,z)\neq (0^{m+\secpar}, 0^{m+\secpar})}  \big\||\tilde{\phi}_{x,z}\rangle \big\|^2 \nonumber\\
&+ \frac{1}{|\calC|} \sum_{C \in \calC} \sum_{(x,z) \neq (x',z') \neq (0^{m+\secpar}, 0^{m+\secpar})} \textnormal{Tr}\Big[(I \otimes \ketbra{0^{\secpar}}) C^{\dagger}X^{x'}Z^{z'} C (\ketbra{\psi}\otimes \ketbra{0^{\secpar}})C^{\dagger}X^xZ^z C \Big]  \nonumber\\
&= \sum_{(x,z)\neq (0^{m+\secpar}, 0^{m+\secpar})} \big\||\tilde{\phi}_{x,z}\rangle \big\|^2 \frac{1}{|\calC|}\sum_{\substack{C \in \calC:\\ \pi_C^X(x,z) \neq 0^{\secpar}}} 1 \nonumber \\
&+  \sum_{(x,z) \neq (x',z') \neq (0^{m+\secpar}, 0^{m+\secpar})} \textnormal{Tr}\Big[(I \otimes \ketbra{0^{\secpar}}) \frac{1}{|\calC|}\sum_{C \in \calC} C^{\dagger}X^{x'}Z^{z'} C (\ketbra{\psi}\otimes \ketbra{0^{\secpar}})C^{\dagger}X^xZ^z C \Big] \nonumber \\
&=\sum_{(x,z)\neq (0^{m+\secpar}, 0^{m+\secpar})} \big\||\tilde{\phi}_{x,z}\rangle \big\|^2 \frac{1}{|\calC|}\sum_{\substack{C \in \calC:\\ \pi_C^X(x,z) \neq 0^{\secpar}}} 1  \label{eq:24}\\
&= \sum_{(x,z)\neq (0^{m+\secpar}, 0^{m+\secpar})} \big\||\tilde{\phi}_{x,z}\rangle \big\|^2 \cdot 2^{-\secpar} \,, \label{eq:25}
\end{align}
where \eqref{eq:24} follows from the Clifford twirl (\Cref{lem:clifford-twirl}), and \eqref{eq:25} follows from the fact that for any $(x,z)\neq (0^{m+\secpar}, 0^{m+\secpar})$, the fraction of $C \in \calC$ such that $\pi_C^X(x,z) = 0^{\secpar}$ is exactly $2^{-\secpar}$.
We claim that $$\sum_{(x,z)} \big\||\tilde{\phi}_{x,z}\rangle \big\|^2 = 1\,.$$ Assuming this is the case, we have 
\begin{equation}
    \Big\|I_{\calA \calB_1 \calR} \otimes \ketbra{0^{\secpar}}_{\calB_2}\,\,\abs{\calC}^{-1/2} \sum_{C \in \calC} \ket*{C}_{\calA} \otimes \sum_{(x,z)\neq (0^{m+\secpar}, 0^{m+\secpar}) } C^{\dagger} X^{x}Z^{z} C(\ket*{\psi_b} \otimes \ket*{0^\secpar})_{\calB} \otimes  |\tilde{\phi}_{x,z}\rangle \Big\|^2 \leq 2^{-\secpar}\,,
\end{equation}
as desired. We now prove the claim. The calculation is similar to the we just performed. We have
\begin{align}
    1 &= \Big\| \abs{\calC}^{-1/2} \sum_{C \in \calC} \ket*{C}_{\calA} \otimes \sum_{(x,z)} C^{\dagger} X^{x}Z^{z} C(\ket*{\psi_b} \otimes \ket*{0^\secpar})_{\calB} \otimes  |\tilde{\phi}_{x,z}\rangle \Big\|^2 \nonumber \\
    &= \frac{1}{|\calC|}\sum_{C \in \calC} \sum_{x,z} \big\||\tilde{\phi}_{x,z}\rangle \big\|^2 \nonumber\\
    &+ \frac{1}{|\calC|} \sum_{C \in \calC} \sum_{(x,z) \neq (x',z')} \bra{\psi}\bra{0^{\secpar}}  C^{\dagger}X^xZ^z X^{x'}Z^{z'} C^{\dagger}\ket{\psi}\ket{0^{\secpar}} \cdot \langle \tilde{\phi}_{x,z} | \tilde{\phi}_{x',z'} \rangle \nonumber\\
    &=  \sum_{x,z} \big\||\tilde{\phi}_{x,z}\rangle \big\|^2 \nonumber\\
    &+ \frac{1}{|\calC|} \sum_{C \in \calC} \sum_{(x,z) \neq (x',z')} \textnormal{Tr}\Big[ \frac{1}{|\calC|} \sum_{C \in \calC} C^{\dagger}X^{x'}Z^{z'} C (\ketbra{\psi}\otimes \ketbra{0^{\secpar}})C^{\dagger}X^xZ^z C \Big] \nonumber \\
    &=  \sum_{x,z} \big\||\tilde{\phi}_{x,z}\rangle \big\|^2 \,, \nonumber
\end{align}
where the last line follows again by the Clifford twirl. This concludes the proof of \Cref{lem:5}.
\end{proof}

\begin{lemma} \label{lemma:admissible-oracles-app}
    For any adversary $A$ for Hybrid 3,
$$\Pr[A \textnormal{ wins in Hybrid } 3] = \frac{1}{2}.$$
\end{lemma}

The proof of \Cref{lemma:admissible-oracles-app} is a simple application of the ``admissible oracle lemma'' from \cite{GJMZ23}. In fact it is a very special case of that lemma, where the adversary has unbounded computation and the indistinguishability is perfect. We state this simple case of the admissible oracle lemma before presenting our proof of \Cref{lemma:admissible-oracles-app}.

\begin{definition}[$(W,\Pi)$-distinguishing game, \cite{GJMZ23}]
    Let $(\calA, \calB)$ be two quantum registers. Let $W$ be a binary observable and $\Pi$ be a projector on $(\calA, \calB)$ such that $\Pi$ commutes with $W$. Consider the following distinguishing game:
    \begin{enumerate}
        \item The adversary sends a quantum state on registers $(\calA, \calB)$ to the challenger.
        \item The challenger chooses a random bit $b \leftarrow \{0,1\}$. Next, it measures measures $\{\Pi,I-\Pi\}$; if the measurement rejects, abort and output a random bit $b' \from \{0,1\}$. Otherwise, the challenger applies $W^b$ to $(\calA, \calB)$, and returns $\calB$ to the adversary.
        \item The adversary outputs a guess $b'$.
    \end{enumerate}
    We define the distinguishing advantage of the adversary to be $\abs{\Pr[b'=b] - 1/2}$.
\end{definition}

\begin{lemma}[Admissible oracle lemma --- special case \cite{GJMZ23}]
\label{lemma:admissible-oracles}
    Suppose that every adversary achieves zero advantage in the $(W,\Pi)$-distinguishing game. Let $G$ be an admissible unitary, i.e.,
    \begin{itemize}
        \item $G$ commutes with both $W$ and $\Pi$, and
        \item $G$ acts identically on $I-\Pi$, i.e., $G(I-\Pi) = I - \Pi$.
    \end{itemize}
    Then every adversary achieves zero advantage in the $(W, \Pi)$-distinguishing game, even when given oracle access to $G$.
\end{lemma}

\begin{proof}[Proof of \Cref{lemma:admissible-oracles-app}]
We apply \Cref{lemma:admissible-oracles} with the following choices of $\Pi, W, G$:
    \begin{itemize}
        \item $\Pi$ is the projection on $\calA, \calB$ to all states of the form $\abs{\calC}^{-1/2} \sum_{C \in \calC} \ket*{C} \otimes C(\ket*{\psi} \otimes \ket*{0^\secpar})$, where $\calC$ is the Clifford group and $\ket*{\psi}$ is any state.
        \item For $b \in \{0,1\}$, $\ket*{\tilde{\psi}_b} = \abs{\calC}^{-1/2} \sum_{C \in \calC} \ket*{C} \otimes C(\ket*{\psi_b} \otimes \ket*{0^\secpar})$ are the states to be distinguished.
        \item $W = \ketbra*{\tilde{\psi}_0}{\tilde{\psi}_1} + \ketbra*{\tilde{\psi}_1}{\tilde{\psi}_0} + (I-\ketbra*{\tilde{\psi}_0}{\tilde{\psi}_0} - \ketbra*{\tilde{\psi}_1}{\tilde{\psi}_1})$ is the operator that swaps $\ket*{\tilde{\psi}_0}$ and $\ket*{\tilde{\psi}_1}$ and otherwise acts as the identity on the subspace orthogonal to the span of $\ket*{\tilde{\psi}_0}$ and $\ket*{\tilde{\psi}_1}$. 
        \item $G$ is the unitary that acts as $\Eval$ on the range of $\Pi$, and acts as the identity on the range of $I - \Pi$.
    \end{itemize}
    It is immediate from the fact that the Clifford group is a unitary 1-design that the $(W, \Pi)$-distinguishing game has perfect security. It is also easy to see that $G$ is an admissible oracle for $(W, \Pi)$. Therefore, \Cref{lemma:admissible-oracles} implies that no adversary can obtain any advantage for distinguishing between the $\calB$ registers of $\ket*{\tilde{\psi}_0}$ and $\ket*{\tilde{\psi}_1}$, even given oracle access to $G$.
\end{proof}
This completes the proof of \Cref{theorem:construction}.
\end{proof}

\section{Unclonable Encryption}
\label{sec:unclonable-primitives}
In this section we introduce a variant of unclonable encryption (UE) that we call \emph{coupled unclonable encryption} (cUE). Coupled unclonable encryption is a weaker primitive than UE, in the sense that any secure UE scheme can be used to build a secure cUE scheme. It closely resembles UE, the main difference being that in cUE there are \emph{two} encryption keys that decrypt \emph{two} messages. The main result of this section is that, unlike UE --- which we do not know how to construct from standard assumptions --- we can build cUE from one-way functions. The main technical ideas behind our construction are presented in \Cref{subsec:unc-rand}, and the cUE construction and proof of security are given in \Cref{subsec:cue}.

Beyond being interesting in its own right, we will show in \Cref{sec:cp} that, in conjunction with $\qsiO$, cUE is already sufficient to build copy protection for certain interesting classes of functions. In order to obtain these applications, we will need an additional feature of UE or cUE that we call \emph{key testing}. This feature is described in \Cref{subsec:key-testing}, where we also show that key testing can be generically added to any UE or cUE scheme using $\qsiO$ and injective one-way functions.

For an outline of the ideas and techniques used in this section, see the technical overview (\Cref{subsec:techo}).

\subsection{Unclonable randomness} \label{subsec:unc-rand}
We find it is easier to reason about a slightly weaker primitive than cUE, which we call ``unclonable randomness.'' Essentially, unclonable randomness is cUE but for random messages that the adversary does not choose: it allows one to encrypt \emph{random} strings $r,s$ under a secret key. The security guarantee says that it is not possible to split the encryption into two states which can both be used (together with the secret key) to learn \emph{any} information about $r$ and $s$.

Since we are able to build unclonable randomness unconditionally, and since it is just a building block for our cUE construction, we only formally define it for our particular construction (rather than as an abstract primitive). The security game for our unclonable randomness construction is given in \Cref{game:rand}, and security is proven in \Cref{theorem:rand-indep}. This result can be viewed as a decision version of the main result of \cite{TFKW13}.

\begin{theorem} \label{theorem:rand-indep}
    For any computationally unbounded adversary $\Adv$, and any $n, \secpar \in \mathbb{N}$,
    \[
        \Pr[\emph{\RandExpt}_{\Adv}(n,\secpar)=1] \le \frac{1}{2} + \poly(n) \cdot 2^{-\Omega(\lambda)}  \,,
    \]
    where $\emph{\RandExpt}_{\Adv}(n,\secpar)$ is described in \Cref{game:rand}.
\end{theorem}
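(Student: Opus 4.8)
The plan is to reduce $\RandExpt$ to the monogamy-of-entanglement game $\SearchExpt$ of \cite{TFKW13} by a quantum Goldreich-Levin style extraction, using the independence of $U$ and $V$ in an essential way. Suppose for contradiction that some (unbounded) $\Adv$ wins $\RandExpt_{\Adv}(n,\secpar)$ with probability $\tfrac12+\gamma$. I will build from $\Adv$ a pair of non-communicating parties, each given only $\theta$, that both output $x$ with probability bounded below by an explicit function of $\gamma$ and $1/n$. Since \cite{TFKW13} forces any such strategy to succeed with probability $2^{-\Omega(\secpar)}$, this yields $\gamma \le \poly(n)\cdot 2^{-\Omega(\secpar)}$, which is the theorem.

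First I would reformulate the success probability. Fix $\theta$ (which both parties learn) and write the adversary's post-split state as $\rho_{\calA\calB}$, with $A$'s guessing measurement a POVM depending on $(\theta,U,y_A)$ and $B$'s on $(\theta,V,y_B)$. Averaging over the challenge bits and the random pads, party $A$'s contribution collapses to the operator $\tfrac12 I + \tfrac12 N^U_A$ on $\calA$, where $N^U_A := A^{U,Ux}_1 - \E_{r} A^{U,r}_1$ has eigenvalues in $[-1,1]$ and records $A$'s signed distinguishing advantage for this $U$; similarly $\tfrac12 I + \tfrac12 N^V_B$ on $\calB$. Expanding the product, one obtains
\begin{align*}
\Pr[\RandExpt = 1 \mid \theta] = \tfrac14\Big(&1 + \E_U\textnormal{Tr}[N^U_A \rho_{\calA}] + \E_V\textnormal{Tr}[N^V_B \rho_{\calB}] \\
&+ \E_{U,V}\textnormal{Tr}\big[(N^U_A\otimes N^V_B)\rho_{\calA\calB}\big]\Big).
\end{align*}
This is where independence is used: because $U$ and $V$ are drawn independently, the cross term \emph{factors} as $\textnormal{Tr}[(\bar{N}_A\otimes \bar{N}_B)\rho_{\calA\calB}]$ with $\bar{N}_A := \E_U N^U_A$ and $\bar{N}_B := \E_V N^V_B$. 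Since $\|\bar{N}_A\|,\|\bar{N}_B\|\le 1$, each of the four terms is at most $1$; the baseline value $\tfrac12$ is attained by strategies where one party learns everything and the other nothing, and a value noticeably above $\tfrac12$ forces $\textnormal{Tr}[\bar{N}_A\rho_{\calA}] + \textnormal{Tr}[\bar{N}_B\rho_{\calB}] + \textnormal{Tr}[(\bar{N}_A\otimes\bar{N}_B)\rho_{\calA\calB}] > 1$, hence in particular a noticeable \emph{joint} advantage term on the entangled state.

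Next I would reduce the matrix $U\in\F_2^{n\times\secpar}$ to a single $\F_2$-vector by a hybrid over the $n$ rows (folding the remaining rows' inner products into the party's side information), losing the $\poly(n)$ factor; then run the quantum Goldreich-Levin extractor \cite{BV97,AC02} on $\calA$ with a fresh uniform row, and \emph{simultaneously} the analogous extractor on $\calB$ with an independent fresh row. Because these act on disjoint registers with independent randomness, the whole procedure is a product channel on $\rho_{\calA\calB}$; pushing the amplitude analysis through in bilinear form lower-bounds $\Pr[x_{A'}=x_{B'}=x]$ by a polynomial in the joint advantage term (together with the two marginal terms), yielding the desired strategy for $\SearchExpt$ and completing the reduction.

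The main obstacle is exactly the simultaneity: the naive argument extracts $x$ from each of $\calA$ and $\calB$ \emph{individually} with non-negligible probability, which is consistent with never extracting it from both at once, so a direct application of Goldreich-Levin plus \cite{TFKW13} does not close. The resolution is precisely the factorization above — the independence of $U$ and $V$ is what lets the two extractions use genuinely independent randomness and lets us lower-bound the \emph{joint} extraction probability directly from the bilinear form $\textnormal{Tr}[(\bar{N}_A\otimes\bar{N}_B)\rho_{\calA\calB}]$ rather than from the marginals. Carrying an Adcock-Cleve-type amplitude bound through in this bilinear setting, while correctly accounting for its interaction with the marginal terms $\textnormal{Tr}[\bar{N}_A\rho_{\calA}]$ and $\textnormal{Tr}[\bar{N}_B\rho_{\calB}]$, is the delicate technical heart of the argument.
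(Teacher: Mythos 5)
Your high-level route matches the paper's: reduce $\RandExpt$ to the TFKW monogamy game $\SearchExpt$ by a simultaneous quantum Goldreich--Levin extraction, crucially exploiting the independence of $U$ and $V$, and pay a $\poly(n)$ price in a hybrid over the $n$ rows. The factorization observation $\E_{U,V}[N^U_A\otimes N^V_B]=\bar N_A\otimes\bar N_B$ is correct and is the same independence the paper uses. So the skeleton is right.

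The gap is in the sentence ``a value noticeably above $\tfrac12$ forces $\alpha+\beta+\gamma>1$, hence in particular a noticeable joint advantage term.'' That ``hence'' is not a proof: with $\alpha,\beta,\gamma\in[-1,1]$ the constraint $\alpha+\beta+\gamma\geq 1+4\varepsilon$ is compatible a priori with large marginals and a small or even negative cross term, so something more is needed. The paper supplies exactly this missing step, and does it differently from what you sketch. Rather than trying to track the signed cross term $\textnormal{Tr}[(\bar N_A\otimes\bar N_B)\rho]$ together with the marginals, the paper first upper-bounds $(1+P)/2\otimes(1+Q)/2$ by $(1+\abs{P})/2\otimes(1+\abs{Q})/2$ (operator absolute values, making everything PSD with spectrum in $[0,1]$), and then invokes a simultaneous-averaging lemma (\Cref{lemma:simult}): writing the state in the product eigenbasis of $\abs{P},\abs{Q}$, a Markov-style argument shows that with probability $\geq\varepsilon$ \emph{both} eigenvalues are $\geq\varepsilon$, giving $\E\bra{\psi}\abs{P}\otimes\abs{Q}\ket{\psi}\geq\varepsilon^3$. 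This is what lets the rest of the argument go through. Likewise, your last step --- relating the joint advantage to the probability that both GL extractions simultaneously succeed --- needs the absolute-value/squaring trick and Cauchy--Schwarz/Jensen (as in the paper's \Cref{lemma:bv} and the chain of inequalities that follows), not an interaction with the marginal terms as you propose; the paper's bound never uses the marginals. In short: the approach is the paper's, but the claimed inference from the success probability to a noticeable joint advantage is not justified as written, and the tool you are missing is precisely \Cref{lemma:simult} together with the replacement of signed advantage operators by their operator absolute values.
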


In particular, when $n= \poly(\lambda)$, the advantage is negligible in $\lambda$.

\begin{figure}[H]
\pcb{
    \textbf{Challenger} \< \< \textbf{Adversary} \\[][\hline]
    \< \< \\[-0.5\baselineskip]
    x, \theta \from \{0,1\}^{10n+\secpar},\ U, V \from \{0,1\}^{n \times (10n+\secpar)} \<\< \\
    r^0, s^0 \from \{0,1\}^n \\
    r^1 := U x,\ s^1 := V x \<\< \\
    a, b \from \{0,1\} \<\< \\
    \< \sendmessageright*{\ket*{x^\theta}, r^a, s^b} \< \\
    \<\< (A, B, \rho_{\calA,\calB}) \from \Adv(\ket*{x^\theta}, r^a, s^b) \\
    \< \sendmessageleft*{A, B, \rho_{\calA,\calB}} \< \\
    (a',b') \from A^{\theta,U} \otimes B^{\theta,V} \,\rho_{\calA,\calB} \<\< \\
    \textnormal{Output 1 if $a'=a$ and $b'=b$; otherwise output 0} \<\<
}
\caption{$\RandExpt_{\Adv}(n,\secpar)$. The challenger first generates random strings $x, \theta \from \{0,1\}^{10n+\secpar}, r^0, s^0 \from \{0,1\}^n$ and random matrices $U, V \from \{0,1\}^{n \times (10n+\secpar)}$. It then computes $r^1$ and $s^1$ as $Ux$ and $Vx$ respectively. The challenger samples random bits $a$ and $b$, and sends the state $\ket*{x^\theta}$ along with $r^a$ and $s^b$ to the adversary. The adversary then computes a quantum state $\rho_{\calA,\calB}$ and circuit descriptions $A$ and $B$, and sends $(A, B, \rho_{\calA,\calB})$ back to the challenger. The challenger measures $A^{\theta,U}$ and $B^{\theta,V}$ on $\rho_{\calA,\calB}$, obtaining outcomes $a'$ and $b'$. The adversary wins if $a'=a$ and $b'=b$.} \label{game:rand}
\end{figure}

We prove \Cref{theorem:rand-indep} by reduction from a search version of the same game, defined in \Cref{game:search}. That this search version is secure follows straightforwardly from the results of \cite{TFKW13}, and is proven in \Cref{corollary:tfkw-mod}.

\begin{figure}[H]
\pcb{
    \textbf{Challenger} \< \< \textbf{Adversary} \\[][\hline]
    \< \< \\[-0.5\baselineskip]
    x, \theta \from \{0,1\}^{10n+\secpar},\ U, V \from \{0,1\}^{n \times (10n+\secpar)} \<\< \\
    \< \sendmessageright*{\ket*{x^\theta}, U x, V x} \< \\
    \<\< (\tilde{A}, \tilde{B}, \rho_{\calA,\calB}) \from \Adv(\ket*{x^\theta}, U x, V x) \\
    \< \sendmessageleft*{\tilde{A}, \tilde{B}, \rho_{\calA,\calB}} \< \\
    (x_A , x_B) \from \tilde{A}^{\theta,U} \otimes \tilde{B}^{\theta,V}\,\rho_{\calA,\calB}\<\< \\
    \textnormal{Output 1 if $x_A=x_B=x$; otherwise output 0} \<\<
}
\caption{$\SearchExpt_{\Adv}(n,\secpar)$. The challenger generates random strings $x, \theta \from \{0,1\}^{10n+\secpar}$ and matrices $U, V \from \{0,1\}^{n \times (10n+\secpar)}$ and sends $\ket*{x^\theta}, Ux, Vx$ to the adversary. The adversary responds with quantum circuits $\tilde{A}, \tilde{B}$ acting on a state $\rho_{\calA,\calB}$. The challenger measures $\tilde{A}^{\theta,U}$ and $\tilde{B}^{\theta,V}$ on $\rho_{\calA,\calB}$, obtaining outputs $x_A$ and $x_B$. The adversary wins if $x_A=x_B=x$.} \label{game:search}
\end{figure}

The $n=0$ case is exactly the monogamy-of-entanglement game considered in \cite{TFKW13}. Observe that when $n=0$, $U$ and $V$ are empty and the first message is simply $\ket*{x^\theta}$. When $n > 0$, the adversary is given some extra information about $x$.

\begin{theorem}[Theorem 3 in \cite{TFKW13}] \label{theorem:tfkw}
    For $\secpar \in \mathbb{N}$ and any computationally unbounded adversary $\Adv$,
    \[
        \Pr[\emph{\SearchExpt}_{\Adv}(0,\secpar)=1] \le \left(\frac{1}{2} + \frac{1}{2\sqrt{2}}\right)^\secpar.
    \]
\end{theorem}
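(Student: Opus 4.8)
\textbf{Proposal for the proof of \Cref{theorem:tfkw} (Theorem 3 in \cite{TFKW13}).}

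The plan is to reduce the $n=0$ case of $\SearchExpt$ directly to the monogamy-of-entanglement game of \cite{TFKW13}, since as observed right before the statement, when $n=0$ the matrices $U,V$ are empty, the first message is just $\ket*{x^\theta}$, and the experiment is \emph{syntactically identical} to their game: the challenger picks $x,\theta \from \{0,1\}^\secpar$, sends $H^\theta\ket{x}$, the adversary splits into two non-communicating parties who each receive $\theta$, and they win iff they both output $x$. So the first step is simply to unfold both definitions and check that the win conditions, the information flow, and the no-communication constraint match up; there is essentially nothing to do here beyond bookkeeping, and the bound $\left(\tfrac12 + \tfrac{1}{2\sqrt2}\right)^\secpar$ is then exactly the conclusion of \cite[Theorem~3]{TFKW13}.

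Since the excerpt is invoking this as an external input (``Theorem 3 in \cite{TFKW13}''), the honest answer is that no self-contained proof is expected here; I would instead sketch the argument \emph{behind} the cited bound so the reader sees why it is true. That argument goes as follows. One writes the adversary's strategy as a state $\rho_{\calA\calB}$ together with, for each basis $\theta\in\{0,1\}^\secpar$, a pair of measurements: a projective measurement $\{A^\theta_x\}_x$ on $\calA$ and $\{B^\theta_x\}_x$ on $\calB$. The winning probability is
\[
    p_{\mathrm{win}} \;=\; \frac{1}{2^\secpar}\sum_{\theta}\; \mathrm{Tr}\!\left[\Big(\sum_x A^\theta_x \otimes B^\theta_x\Big)\,\big(\,\ketbra{x^\theta}\!\otimes\!\text{(adversary ancilla)}\,\big)\right],
\]
and after purifying and pushing the challenger's preparation of $\ket{x^\theta}$ into the picture, $p_{\mathrm{win}}$ is bounded by the norm of an operator of the form $\tfrac{1}{2^\secpar}\sum_\theta \Pi^\theta$ where $\Pi^\theta = \sum_x (\ketbra{x^\theta}\otimes A^\theta_x \otimes B^\theta_x)$ is a projector for each fixed $\theta$. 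The key estimate is an operator-norm bound on a sum of projectors: one shows $\big\|\sum_\theta \Pi^\theta\big\|$ is controlled by $\max_\theta\|\Pi^\theta\| + \sum_{\theta\neq\theta'}\|\Pi^\theta \Pi^{\theta'}\|^{1/2}$-type quantities (a standard ``sum of projectors / overlap'' lemma), and then bounds the pairwise overlaps $\|\Pi^\theta\Pi^{\theta'}\|$ using the fact that $\ket{x^\theta}$ and $\ket{x'^{\theta'}}$ for $\theta\neq\theta'$ have small inner products on average --- this is where the constant $\tfrac12 + \tfrac{1}{2\sqrt2}$ (with $\tfrac{1}{2\sqrt2}$ the relevant single-qubit overlap) comes from, and exponentiating over the $\secpar$ independent qubits gives the stated bound.

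The main obstacle, if one wanted a from-scratch proof rather than a citation, would be precisely that operator-norm-of-a-sum-of-projectors step together with the overlap computation: it requires the combinatorial/geometric argument of \cite{TFKW13} (one can think of it as a clever application of a Cauchy--Schwarz / Bhattacharyya-type inequality over the $2^\secpar$ bases, carefully organized so the cross terms telescope). Since \Cref{theorem:tfkw} is stated with an explicit attribution, I would not reproduce that calculation; I would only note the reduction in the first step (trivial) and cite \cite{TFKW13} for the bound, remarking that $\SearchExpt_{\Adv}(0,\secpar)$ is definitionally their monogamy-of-entanglement game.
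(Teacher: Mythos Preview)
Your proposal is correct and matches the paper's approach exactly: the paper does not prove \Cref{theorem:tfkw} at all but simply cites it as an external input from \cite{TFKW13}, observing (as you do) that $\SearchExpt_{\Adv}(0,\secpar)$ is definitionally their monogamy-of-entanglement game. Your additional sketch of the operator-norm argument behind the bound is a nice bonus but goes beyond what the paper itself provides.
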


In the following corollary of \Cref{theorem:tfkw}, we show that the general $\SearchExpt$ reduces to the $n=0$ case of $\SearchExpt$.

\begin{corollary} \label{corollary:tfkw-mod}
    For $n, \secpar \in \mathbb{N}$ and any computationally unbounded adversary $\Adv$,
    \[
        \Pr[\emph{\SearchExpt}_{\Adv}(n,\secpar)=1] \le \left(\frac{1}{2} + \frac{1}{2\sqrt{2}}\right)^\secpar.
    \]
\end{corollary}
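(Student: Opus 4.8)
\emph{Proof proposal.} The plan is to reduce the general game $\SearchExpt_{\Adv}(n,\secpar)$ to the pure monogamy-of-entanglement game $\SearchExpt_{\Adv'}(0,10n+\secpar)$ and then apply \Cref{theorem:tfkw} at security parameter $10n+\secpar$. The only feature separating $\SearchExpt(n,\secpar)$ from the $n=0$ game on $10n+\secpar$ BB84 qubits is the extra classical message $(Ux,Vx)$, which is just $2n$ bits; a reduction can therefore afford to simply \emph{guess} it, at a multiplicative cost of $2^{2n}$, and the stretch factor $10$ baked into the string length $10n+\secpar$ is exactly what lets us absorb this loss.

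Concretely, I would fix an unbounded adversary $\Adv$ for $\SearchExpt(n,\secpar)$ and build $\Adv'$ for $\SearchExpt(0,10n+\secpar)$ as follows. Before the interaction, $\Adv'$ samples $U,V\from\{0,1\}^{n\times(10n+\secpar)}$ and guesses $\alpha,\beta\from\{0,1\}^n$. On receiving the challenge state $\ket*{x^\theta}$ (recall that in the $n=0$ game $U,V$ are empty, so this \emph{is} the entire first message, with $x,\theta\in\{0,1\}^{10n+\secpar}$), $\Adv'$ runs $\Adv(\ket*{x^\theta},\alpha,\beta)$, obtains $(\tilde{A},\tilde{B},\rho_{\calA,\calB})$, and outputs $(\tilde{A}[U],\tilde{B}[V],\rho_{\calA,\calB})$, where $\tilde{A}[U]$ denotes the circuit obtained from $\tilde{A}$ by hard-coding its matrix input to $U$ (so it now expects only the basis string as side information), and similarly for $\tilde{B}[V]$; this is legitimate since $\Adv'$ chose $U,V$ itself. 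The $\SearchExpt(0,10n+\secpar)$ challenger then measures $\tilde{A}[U]^{\theta}\otimes\tilde{B}[V]^{\theta}=\tilde{A}^{\theta,U}\otimes\tilde{B}^{\theta,V}$ on $\rho_{\calA,\calB}$ and accepts iff both outcomes equal $x$ --- which is precisely the measurement and the acceptance condition of $\SearchExpt(n,\secpar)$.

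The substantive step is the faithfulness analysis, conditioned on the event $E:=\{\alpha=Ux\text{ and }\beta=Vx\}$. Since $\alpha,\beta$ are uniform and independent of $(x,\theta,U,V)$, we have $\Pr[E]=2^{-2n}$, and conditioning on $E$ leaves the distribution of $(x,\theta,U,V)$ unchanged while making the triple handed to $\Adv$ equal to $(\ket*{x^\theta},Ux,Vx)$ and the post-split information equal to $(\theta,U)$ for $A$ and $(\theta,V)$ for $B$ --- i.e.\ exactly the view in $\SearchExpt(n,\secpar)$. Hence $\Adv'$ wins whenever the simulated $\Adv$ wins, so $\Pr[\SearchExpt_{\Adv'}(0,10n+\secpar)=1]\ge\Pr[E]\cdot\Pr[\SearchExpt_{\Adv}(n,\secpar)=1]$. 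Combining with \Cref{theorem:tfkw},
\[
    \Pr[\SearchExpt_{\Adv}(n,\secpar)=1] \le 2^{2n}\left(\frac{1}{2}+\frac{1}{2\sqrt{2}}\right)^{10n+\secpar} = \left(4\left(\frac{1}{2}+\frac{1}{2\sqrt{2}}\right)^{10}\right)^{n}\left(\frac{1}{2}+\frac{1}{2\sqrt{2}}\right)^{\secpar},
\]
and the proof is completed by the routine numerical check $4\left(\frac{1}{2}+\frac{1}{2\sqrt{2}}\right)^{10}<1$ (the left-hand side is about $0.82$; in fact stretch $9$ would already suffice).

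I do not expect a serious obstacle: the quantum content is inert, since $\Adv'$ forwards its BB84 challenge untouched. The only points requiring care are the bookkeeping --- verifying that hard-coding $U,V$ into $\tilde{A},\tilde{B}$ makes the $n=0$ challenger's measurement coincide with the $\SearchExpt(n,\secpar)$ measurement --- and checking that conditioning on the guessing event $E$ genuinely does not perturb the distribution of the honest randomness $(x,\theta,U,V)$, which it does not because $\alpha,\beta$ are drawn independently of everything else.
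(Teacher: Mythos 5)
Your proof is correct and takes essentially the same approach as the paper's: guess the $2n$ bits $(Ux, Vx)$ up front, pay a multiplicative $2^{2n}$ loss that is absorbed by the stretch factor $10$ baked into $10n+\secpar$, and invoke \Cref{theorem:tfkw} at security parameter $10n+\secpar$. The only cosmetic difference is that you sample $U,V$ before running $\Adv$ while the paper samples them afterward, which is immaterial since $(\alpha,\beta)$ are independent of $(x,\theta,U,V)$.
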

\begin{proof}
    Suppose that $\Adv$ obtains advantage $\varepsilon$ in $\SearchExpt_{\Adv}(n,\secpar)$. We design a reduction $\Red[\Adv,n,\secpar]$ that uses $\Adv$ to play $\SearchExpt(0,10n+\secpar)$ by simply \emph{guessing} the values $Ux$ and $Vx$. Formally, $\Red$ is defined by the following behavior in $\SearchExpt(0,10n+\secpar)$.
    
    \indent $\SearchExpt_{\Red[\Adv,n,\secpar]}(0,10n+\secpar)$:
    \begin{enumerate}
        \item The challenger samples $x, \theta \from \{0,1\}^{10n+\secpar}$ and sends $\ket{x^\theta}$ to the reduction.
        \item The reduction samples $r, s \from \{0,1\}^n$ and sends $(\ket*{x^\theta}, r, s)$ to $\Adv$.
        \item $\Adv$ outputs a state $\rho_{\calA,\calB}$ and descriptions of $2^{10n+\secpar}$-outcome measurement families $\tilde{A}, \tilde{B}$.
        \item The reduction samples $U, V \from \{0,1\}^{n \times (10n + \secpar)}$ and returns $(\tilde{A}^{\cdot,U}, \tilde{B}^{\cdot,V}, \rho_{\calA,\calB})$ to the challenger.
        \item The challenger obtains $x_A \from \tilde{A}^{\theta,U}(\rho_{\calA})$ and $x_B \from \tilde{B}^{\theta,V}(\rho_{\calB})$. The adversary wins if $x_A=x_B=x$.
    \end{enumerate}
    The probability that $\Red$ samples $U, V, r, s$ such that $U x = r$ and $V x = s$ is $2^{-2n}$, and conditioned on this event the view of the adversary in $\SearchExpt(n, \secpar)$ is exactly reproduced. Therefore, $\Red$ has advantage $\varepsilon / 2^{2n}$ in $\SearchExpt(0,10n+\secpar)$. By \Cref{theorem:tfkw}, we have
    \begin{align*}
        \varepsilon &\le 2^{2n} \cdot \left(\frac{1}{2} + \frac{1}{2\sqrt{2}}\right)^{10n+\secpar} \\
        &= \left[4 \cdot \left(\frac{1}{2} + \frac{1}{2\sqrt{2}}\right)^{10}\right]^n \cdot \left(\frac{1}{2} + \frac{1}{2\sqrt{2}}\right)^{\secpar} \\
        &\le \left(\frac{1}{2} + \frac{1}{2\sqrt{2}}\right)^{\secpar}. \qedhere
    \end{align*}
\end{proof}

In order to reduce the security of $\RandExpt$ to that of $\SearchExpt$, and prove \Cref{theorem:rand-indep}, we require two lemmas.

\begin{lemma}
\label{lemma:simult}
	Let $\{\ket{\psi_z}\}_{z \in \calZ}$ be a family of states and $\{P_z, Q_z\}_{z \in \calZ}$ be a family of operators. Suppose that $0 \le P_z, Q_z \le 1$ for all $z \in \calZ$ and
	$$\E_{z \from \calZ} \bra{\psi_z} \left(\frac{1+P_z}{2}\right) \otimes \left(\frac{1+Q_z}{2}\right) \ket{\psi_z} \ge \frac{1}{2} + \varepsilon.$$
	Then
	$$\E_{z \from \calZ} \bra{\psi_z} P_z \otimes Q_z \ket{\psi_z} \ge \varepsilon^3.$$
\end{lemma}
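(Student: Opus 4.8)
The statement to prove is Lemma~\ref{lemma:simult}: given a family of states $\{\ket{\psi_z}\}$ and operators $0 \le P_z, Q_z \le 1$ with
\[
    \E_{z} \bra{\psi_z} \left(\tfrac{1+P_z}{2}\right) \otimes \left(\tfrac{1+Q_z}{2}\right) \ket{\psi_z} \ge \tfrac{1}{2} + \varepsilon,
\]
we want to conclude $\E_z \bra{\psi_z} P_z \otimes Q_z \ket{\psi_z} \ge \varepsilon^3$. The plan is to first reduce to the case of projectors (or at least to handle $P_z, Q_z$ via their spectral decomposition), then exploit the tensor-product structure together with a ``one side at a time'' argument.

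\textbf{Step 1: single-party marginals.}
First I would expand $\left(\tfrac{1+P_z}{2}\right)\otimes\left(\tfrac{1+Q_z}{2}\right) = \tfrac14\left(I + P_z\otimes I + I\otimes Q_z + P_z\otimes Q_z\right)$. Taking expectations, the hypothesis becomes
\[
    \tfrac14\left(1 + \E_z\langle P_z\otimes I\rangle + \E_z\langle I\otimes Q_z\rangle + \E_z\langle P_z\otimes Q_z\rangle\right) \ge \tfrac12 + \varepsilon,
\]
i.e.\ $\E_z\langle P_z\otimes I\rangle + \E_z\langle I\otimes Q_z\rangle + \E_z\langle P_z\otimes Q_z\rangle \ge 1 + 4\varepsilon$. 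Since each term is in $[0,1]$, each of the three marginal quantities is at least $4\varepsilon - 1 \ge \ldots$ — actually this only directly gives that each is $\ge 4\varepsilon - 1$, which is useless for small $\varepsilon$. So a cruder bound is needed: from the three terms summing to $\ge 1+4\varepsilon$ and the joint term being $\le \min(\E\langle P_z\otimes I\rangle, \E\langle I \otimes Q_z\rangle)$, we get that both single-party expectations are large. Precisely, $\E_z\langle P_z\otimes I\rangle \ge \E_z\langle P_z\otimes Q_z\rangle$ and similarly for $Q$, but I actually want a lower bound on the \emph{joint} quantity, so I should think of it the other way.

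\textbf{Step 2: the ``square'' trick on one side.}
The natural approach: for each $z$, since $0 \le P_z \le 1$ we have $P_z \ge P_z^2$, hence $\bra{\psi_z}(1+P_z)\otimes(1+Q_z)\ket{\psi_z}$ is controlled. Better: write $\ket{\psi_z}$, define the (sub-normalized) post-measurement vector by applying $\sqrt{Q_z}$ on the second register. A cleaner route is to use the operator inequality $\tfrac{1+P}{2} \le \sqrt{P}$ when $0\le P\le 1$? That's false in general. Instead I'd use: for a binary-outcome measurement $\{\Pi, I-\Pi\}$, $\tfrac{1+\Pi}{2}$ is exactly the success probability of guessing with the measurement, and relate $\E\langle\tfrac{1+P}{2}\otimes\tfrac{1+Q}{2}\rangle$ to a genuine probability. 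The honest plan is: introduce $\ket{\phi_z} = (I \otimes \sqrt{Q_z})\ket{\psi_z}$ (unnormalized), so $\bra{\psi_z}(I\otimes Q_z)\ket{\psi_z} = \|\ket{\phi_z}\|^2$, and $\bra{\psi_z}(P_z\otimes Q_z)\ket{\psi_z} = \bra{\phi_z}(P_z\otimes I)\ket{\phi_z}$. Then relate $\E\langle\tfrac{1+P_z}{2}\otimes\tfrac{1+Q_z}{2}\rangle$ to $\E \bra{\phi_z}\tfrac{1+P_z}{2}\otimes I\ket{\phi_z}$ plus a term from the $I - Q_z$ part, bound that term, and conclude that $\E_z \bra{\phi_z}(P_z\otimes I)\ket{\phi_z}$ is at least roughly $\varepsilon$ after subtracting off the ``identity'' contribution $\E\|\phi_z\|^2 / 2$... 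The bookkeeping here, getting exactly the exponent $3$, is the part I expect to be fiddly.

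\textbf{Main obstacle.}
The real difficulty is that largeness of $\E\langle\tfrac{1+P_z}{2}\otimes\tfrac{1+Q_z}{2}\rangle$ over $\tfrac12$ does \emph{not} by a naive argument force $\E\langle P_z\otimes Q_z\rangle$ to be large — the excess could in principle come entirely from the ``cross'' terms $P_z\otimes I$ and $I\otimes Q_z$, which are about one party alone. The cube $\varepsilon^3$ strongly suggests a three-fold application of a Cauchy--Schwarz / Jensen-type inequality: roughly, (i) convert the averaged bilinear form into a statement about a single party conditioned on the other, losing one power of $\varepsilon$; (ii) do the symmetric step for the other party, losing a second power; (iii) combine via Cauchy--Schwarz over $z$, losing a third. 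So the plan is to set this up carefully: define for each $z$ the quantities $p_z = \bra{\psi_z}(P_z\otimes I)\ket{\psi_z}$, $q_z = \bra{\psi_z}(I \otimes Q_z)\ket{\psi_z}$, $r_z = \bra{\psi_z}(P_z \otimes Q_z)\ket{\psi_z}$, note the elementary inequality $r_z \ge p_z + q_z - 1$ (from $P_z\otimes Q_z \ge P_z\otimes I + I \otimes Q_z - I$, which holds since $(I-P_z)\otimes(I-Q_z)\ge 0$), so $\E_z r_z \ge \E_z p_z + \E_z q_z - 1 \ge 4\varepsilon$ directly — wait, that already gives $\ge 4\varepsilon \ge \varepsilon^3$! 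Let me sanity-check: $(I - P_z)\otimes(I-Q_z) = I - P_z\otimes I - I\otimes Q_z + P_z\otimes Q_z \ge 0$, so $P_z\otimes Q_z \ge P_z\otimes I + I\otimes Q_z - I$, hence $r_z \ge p_z + q_z - 1$. And from Step~1, $\E p_z + \E q_z + \E r_z \ge 1 + 4\varepsilon$, combined with $\E r_z \ge \E p_z + \E q_z - 1$ gives $2\E r_z \ge 4\varepsilon$, i.e.\ $\E r_z \ge 2\varepsilon$. That is much stronger than $\varepsilon^3$ — so either the lemma is deliberately stated with a loose bound (likely, for convenience), or I am missing a subtlety about $P_z, Q_z$ not being simultaneously diagonalizable across registers, which they are not an issue here since they act on different tensor factors. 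I would therefore present the short argument: expand, use $(I-P_z)\otimes(I-Q_z)\succeq 0$ to get $r_z \ge p_z + q_z - 1$ pointwise, average, and combine with the hypothesis to get $\E_z r_z \ge 2\varepsilon \ge \varepsilon^3$ (using $\varepsilon \le 1/2$). If the paper genuinely needs the operators to be non-projective and the subtlety is elsewhere, I would fall back to the three-fold Cauchy--Schwarz plan sketched above; but I expect the clean operator-positivity argument to suffice, and the only ``obstacle'' is double-checking that $\varepsilon \le 1/2$ (which holds since the LHS of the hypothesis is at most $1$) so that $2\varepsilon \ge \varepsilon^3$.
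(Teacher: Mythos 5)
Your final argument — the one you land on in the last paragraph — is correct and takes a genuinely different route from the paper, and in fact it proves a strictly stronger bound. The paper's proof diagonalizes $P_z$ on the first factor and $Q_z$ on the second, rewrites $\ket{\psi_z}$ in the joint eigenbasis, and reduces the problem to a purely classical statement about random eigenvalues $(p,q)\in[0,1]^2$: it then uses a Markov-style averaging argument to show that with probability at least $\varepsilon$ both $p\ge\varepsilon$ and $q\ge\varepsilon$, giving $\E[pq]\ge\varepsilon\cdot\varepsilon\cdot\varepsilon=\varepsilon^3$. Your argument bypasses the eigenbasis entirely: from $0\le P_z\le I$ and $0\le Q_z\le I$ on disjoint tensor factors you get the operator inequality $(I-P_z)\otimes(I-Q_z)\succeq 0$, i.e.\ $P_z\otimes Q_z\succeq P_z\otimes I + I\otimes Q_z - I$, so pointwise $r_z\ge p_z+q_z-1$. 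Expanding the hypothesis as $\tfrac14\bigl(1+\E p_z+\E q_z+\E r_z\bigr)\ge\tfrac12+\varepsilon$ and substituting the pointwise inequality gives $\E r_z\ge 2\varepsilon$, which dominates $\varepsilon^3$ since the hypothesis forces $\varepsilon\le\tfrac12$.

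Two additional observations that strengthen your case: first, your $2\varepsilon$ bound is actually tight — take $P=Q=\ketbra{1}$ and $\ket{\psi}=\sqrt{2\varepsilon}\,\ket{11}+\sqrt{\tfrac12-\varepsilon}\,\ket{10}+\sqrt{\tfrac12-\varepsilon}\,\ket{01}$, which gives hypothesis value exactly $\tfrac12+\varepsilon$ and $\langle P\otimes Q\rangle = 2\varepsilon$. Second, the earlier parts of your write-up (the ``square trick,'' the $\sqrt{Q_z}$ conjugation, the three-fold Cauchy--Schwarz sketch) are dead ends you correctly abandoned; the final positivity argument needs none of that machinery. The one thing worth stating explicitly if you were to write this up is the standard fact you rely on — that the tensor product of two positive semidefinite operators is positive semidefinite — together with a one-line justification (e.g.\ $(I-P)\otimes(I-Q)=\bigl(\sqrt{I-P}\otimes\sqrt{I-Q}\bigr)^{\dagger}\bigl(\sqrt{I-P}\otimes\sqrt{I-Q}\bigr)$), and the observation that the hypothesis left-hand side being at most $1$ forces $\varepsilon\le\tfrac12$, hence $2\varepsilon\ge\varepsilon^3$. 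As it stands your proof is shorter, more elementary, and quantitatively better than the paper's; the paper's exponent $3$ appears to simply be a convenience, since a linear loss in $\varepsilon$ would have served equally well in the downstream application in \Cref{theorem:rand-indep}.
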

\begin{proof}
	Let $\{\ket{\phi_z^i}\}_i$ and $\{\ket{\tau_z^j}\}_j$ be eigenbases for $P_z$ and $Q_z$, respectively. Then we can write
	$$\ket{\psi_z} = \sum_{i,j} \alpha_z^{i,j} \ket{\phi_z^i} \otimes \ket{\tau_z^j},$$
	and letting $\alpha_z$ denote the distribution over $(i,j)$ with probabilities $\abs{\alpha_z^{i,j}}^2$ we have
	$$\E_{\substack{z \from \calZ \\ (i,j) \from \alpha_z}} \left( \frac{1}{2} + \frac{1}{2} \bra{\phi_z^i} P_z \ket{\phi_z^i}\right) \left( \frac{1}{2} + \frac{1}{2} \bra{\tau_z^j} Q_z \ket{\tau_z^j}\right) \ge \frac{1}{2} + \varepsilon.$$
	By an averaging argument, it follows that
	$$\Pr_{\substack{z \from \calZ \\ (i,j) \from \alpha_z}}\left[\bra{\phi_z^i} P_z \ket{\phi_z^i} \ge \varepsilon \textnormal{ and } \bra{\tau_z^j} Q_z \ket{\tau_z^j} \ge \varepsilon\right] \ge \varepsilon.$$
	Therefore
	\begin{align*}
		\E_{z \from \calZ} \bra{\psi_z} P_z \otimes Q_z \ket{\psi_z} &= \E_{\substack{z \from \calZ \\ (i,j) \from \alpha_z}} \bra{\phi_z^i} P_z \ket{\phi_z^i} \bra{\tau_z^j} Q_z \ket{\tau_z^j} \\
		&\ge \varepsilon^3. \qedhere
	\end{align*}
\end{proof}

A central component in our proof of \Cref{theorem:rand-indep} is the quantum Goldreich-Levin reduction of \cite{BV97,AC02}. We recall that algorithm here. Let $\{A^u\}_{u \in \{0,1\}^n}$ be a collection of binary-outcome measurements and let $\ket{\psi}$ be a state.

\indent $\GL(\{A^u\}_{u \in \{0,1\}^n})$:
    \begin{enumerate}
        \item Prepare the state
        \[
            2^{-n/2} \sum_{u \in \{0,1\}^n} \ket{u} \otimes \ket{\psi}.
        \]
        \item Apply $\sum_{u \in \{0,1\}^n} \ketbra{u} \otimes A_{ph}^u$, where $A_{ph}^u$ is the phase oracle for $A^u$ --- i.e., $A_{ph}^u$ applies a phase of $(-1)$ to the subspace where $A^u = 1$ and acts as identity on the subspace where $A^u$.
        \item Measure the $\ket{u}$ register in the Hadamard basis, and output the result.
    \end{enumerate}

\begin{lemma}[Simultaneous quantum Goldreich-Levin computation]
\label{lemma:bv}

Let $\{A^u\}_{u \in \{0,1\}^n}$ and $\{B^v\}_{v \in \{0,1\}^n}$ be collections of binary-outcome measurements that act on disjoint registers $\calA$ and $\calB$. Let $\ket{\psi}$ be a state on $\calA, \calB$. Then the probability that $\GL(\{A^u\}_{u \in \{0,1\}^n}) \otimes \GL(\{B^v\}_{v \in \{0,1\}^n}) \ket{\psi}$ returns $(x,x)$ is
\[
    \Pr\Big[(x,x) \from \GL(\{A^u\}_{u \in \{0,1\}^n}) \otimes \GL(\{B^v\}_{v \in \{0,1\}^n}) \ket{\psi}\Big] = \norm{(2 \E_u \Pi_A^{x,u} - I) \otimes (2 \E_v \Pi_B^{x,v} - I) \ket{\psi}}^2,
\]
where $\Pi_A^{x,u}$ and $\Pi_B^{x,u}$ are the projections onto the subspaces where $A$ and $B$ output $u \cdot x$ and $v \cdot x$, respectively.
\end{lemma}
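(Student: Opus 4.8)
The plan is to first determine the action of a single invocation of $\GL$, re-expressed in terms of the projectors $\Pi_A^{x,u}$, and then to observe that running two copies on disjoint registers merely tensors these actions. The one genuine computation is an identity relating the phase oracle to those projectors. Write $\Pi_A^{u,c}$ for the projector onto the outcome-$c$ subspace of the binary measurement $A^u$, so that the phase oracle used in step~2 of $\GL$ is $A_{ph}^u = \Pi_A^{u,0} - \Pi_A^{u,1} = I - 2\Pi_A^{u,1}$, and note that in the notation of the lemma $\Pi_A^{x,u} = \Pi_A^{u,\,u\cdot x}$. A case check on the value of the bit $u\cdot x$ then yields
\[
    (-1)^{u\cdot x}\, A_{ph}^u \;=\; 2\,\Pi_A^{x,u} - I ,
\]
and likewise for $B$.

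Next I would analyze $\GL(\{A^u\})$ in isolation. After steps~1--2 the state is $2^{-n/2}\sum_u \ket{u}\otimes A_{ph}^u\ket{\psi}$ on the control register tensored with $\calA$. Measuring the control register in the Hadamard basis and conditioning on outcome $x$ amounts to projecting that register onto $2^{-n/2}\sum_u (-1)^{u\cdot x}\ket{u}$, which leaves the unnormalized residual state $\E_u (-1)^{u\cdot x} A_{ph}^u\ket{\psi}$ on $\calA$; by the identity above this equals $(2\,\E_u \Pi_A^{x,u} - I)\ket{\psi}$. In particular, $\GL(\{A^u\})$ outputs $x$ with probability $\norm{(2\,\E_u\Pi_A^{x,u}-I)\ket{\psi}}^2$, and the displayed vector is the (unnormalized) leftover state.

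Finally I would run the two procedures in parallel. The control register and phase queries of the $A$-side act on a fresh ancilla together with $\calA$, and those of the $B$-side act on a fresh ancilla together with $\calB$; since these register sets are disjoint the two circuits commute, and the joint state after all phase queries is $2^{-n}\sum_{u,v}\ket{u}\ket{v}\otimes(A_{ph}^u\otimes B_{ph}^v)\ket{\psi}$. Measuring both control registers in the Hadamard basis and conditioning on the joint outcome $(x,x)$ projects them onto $2^{-n/2}\sum_u(-1)^{u\cdot x}\ket{u}$ and $2^{-n/2}\sum_v(-1)^{v\cdot x}\ket{v}$; because the double sum over $(u,v)$ factors as a product of a sum over $u$ and a sum over $v$, the unnormalized residual state on $\calA\calB$ is
\[
    \Big(\E_u (-1)^{u\cdot x} A_{ph}^u\Big)\otimes\Big(\E_v (-1)^{v\cdot x} B_{ph}^v\Big)\ket{\psi} \;=\; (2\,\E_u\Pi_A^{x,u}-I)\otimes(2\,\E_v\Pi_B^{x,v}-I)\ket{\psi} ,
\]
using the Step~1 identity on each factor. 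Taking the squared norm of this vector gives $\Pr[(x,x)\from \GL(\{A^u\})\otimes\GL(\{B^v\})\ket{\psi}]$, which is exactly the claimed formula.

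I do not expect a serious obstacle here: the mathematical content is the phase-oracle identity, and the remaining work is bookkeeping --- tracking precisely which ancilla and data registers each copy of $\GL$ touches (so that the two circuits provably commute and the post-measurement amplitude factors across $u$ and $v$), and confirming the sign convention, i.e.\ that ``$A^u$ outputs $u\cdot x$'' is the event picked out by $(-1)^{u\cdot x}A_{ph}^u$ rather than its negation. That sign check is precisely what the case analysis in the first step establishes.
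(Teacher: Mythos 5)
Your proof is correct and follows essentially the same route as the paper's: establish the phase-oracle identity $(-1)^{u\cdot x}A_{ph}^u = 2\Pi_A^{x,u} - I$, note that the two $\GL$ circuits act on disjoint registers so the state after the phase queries factors, and compute the amplitude of the $\ket{x,x}$ outcome after the Hadamard-basis measurement. The only difference is presentational --- you analyze a single $\GL$ invocation first and then tensor, while the paper works with the two-register state throughout --- but the computation and the final expression are identical.
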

\begin{proof}
    The first step of $\GL(\{A^u\}_{u \in \{0,1\}^n}) \otimes \GL(\{B^v\}_{v \in \{0,1\}^n})$ is to prepare the state
    $$2^{-n} \sum_{u, v \in \{0,1\}^n} \ket{u,v} \otimes \ket{\psi}.$$
    We then apply our controlled phase oracles to get the state
    \begin{align*}
		& 2^{-n} \sum_{u, v \in \{0,1\}^n} \ket{u,v} \otimes (A^u \otimes B^v) \ket{\psi} \\
		&= 2^{-n} \sum_{u, v \in \{0,1\}^n} \ket{u,v} \otimes (-1)^{(u+v) \cdot x} \cdot (2 \cdot \Pi_A^{x,u} - 1) \otimes (2 \cdot \Pi_B^{x,v} - 1) \ket{\psi},
    \end{align*}
    where we have used the fact that
    \[
        A_{ph}^u \ket{\psi} = (-1)^{u \cdot x} \cdot \Pi_A^{x,u} \ket{\psi} + (-1)^{1-u \cdot x} \cdot (1-\Pi_A^{x,u}) \ket{\psi} = (-1)^{u \cdot x} \cdot (2 \cdot \Pi_A^{x,u} - 1) \ket{\psi}
    \]
    (and similarly for $B_{ph}^v$).
    
    Next we apply Hadamard gates to the $\ket{u,v}$ part, project onto $\ketbra{x,x}$, and take the norm squared to find the probability that both $\GL(\{A^u\}_{u \in \{0,1\}^n})$ and $\GL(\{B^v\}_{v \in \{0,1\}^n})$ output $x$. That quantity is
    \begin{align*}
        & \norm{2^{-2n} \sum_{u, v \in \{0,1\}^n} \ket{x,x} \otimes (2 \cdot \Pi_A^{x,u} - 1) \otimes (2 \cdot \Pi_B^{x,v} - 1) \ket{\psi}}^2 \\
        &= \norm{\E_{u, v \in \{0,1\}^n} (2 \cdot \Pi_A^{x,u} - 1) \otimes (2 \cdot \Pi_B^{x,v} - 1) \ket{\psi}}^2. \qedhere
    \end{align*}
\end{proof}

\begin{proof}[Proof of \Cref{theorem:rand-indep}]
    As anticipated, we reduce the security of $\RandExpt$ to that of $\SearchExpt$ (\Cref{game:search}), which we established in \Cref{corollary:tfkw-mod}. The first step of the proof is to rewrite $\SearchExpt$ in an equivalent form.
    
    Observe that the challenger in $\SearchExpt$ could sample the vectors corresponding to $Ux, Vx$ \emph{before} actually deciding on the matrices $U, V$, and the security game would be identical. That is, the challenger will sample random vectors $r, s \from \{0,1\}^n$ and send $\ket*{x^\theta}, r, s$ to the adversary in the first step. Later, in order to run $\tilde{A}, \tilde{B}$, it will just sample random matrices $U,V$ conditioned on $Ux = r$ and $Vx = s$.

    Before we give a formal description of this equivalent game, we define distributions that will be useful throughout the proof. For $i \in [n]$, $x \in \{0,1\}^{10n+\secpar}$, and $r \in \{0,1\}^n$, let $\calD_i(x,r)$ be the distribution over matrices $U \in \{0,1\}^{n \times (10n+\secpar)}$ where row $j \in [n]$ is sampled as
    $$U_j = \begin{cases}
        u_j \from \{u \in \{0,1\}^{10n+\secpar} \mid u \cdot x = r_j\}, & j \le i \\
        u_j \from \{0,1\}^{10n+\secpar}, & j > i.
    \end{cases}$$
    That is, $U \from \calD_i(x,r)$ is a random matrix conditioned on the first $i$ values of $Ux$ being equal to the first $i$ values of $r$.
    Now our equivalent formulation of $\SearchExpt$ is as follows. Without loss of generality we assume that the state $\rho_{\calA,\calB}$ is a pure state $\ket{\psi_{x,\theta,r,s}}$.
    
    \indent $\SearchExpt_{\Adv}(n,\secpar)$:
    \begin{enumerate}
        \item The challenger samples $x, \theta \from \{0,1\}^{10n+\secpar}$ and $r,s \from \{0,1\}^n$, then sends $\ket{x^\theta}, r, s$ to $\Adv$.
        \item $\Adv$ outputs a state $\ket{\psi_{x,\theta,r,s}}$ and descriptions of $2^{10n+\secpar}$-outcome measurement families $\tilde{A}, \tilde{B}$.
        \item The challenger samples $U \from D_{n}(x,r), V \from D_{n}(x,s)$ and measures $\tilde{A}^{\theta, U}$ and $\tilde{B}^{\theta, V}$ on $\ket{\psi_{x,\theta,r,s}}$, obtaining outcomes $x_A$ and $x_B$. The adversary wins if $x_A=x_B=x$.
    \end{enumerate}
    We similarly rewrite $\RandExpt$ in an equivalent form where the challenger decides on the matrices at the end of the game.
    
    \indent $\RandExpt_{\Adv}(n,\secpar)$:
    \begin{enumerate}
        \item The challenger samples $x, \theta \from \{0,1\}^{10n+\secpar}$ and $r,s \from \{0,1\}^n$, then sends $\ket{x^\theta}, r, s$ to $\Adv$.
        \item $\Adv$ outputs a state $\ket{\psi_{x,\theta,r,s}}$ and descriptions of binary-outcome measurement families $A, B$.
        \item The challenger samples $a, b \from \{0,1\}$, then samples $U \from D_{a \cdot n}(x,r), V \from D_{b \cdot n}(x,s)$ and measures $A^{\theta, U}$ and $B^{\theta, V}$ on $\ket{\psi_{x,\theta,r,s}}$, obtaining outcomes $a'$ and $b'$. The adversary wins if $a'=a$ and $b'=b$.
    \end{enumerate}
    Note that the first message is the same in both $\SearchExpt$ and $\RandExpt$, and the challenge bits $a,b$ in $\RandExpt$ are sampled independently from the first message. Given an adversary $\Adv$ for $\RandExpt_{\Adv}(n,\secpar)$, we define an adversary $\Red[\Adv]$ for $\SearchExpt$. The latter uses the \cite{BV97,AC02} reduction where the adversaries guess random bits of $r, s$.
    
    \indent $\SearchExpt_{\Red[\Adv]}(n,\secpar)$:
    \begin{enumerate}
        \item The challenger samples $x, \theta \from \{0,1\}^{10n+\secpar}$ and $r,s \from \{0,1\}^n$, then sends $\ket{x^\theta}, r, s$ to $\Red$, which forwards everything to $\Adv$.
        \item $\Adv$ outputs a state $\ket{\psi_{x,\theta,r,s}}$ and descriptions of binary-outcome measurement families $A, B$.
        \item $\Red$ samples random $i, j \from [n]$ and $\tilde{U}, \tilde{V} \from \{0,1\}^{n \times (10n+\secpar)}$. Let $\tilde{A}^{\theta,U} := \GL(\{A^{\theta,[U_{<i} || u || \tilde{U}_{>i}]}\}_{u \in \{0,1\}^n})$ and $\tilde{B}^{\theta,V} := \GL(\{B^{\theta,[V_{<i} || v || \tilde{V}_{>i}]}\}_{v \in \{0,1\}^n})$. $\Red$ sends $\tilde{A}, \tilde{B}, \ket{\psi_{x,\theta,r,s}}$ to the challenger.
        \item The challenger samples $U \from D_{n}(x,r), V \from D_{n}(x,s)$ and measures $\tilde{A}^{\theta, U}$ and $\tilde{B}^{\theta, V}$ on $\ket{\psi_{x,\theta,r,s}}$, obtaining outcomes $x_A$ and $x_B$. The reduction wins if $x_A=x_B=x$.
    \end{enumerate}
    Suppose that
    \[
        \Pr[\RandExpt_{\Adv}(n,\secpar)=1] \ge \frac{1}{2} + \varepsilon.
    \]
    In $\RandExpt$, we denote the projections onto outcomes $a$ and $b$ by $A_a$ and $B_b$, respectively. Then,
    \begin{align*}
        & \frac{1}{2} + \varepsilon \\
        &\le \E_{\substack{x, \theta \from \{0,1\}^{10n+\secpar} \\ r,s \from \{0,1\}^n}} \bra{\psi_{x,\theta,r,s}} \left(\frac{\E_{U \from D_0(x,r)}[A_0^{\theta,U}]+\E_{U \from D_n(x,r)}[A_1^{\theta,U}]}{2}\right) \\
        & \hspace{4cm} \otimes \left(\frac{\E_{V \from D_0(x,s)}[B_0^{\theta,V}]+\E_{V \from D_n(x,s)}[B_1^{\theta,V}]}{2}\right) \ket{\psi_{x,\theta,r,s}} \\
        &= \E_{\substack{x, \theta \from \{0,1\}^{10n+\secpar} \\ r,s \from \{0,1\}^n}} \bra{\psi_{x,\theta,r,s}} \left(\frac{1 + \E_{U \from D_0(x,r)}[A_0^{\theta,U}]-\E_{U \from D_n(x,r)}[A_0^{\theta,U}]}{2}\right) \\
        & \hspace{4cm} \otimes \left(\frac{1 + \E_{V \from D_0(x,s)}[B_0^{\theta,V}]-\E_{V \from D_n(x,s)}[B_0^{\theta,V}]}{2}\right) \ket{\psi_{x,\theta,r,s}} \\
        &\le \E_{\substack{x, \theta \from \{0,1\}^{10n+\secpar} \\ r,s \from \{0,1\}^n}} \bra{\psi_{x,\theta,r,s}} \left(\frac{1 + \abs{\E_{U \from D_0(x,r)}[A_0^{\theta,U}]-\E_{U \from D_n(x,r)}[A_0^{\theta,U}]}}{2}\right) \\
        & \hspace{4cm} \otimes \left(\frac{1 + \abs{\E_{V \from D_0(x,s)}[B_0^{\theta,V}]-\E_{V \from D_n(x,s)}[B_0^{\theta,V}]}}{2}\right) \ket{\psi_{x,\theta,r,s}}.
    \end{align*}
    By \Cref{lemma:simult}, we have
    \begin{align*}
        & \varepsilon^3 \le \\
        & \E_{\substack{x, \theta \from \{0,1\}^{10n+\secpar} \\ r,s \from \{0,1\}^n}} \bra{\psi_{x,\theta,r,s}} \abs{\E_{U \from D_0(x,r)}[A_0^{\theta,U}]-\E_{U \from D_n(x,r)}[A_0^{\theta,U}]} \otimes \abs{\E_{V \from D_0(x,s)}[B_0^{\theta,V}]-\E_{V \from D_n(x,s)}[B_0^{\theta,V}]} \ket{\psi_{x,\theta,r,s}}.
    \end{align*}
    Next we use a sort of ``hybrid argument'' to relate the \emph{operators} in the two games:
    \begin{align*}
        & \abs{\E_{U \from D_0(x,r)}[A_0^{\theta,U}]-\E_{U \from D_n(x,r)}[A_0^{\theta,U}]} \\
        &= \abs{\sum_{i=1}^n \left(\E_{U \from D_i(x,r)}[A_0^{\theta,U}]-\E_{U \from D_{i-1}(x,r)}[A_0^{\theta,U}]\right)} \\
        &= \abs{\sum_{i=1}^n \E_{\substack{U \from D_n(x,r) \\ \tilde{U} \from D_0(x,r)}} \left(\E_{u : u \cdot x = r_i}[A_0^{\theta,[U_{<i} || u || \tilde{U}_{>i}]}]-\E_{u}[A_0^{\theta,[U_{<i} || u || \tilde{U}_{>i}]}]\right)} \\
        &= \frac{1}{2} \abs{\sum_{i=1}^n \E_{\substack{U \from D_n(x,r) \\ \tilde{U} \from D_0(x,r)}} \left(\E_{u : u \cdot x = r_i}[A_0^{\theta,[U_{<i} || u || \tilde{U}_{>i}]}]-\E_{u : u \cdot x \ne r_i}[A_0^{\theta,[U_{<i} || u || \tilde{U}_{>i}]}]\right)} \\
        &\le \frac{1}{2} \sum_{i=1}^n \E_{\substack{U \from D_n(x,r) \\ \tilde{U} \from D_0(x,r)}} \abs{\E_{u : u \cdot x = r_i}[A_0^{\theta,[U_{<i} || u || \tilde{U}_{>i}]}]-\E_{u : u \cdot x \ne r_i}[A_0^{\theta,[U_{<i} || u || \tilde{U}_{>i}]}]} \\
        &= \frac{n}{2} \E_{\substack{i \from [n] \\ U \from D_n(x,r) \\ \tilde{U} \from D_0(x,r)}} \abs{2 \E_u A_{u \cdot x + r_i}^{\theta,[U_{<i} || u || \tilde{U}_{>i}]} - 1}.
    \end{align*}
    The above holds identically for the $B$ part. We are now ready to bound the probability that our reduction wins $\SearchExpt$. We begin by applying \Cref{lemma:bv}:
    \begin{align*}
        & \Pr[\SearchExpt_{\Red[\Adv]}(n,\secpar)=1] \\
        &= \E_{\substack{x, \theta \from \{0,1\}^{10n+\secpar} \\ r,s \from \{0,1\}^n}} 
 \E_{\substack{i,j \from [n] \\ U \from D_n(x,r), \tilde{U} \from D_0(x,r) \\ V \from D_n(x,s), \tilde{V} \from D_0(x,s)}} \norm{(2 \E_u A_{u \cdot x}^{\theta,[U_{<i} || u || \tilde{U}_{>i}]} -1) \otimes (2 \E_v B_{v \cdot x}^{\theta,[V_{<j} || v || \tilde{V}_{>j}]} - 1) \ket{\psi_{x,\theta,r,s}}}^2 \\
        &= \E_{\substack{x, \theta \from \{0,1\}^{10n+\secpar} \\ r,s \from \{0,1\}^n}} 
 \E_{\substack{i,j \from [n] \\ U \from D_n(x,r), \tilde{U} \from D_0(x,r) \\ V \from D_n(x,s), \tilde{V} \from D_0(x,s)}} \bra{\psi_{x,\theta,r,s}} \abs{2 \E_u A_{u \cdot x}^{\theta,[U_{<i} || u || \tilde{U}_{>i}]} -1}^2 \otimes \abs{2 \E_v B_{v \cdot x}^{\theta,[V_{<j} || v || \tilde{V}_{>j}]} - 1}^2 \ket{\psi_{x,\theta,r,s}}\\
        &= \E_{\substack{x, \theta \from \{0,1\}^{10n+\secpar} \\ r,s \from \{0,1\}^n}} 
 \E_{\substack{i,j \from [n] \\ U \from D_n(x,r), \tilde{U} \from D_0(x,r) \\ V \from D_n(x,s), \tilde{V} \from D_0(x,s)}} \bra{\psi_{x,\theta,r,s}} \abs{2 \E_u A_{u \cdot x+r_i}^{\theta,[U_{<i} || u || \tilde{U}_{>i}]} -1}^2 \otimes \abs{2 \E_v B_{v \cdot x+s_j}^{\theta,[V_{<j} || v || \tilde{V}_{>j}]} - 1}^2 \ket{\psi_{x,\theta,r,s}}
        \end{align*}
where the last line is because $\abs{2 \E_u A_{u \cdot x}^{\theta,[U_{<i} || u || \tilde{U}_{>i}]} -1} = \abs{2 \E_u A_{u \cdot x+1}^{\theta,[U_{<i} || u || \tilde{U}_{>i}]} -1}$, and similarly for the $B$ term. Finally, we have
\begin{align*}
    &\E_{\substack{x, \theta \from \{0,1\}^{10n+\secpar} \\ r,s \from \{0,1\}^n}} \E_{\substack{i,j \from [n] \\ U \from D_n(x,r), \tilde{U} \from D_0(x,r) \\ V \from D_n(x,s), \tilde{V} \from D_0(x,s)}} \bra{\psi_{x,\theta,r,s}} \abs{2 \E_u A_{u \cdot x+r_i}^{\theta,[U_{<i} || u || \tilde{U}_{>i}]} -1}^2 \otimes \abs{2 \E_v B_{v \cdot x+s_j}^{\theta,[V_{<j} || v || \tilde{V}_{>j}]} - 1}^2 \ket{\psi_{x,\theta,r,s}} \\
    &=\E_{\substack{x, \theta \from \{0,1\}^{10n+\secpar} \\ r,s \from \{0,1\}^n}} \E_{\substack{i,j \from [n] \\ U \from D_n(x,r), \tilde{U} \from D_0(x,r) \\ V \from D_n(x,s), \tilde{V} \from D_0(x,s)}} \left( \bra{\psi_{x,\theta,r,s}} \abs{2 \E_u A_{u \cdot x+r_i}^{\theta,[U_{<i} || u || \tilde{U}_{>i}]} -1} \otimes \abs{2 \E_v B_{v \cdot x+s_j}^{\theta,[V_{<j} || v || \tilde{V}_{>j}]} - 1}  \ket{\psi_{x,\theta,r,s}} \right)^2  \\
    &\ge \E_{\substack{x, \theta \from \{0,1\}^{10n+\secpar} \\ r,s \from \{0,1\}^n}} \bra{\psi_{x,\theta,r,s}} \left(\E_{\substack{i \from [n] \\ U \from D_n(x,r) \\ \tilde{U} \from D_0(x,r)}} \abs{2 \E_u A_{u \cdot x+r_i}^{\theta,[U_{<i} || u || \tilde{U}_{>i}]} -1}\right) \otimes \left(\E_{\substack{j \from [n] \\ V \from D_n(x,s) \\ \tilde{V} \from D_0(x,s)}} \abs{2 \E_v B_{v \cdot x+s_j}^{\theta,[V_{<j} || v || \tilde{V}_{>j}]} -1}\right) \ket{\psi_{x,\theta,r,s}}^2 \\
    &\ge \frac{4}{n^2} \E_{\substack{x, \theta \from \{0,1\}^{10n+\secpar} \\ r,s \from \{0,1\}^n}} \bra{\psi_{x,\theta,r,s}} \abs{\E_{U \from D_0(x,r)}[A_0^{\theta,U}]-\E_{U \from D_n(x,r)}[A_0^{\theta,U}]} \otimes \abs{\E_{V \from D_0(x,s)}[B_0^{\theta,V}]-\E_{V \from D_n(x,s)}[B_0^{\theta,V}]} \ket{\psi_{x,\theta,r,s}}^2 \\
    &\ge \frac{4 \varepsilon^6}{n^2}.
\end{align*}
where the first inequality is by convexity. By \Cref{corollary:tfkw-mod}, this quantity is at most $\left(\frac{1}{2} + \frac{1}{2\sqrt{2}}\right)^\secpar$. Therefore, $\varepsilon \le n^{1/3} \cdot 2^{-\Omega(\lambda)}$.
\end{proof}

\subsection{Coupled unclonable encryption}
\label{subsec:cue}
In this subsection, we introduce \emph{coupled unclonable encryption} (cUE). It is similar to UE, except that it involves the simultaneous encryption of two messages $m_A$ and $m_B$ under two secret keys $\sk_A$ and $\sk_B$. Informally, security for cUE says that when a pirate processes the ciphertext into two parts, one given to Alice and the other to Bob, then after receiving $\sk_A$ and $\sk_B$ it is not possible for both of Alice and Bob to simultaneously recover any information about their respective messages $m_A$ and $m_B$. While cUE is weaker than UE, we are able to make use of it in \Cref{sec:cp} as a central primitive in our proofs that $\qsiO$ copy-protects puncturable programs.

Before introducing cUE in \Cref{def:cue}, we recall the definition of UE in \Cref{def:ue}.

\begin{figure}
\pcb{
    \textbf{Challenger} \< \< \textbf{Adversary} \\[][\hline]
    \< \< \\[-0.5\baselineskip]
    \sk \from \{0,1\}^\secpar \<\< m, \tau \from \Adv(\secparam) \\
    \< \sendmessageleft*{m} \< \\
    m^0 \from \{0,1\}^{\abs{m}} \<\< \\
    m^1 := m \<\< \\
    c \from \{0,1\} \<\< \\
    \sigma := \Enc(\sk; m^c) \<\< \\
    \< \sendmessageright*{\sigma} \< \\
    \<\< (A, B, \rho_{\calA,\calB}) \from \Adv(\sigma, \tau) \\
    \< \sendmessageleft*{A, B, \rho_{\calA,\calB}} \< \\
    (a', b') \from A^{\sk}\otimes B^{\sk}\,\rho_{\calA \calB} \<\< \\
    \textnormal{Output 1 if $a'=b'=c$; otherwise output 0} \<\<
}
\caption{$\UEExpt_{\enc,\Adv}(\secpar)$. The challenger samples a secret encryption key $\sk$, while the adversary decides on a message $m$ and sends it to the challenger. The resulting internal state of the adversary is $\tau$, which will be provided to the next part of the adversary. The challenger samples a fresh random message $m^0$, sets $m^1 := m$, and encrypts $m^c$ for $c \from \{0,1\}$ using $\sk$. The challenger sends the encryption $\sigma$ to the adversary, who maps this to a state $\rho_{\calA \calB}$ on the two registers $\calA, \calB$ and returns $\rho_{\calA \calB}$ to the challenger, together with descriptions of (families of) quantum circuits $A$ and $B$ on $\calA$ and $\calB$, respectively, indexed by keys. The challenger runs $A^\sk$ and $B^\sk$ on $\rho_{\calA \calB}$, obtaining outcomes $a'$ and $b'$. The adversary wins if $a'=b'=c$.}\label{game:UE}
\end{figure}

\begin{definition} \label{def:ue}
    A pair\footnote{Note that an alternative definition might involve a third algorithm for generating keys. However, we require that the keys are sampled uniformly at random as this will make our $\qsiO$ applications simpler.} of efficient quantum algorithms $(\enc,\dec)$ is an \emph{unclonable encryption (UE) scheme} if it satisfies the following conditions, for all $\lambda, n \in \mathbb{N}$:
    \begin{itemize}
        \item (Correctness) For all $\sk \in \{0,1\}^\lambda$ and $m \in \{0,1\}^n$,
        \begin{align*}
            \dec(\sk; \enc(\sk; m)) &\to m
        \end{align*}
        \item (Security) For all polynomial-time adversaries $\Adv$,
        \[
            \Pr[\emph{\UEExpt}_{\enc,\Adv}(\secpar) = 1] \le \frac{1}{2} + \negl \,,
        \]
        where $\emph{\UEExpt}_{\enc,\Adv}(\secpar)$ is defined in \Cref{game:UE}.
    \end{itemize}
\end{definition}

Let $\{\PRG_\secpar\}_{\secpar \in \mathbb{N}}$ be a family of pseudo-random generators with 1 bit of stretch. For $n > \secpar$, define $\PRG_{\secpar, n} : \{0,1\}^\secpar \to \{0,1\}^n$ to be $\PRG_{n-1} \circ \cdots \circ \PRG_{\secpar+1} \circ \PRG_{\secpar}$, the $(n-\secpar)$-fold composition of $\PRG$. For $n \le \secpar$, define $\PRG_{\secpar, n} : \{0,1\}^\secpar \to \{0,1\}^n$ to be the restriction to the first $n$ coordinates of the input. Then $\{\PRG_{\secpar,n}\}_{\secpar, n \in \mathbb{N}}$ is a family of pseudo-random generators of arbitrary stretch. The following is a natural candidate UE scheme:

\indent $\enc(\sk; m)$:
\begin{enumerate}
    \item Parse $\sk$ as $(\theta, U) \in \{0,1\}^{11 \secpar'} \times \{0,1\}^{\secpar' \times 11\secpar'}$, where $\secpar'$ is the largest integer such that $11 (\secpar')^2 + 11 \secpar' \le \secpar$.
    \item Sample a random string $x \from \{0,1\}^{11 \secpar'}$.
    \item Output $(\ket*{x^\theta}, m \oplus \PRG_{\secpar',\abs{m}}(Ux))$, where
    \begin{itemize}
        \item $Ux$ denotes the matrix-vector product over $\mathbb{F}_2$, and
        \item $\ket*{x^\theta}$ is $H^\theta \ket*{x}$, where $H^\theta$ denotes Hadamard gates applied to the qubits where the corresponding bit in $\theta$ is 1.
    \end{itemize}
\end{enumerate}

The decryption algorithm simply reads $x$ (since $\sk$ includes $\theta$) and computes
\[
    \Big(m \oplus \PRG_{\secpar',\abs{m}}(Ux)\Big) \oplus \PRG_{\secpar',\abs{m}}(Ux) = m.
\]
Note that we require the pseudo-random generator because, in our definition, the secret key is of a fixed length $\secpar$ whereas the length of the message is determined by the adversary. If we were satisfied with encrypting fixed-length messages (with a secret key that grows with the message length), then we could build cUE unconditionally.


\begin{figure}
\pcb{
    \textbf{Challenger} \< \< \textbf{Adversary} \\[][\hline]
    \< \< \\[-0.5\baselineskip]
    \sk_A, \sk_B \from \{0,1\}^\secpar \<\< (m_A, m_B, \tau) \from \Adv(\secparam) \\
    \< \sendmessageleft*{m_A, m_B} \< \\
    m_A^0 \from \{0,1\}^{\abs{m_A}},\ m_B^0 \from \{0,1\}^{\abs{m_B}} \<\< \\
    m_A^1 := m_A,\ m_B^1 := m_B \<\< \\
    a, b \from \{0,1\} \<\< \\
    \sigma := \Enc(\sk_A,\sk_B; m_A^a, m_B^b) \<\< \\
    \< \sendmessageright*{\sigma} \< \\
    \<\< (A, B, \rho_{\calA,\calB}) \from \Adv(\sigma, \tau) \\
    \< \sendmessageleft*{A, B, \rho_{\calA,\calB}} \< \\
    (a',b') \from A^{\sk_A} \otimes B^{\sk_B} \,\rho_{\calA,\calB} \<\< \\
    \textnormal{Output 1 if $a'=a$ and $b'=b$; otherwise output 0} \<\<
}
\caption{$\cUEExpt_{\enc,\Adv}(\secpar)$. The challenger samples encryption keys $\sk_A, \sk_B \from \{0,1\}^\secpar$. The adversary outputs messages $m_A, m_B$; its internal state $\tau$ will be used later. The challenger generates messages $m_A^0$, $m_B^0$, sets $m_A^1:=m_A$, $m_B^1:=m_B$, and randomly decides bits $a$, $b$. It encrypts $m_A^a, m_B^b$ with $\sk_A, \sk_B$ into $\sigma$ and sends it to the adversary. The adversary, with state $\sigma, \tau$, generates circuit descriptions $A$, $B$, and a state $\rho_{\calA,\calB}$, and sends them to the challenger. The challenger applies $A^{\sk_A}$ to $\rho_{\calA}$, giving $a'$, and $B^{\sk_B}$ to $\rho_{\calB}$, giving $b'$. The adversary wins if $a'=a$ and $b'=b$.} \label{game:cUE}
\end{figure}

\begin{definition} \label{def:cue}
    A pair of efficient quantum algorithms $(\enc,\dec)$ is a \emph{coupled unclonable encryption (cUE) scheme} if it satisfies the following conditions, for all $\lambda, n_A, n_B \in \mathbb{N}$:
    \begin{itemize}
        \item (Correctness) For all $\sk_A,\sk_B \in \{0,1\}^\lambda$ and $m_A \in \{0,1\}^{n_A}, m_B \in \{0,1\}^{n_B}$,
        \begin{align*}
            \dec(0, \sk_A; \enc(\sk_A, \sk_B; m_A, m_B)) &\to m_A \textnormal{ and} \\
            \dec(1, \sk_B; \enc(\sk_A, \sk_B; m_A, m_B)) &\to m_B.
        \end{align*}
        \item (Security) For all polynomial-time adversaries $\Adv$,
        \[
            \Pr[\emph{\cUEExpt}_{\enc,\Adv}(\secpar) = 1] \le \frac{1}{2} + \negl.
        \]
    \end{itemize}
\end{definition}

\begin{remark}
The reader may be wondering whether the security guarantees of UE or cUE imply standard CPA security. For UE, it is straightforward to see that \Cref{def:ue} implies CPA security: An adversary breaking CPA encryption can be used in the UE game to recover a guess for the challenge bit $c$. Then the UE adversary can simply set $A$ and $B$ to be families of circuits that always output $c$. On the other hand, for cUE (\Cref{def:cue}) the natural reduction implies that no adversary can simultaneously guess both challenges --- leaving open the possibility that the adversary can guess one of the challenges. It is therefore not clear whether cUE security implies CPA security for each message separately.
\end{remark}

\Cref{theorem:rand-indep} about unclonable randomness gets us most of the way towards building cUE. The natural approach to construct cUE is to use the unclonable randomness as a one-time pad for the adversary's chosen message. However, there are two small technical issues. First, in cUE the keys $\sk_A$ and $\sk_B$ must be sampled independently, but the keys $(\theta, U)$ and $(\theta, V)$ in the unclonable randomness game cannot be sampled independently because they both contain $\theta$. Second, the length of the message is determined by the adversary in the cUE game, whereas unclonable randomness has a fixed-length message as a function of $\secpar$. Therefore, our cUE scheme is slightly more complex than our unclonable randomness scheme, and additionally uses a pseudorandom generator.

We note that the matrix $T$ in \Cref{const:cUE} just serves to make the keys $\sk_A, \sk_B$ independent. If we were satisfied with the keys being partly identical (on $\theta$) and partly independent (on $U, V$), then we would not need $T$.

\begin{const}
\label{const:cUE}
Let $\{\PRG_{\secpar,n} : \{0,1\}^\secpar \to \{0,1\}^n\}_{\secpar,n \in \mathbb{N}}$ be a family of pseudorandom generators. Define $\enc$ and $\dec$ as follows.

\begin{itemize}
\item $\enc(\sk_A, \sk_B; m_A, m_B)$:
\indent 
    \begin{enumerate}
        \item Let $\secpar'$ be the largest integer such that $11 (\secpar')^2 + 11 \secpar' + 1 \le \secpar$. We parse the secret keys as $\sk_A = (\theta_A, U, -)$ and $\sk_B = (\theta_B, V, -)$, where $U, V \in \{0,1\}^{\secpar' \times (11\secpar')}$ and $\theta_A, \theta_B \in \{0,1\}^{11\secpar'+1}$. If there are any leftover bits in $\sk_A, \sk_B$ we discard them.
        \item Sample $x \from \{0,1\}^{11\secpar'}$ and $T \from \{0,1\}^{(11\secpar') \times (11\secpar'+1)}$ conditioned on $T \theta_A = T \theta_B$ and $\rank(T) = 11\secpar'$. Define $\theta := T \theta_A = T \theta_B$.
        \item Output $\ket*{x^\theta}, T, m_A \oplus \PRG_{\secpar',\abs{m_A}}(Ux), m_B \oplus \PRG_{\secpar',\abs{m_B}}(Vx)$.
    \end{enumerate}
\item $\dec(p, \sk; c)$: Parse $\sk$ as $\sk = (\theta', W, -)$ (again discarding any leftover bits). Parse $c$ as $c = (\ket{\psi}, T, c_0, c_1)$. Compute $\theta = T \theta'$. Apply $H^{\theta}$ to $\ket{\psi}$, and then measure in the standard basis to obtain an outcome $x$. Output $c_p \oplus \PRG_{\secpar',\abs{c_p}}(W x)$.
\end{itemize}   
\end{const}

\begin{theorem} \label{theorem:coupled-ue}
One-way functions imply the existence of cUE. In particular, Construction 2 is cUE.
\end{theorem}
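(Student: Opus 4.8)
The plan is to verify the two requirements of \Cref{def:cue}, correctness and security, for the construction above. Throughout, write $\secpar'$ for the parameter defined there, so $\secpar' = \Theta(\sqrt\secpar)$ and hence $\poly(\secpar')\cdot 2^{-\Omega(\secpar')} = \negl(\secpar)$; recall also that a one-way function yields a family of one-bit-stretch PRGs, and that $\PRG_{\secpar',n}$ (a $\poly(\secpar)$-fold composition of these, for $n = \poly(\secpar)$) is again a secure PRG by a routine hybrid over the composition, so the construction is instantiable from one-way functions alone. For correctness I would first note that the conditioning in $\enc$ is over a nonempty set: if $\theta_A = \theta_B$ any full-rank $T \in \{0,1\}^{(11\secpar')\times(11\secpar'+1)}$ works, and otherwise the rank constraint forces $\ker T$ to be one-dimensional, so one may take any $T$ with $\ker T = \operatorname{span}(\theta_A - \theta_B)$. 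Given this, decryption is immediate: with $\sk_A = (\theta_A, U, -)$ one computes $\theta = T\theta_A$, applies $H^\theta$ to $\ket*{x^\theta} = H^\theta\ket*{x}$ to recover $x$, and obtains $m_A = c_0 \oplus \PRG_{\secpar',\abs{c_0}}(Ux)$; the $\sk_B$ case is symmetric, using $T\theta_B = T\theta_A$.

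For security I would fix a QPT adversary $\Adv$ for $\cUEExpt$, choosing messages $m_A, m_B$ of polynomial length, and pass to a modified game \textbf{Hybrid 1} in which, when the challenge bit $a = 0$, the ciphertext component $c_0$ is $m_A \oplus \PRG_{\secpar',\abs{m_A}}(\tilde r_A)$ for a fresh uniform $\tilde r_A \from \{0,1\}^{\secpar'}$ (in $\cUEExpt$ itself $c_0$ is simply a fresh uniform string, since $m_A^0$ is), and symmetrically $c_1 = m_B \oplus \PRG_{\secpar',\abs{m_B}}(\tilde s_B)$ when $b = 0$. Because $\tilde r_A$ and $\tilde s_B$ are never revealed to $\Adv$, PRG security --- invoked through a QPT distinguisher that simulates the entire experiment and reports whether $\Adv$ wins, once for the $a=0$ branch and once for the $b=0$ branch --- shows $\Adv$'s success probability changes by at most $\negl$. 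In Hybrid 1 the component $c_0$ now has the form $m_A \oplus \PRG_{\secpar',\abs{m_A}}(\xi_A)$ with $\xi_A = Ux$ if $a = 1$ and $\xi_A$ a fresh uniform $\secpar'$-bit string if $a = 0$ --- precisely the string $m_A \oplus \PRG_{\secpar',\abs{m_A}}(r^a)$ that appears in $\RandExpt$ (\Cref{game:rand}), and likewise for $c_1$.

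The final step is a reduction $\Red[\Adv]$ playing $\RandExpt_{\Red[\Adv]}(\secpar',\secpar')$, whose challenger --- by this choice of parameters --- hands out $\ket*{x^\theta}$ with $x, \theta \in \{0,1\}^{11\secpar'}$, matrices $U, V \in \{0,1\}^{\secpar' \times 11\secpar'}$, and $r^1 = Ux,\ s^1 = Vx \in \{0,1\}^{\secpar'}$, matching the dimensions in the construction. On input $\ket*{x^\theta}, r^a, s^b$ the reduction runs $\Adv$'s first stage for $(m_A, m_B, \tau)$, samples $T$ uniformly among rank-$11\secpar'$ matrices, forms $\sigma = (\ket*{x^\theta},\,T,\,m_A \oplus \PRG_{\secpar',\abs{m_A}}(r^a),\,m_B \oplus \PRG_{\secpar',\abs{m_B}}(s^b))$, runs $\Adv$'s second stage for $(A^{\mathrm{cUE}}, B^{\mathrm{cUE}}, \rho_{\calA,\calB})$, and returns to the challenger $\rho_{\calA,\calB}$ together with the family $A$ that --- on index $(\theta, U)$ --- picks any preimage $\theta_A$ of $\theta$ under $T$ (one exists since $T$ is surjective, and $A^{\mathrm{cUE}}$ uses $\theta_A$ only through $T\theta_A = \theta$, so the choice is immaterial) and runs $A^{\mathrm{cUE}}$ on key $(\theta_A, U, -)$, and symmetrically $B$. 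It then remains to check faithfulness by a short combinatorial computation: in the real construction the marginal of $T$ is exactly uniform over rank-$11\secpar'$ matrices, $(\theta_A, T)$ are independent and uniform with $\theta := T\theta_A$, and $\theta_B \mid (\theta, T)$ is uniform over $T^{-1}(\theta)$ independently of $\theta_A$; hence the view $\Red$ presents to $\Adv$, and the subsequent execution of $A^{\mathrm{cUE}} \otimes B^{\mathrm{cUE}}$ under keys $(\theta_A, U), (\theta_B, V)$, is distributed exactly as in Hybrid 1 (using that $r^a, s^b$ were produced with the very $U, V, x$ the challenger later feeds the $A, B$ families). The winning conditions coincide, so $\Pr[\Red \text{ wins}] = \Pr[\Adv \text{ wins in Hybrid } 1]$, and \Cref{theorem:rand-indep} yields $\Pr[\Adv \text{ wins } \cUEExpt(\secpar)] \le \tfrac12 + \negl(\secpar)$. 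The step I expect to need the most care is this last bookkeeping around $T$: the reduction never learns $\theta$, so it must commit to $T$ blindly and push the choice of $\theta_A, \theta_B$ into the $A, B$ circuits that do receive $\theta$, and one must confirm via the distributional identities above that this reproduces the construction's key distribution with no slack. Everything else --- correctness and the PRG hybrids --- is routine, and the genuinely hard content is already packaged in \Cref{theorem:rand-indep}.
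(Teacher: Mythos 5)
Your overall structure matches the paper: correctness is verified directly, and security is reduced to $\RandExpt$ via a PRG hybrid and a simulation that commits to $T$ blindly while the circuit families $\tilde A,\tilde B$ reconstruct $\theta_A,\theta_B$ from the $\theta$ they eventually receive. The correctness argument and the PRG hybrid are fine. But the crucial last step --- exactly the one you flagged as needing care --- contains a genuine error.

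You justify letting $\tilde A^{\theta,U}$ ``pick any preimage $\theta_A$ of $\theta$ under $T$'' by asserting that ``$A^{\mathrm{cUE}}$ uses $\theta_A$ only through $T\theta_A = \theta$, so the choice is immaterial.'' This is false: $A^{\mathrm{cUE}}$ is an arbitrary quantum circuit chosen by the adversary, indexed by the \emph{entire} key $\sk_A = (\theta_A,U,-)$, and the adversary already holds $T$ from the ciphertext. Nothing prevents $A^{\mathrm{cUE}}$ from, say, computing the nonzero kernel vector $v$ of $T$, comparing $\theta_A$ to $\theta_A+v$, and behaving differently depending on which preimage it was handed. You also claim that in the construction ``$\theta_B \mid (\theta,T)$ is uniform over $T^{-1}(\theta)$ independently of $\theta_A$,'' which is likewise wrong: conditioned on $(\theta,T)$, the event $\theta_A = \theta_B$ has negligible probability, so knowing $\theta_A$ determines $\theta_B$ essentially completely --- they are strongly anti-correlated, not independent. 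Consequently, a simulation in which $\tilde A$ and $\tilde B$ each pick ``any'' preimage --- deterministically, or with independent internal randomness --- does not reproduce the correct joint law of $(\sk_A,\sk_B)$: it either produces $\theta_A = \theta_B$ always (wrong with overwhelming probability) or a fixed ordering of the two preimages (instead of a uniformly random ordering), and an adversary who conditions on which preimage it sees can make the winning probabilities diverge. The fix, which the paper's proof implements, is for the reduction to sample a single shared random bit $d$ \emph{before} producing the circuit descriptions, and to define $\tilde A_d$ and $\tilde B_d$ so that $\tilde A_d$ selects the preimage whose bit at the first differing coordinate equals $d$ while $\tilde B_d$ selects the other; this ties the two choices together and matches Hybrid~1 up to the negligible $\theta_A=\theta_B$ event. (A minor related point: the padding bits ``$-$'' are also part of $\sk_A,\sk_B$ and must be sampled uniformly inside $\tilde A,\tilde B$, since the adversary's circuits may depend on them; the paper handles this explicitly.)
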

\begin{proof}
We reduce security of Construction \ref{const:cUE} to security of unclonable randomness (recall that the latter is defined via $\RandExpt$ from Figure \ref{game:rand}).
    Let $\Adv$ be an efficient adversary for the security experiment $\cUEExpt$ for cUE, and let $\secpar$ be a security parameter. Let $\secpar'$ be the largest integer such that $11 (\secpar')^2 + 11 \secpar' + 1 \le \secpar$. We define an adversary $\Red[\Adv]$ for $\RandExpt(\secpar',\secpar')$ as follows.

    \indent $\RandExpt_{\Red[\Adv]}(\secpar', \secpar')$:
    \begin{enumerate}
        \item The challenger samples $a,b \from \{0,1\}$, $x, \theta \from \{0,1\}^{11\secpar'}$, $r^0, s^0 \from \{0,1\}^{\secpar'}$, and $U,V \from \{0,1\}^{\secpar' \times (11\secpar')}$. Let $r^1 := Ux$ and $s^1 := Vx$.
        \item The challenger sends $\ket*{x^\theta}, r^a, s^b$ to $\Red$.
        \item $\Red$ samples messages $(m_A, m_B) \from \Adv(\secparam)$ and a uniformly random rank-$(11\secpar')$ matrix $T$ from $\{0,1\}^{(11\secpar') \times (11\secpar'+1)}$.
        \item $\Red$ runs $(A, B, \rho_{\calA,\calB}) \from \Adv(\ket*{x^\theta}, T, m_A \oplus \PRG_{\secpar',\abs{m_A}}(r^a), m_B \oplus \PRG_{\secpar',\abs{m_B}}(s^b))$.
        \item For $d \in \{0,1\}, \theta' \in \{0,1\}^{11\secpar'}, U' \in \{0,1\}^{\secpar' \times 11\secpar'}$, define the circuit $\tilde{A}^{\theta',U'}_d$ as follows:
        \begin{enumerate}
            \item Let $\theta_A, \theta_B \in \{0,1\}^{11\secpar'+1}$ be the two vectors such that $T \theta_A = T \theta_B = \theta'$. Let $\theta_A$ be whichever vector has the first differing bit between $\theta_A$ and $\theta_B$ equal to $d$; similarly let $\theta_B$ be the vector which has $1-d$ at the first differing location.
            \item Let $\sk_A := (\theta_A, U', \texttt{pad})$ where $\texttt{pad}$ is a random string of length $\lambda - 11 (\secpar')^2 - 11 \secpar' - 1$.
            \item Return the output of running $A^{\sk_A}$ on the input state.
        \end{enumerate}
        Define $\tilde{B}^{\theta',V'}_d$ similarly.
        \item $\Red$ samples $d \from \{0,1\}$ and sends $\rho_{\calA,\calB}$ and $\tilde{A}_d$, $\tilde{B}_d$ to the challenger.
        \item The challenger measures $\tilde{A}_d^{\theta, U}$ and $\tilde{B}_d^{\theta, V}$ on $\rho_{\calA,\calB}$, obtaining outcomes $a'$ and $b'$. The reduction wins if $a'=a$ and $b'=b$.
    \end{enumerate}
    The view of $\Adv$ and $A,B$ in $\RandExpt_{\Red[\Adv]}(\secpar',\secpar')$ is computationally indistinguishable from that in $\cUEExpt_{\enc,\Adv}(\secpar)$ by security of the PRG. Therefore,
    \begin{align*}
        \Pr[\cUEExpt_{\enc,\Adv}(\secpar)=1] &\le \Pr[\RandExpt_{\Red[\Adv]}(\secpar', \secpar')=1] + \negl \\
        &\le \frac{1}{2} + \negl
    \end{align*}
    where we have invoked \Cref{theorem:rand-indep} for the second inequality.
\end{proof}

A direct inspection of the proof of \Cref{theorem:coupled-ue} gives the following.
\begin{corollary}
cUE exists unconditionally, for messages of fixed length.
\end{corollary}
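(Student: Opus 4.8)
The plan is to reread the proof of \Cref{theorem:coupled-ue} and observe that the pseudorandom generator is the only ingredient there that requires a computational assumption: it is used solely to stretch the one-time pads $Ux$ and $Vx$ --- whose length is a fixed function of $\secpar$ --- up to the adversary-chosen message length. If the message lengths $n_A, n_B$ are instead fixed constants that are part of the scheme, this stretching is unnecessary and one can XOR $Ux$ and $Vx$ directly onto $m_A$ and $m_B$.

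Concretely, I would keep \Cref{const:cUE} verbatim except that $\PRG_{\secpar',\abs{m}}(\cdot)$ is replaced by truncation to $\abs{m}$ bits, with $U, V$ taken to have $n := \max(n_A, n_B)$ rows, and with $\secpar' = \Theta(\secpar)$ chosen so that the keys still fit: $\sk_A = (\theta_A, U, -)$ and $\sk_B = (\theta_B, V, -)$ with $\theta_A, \theta_B \in \{0,1\}^{10n+\secpar'+1}$ and $U, V \in \{0,1\}^{n \times (10n+\secpar')}$. The matrix $T$ sampled during encryption (full rank, conditioned on $T\theta_A = T\theta_B$) still serves only to decouple $\sk_A$ from $\sk_B$, and its sampling is purely information-theoretic. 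Correctness is immediate: the decryptor recovers $\theta = T\theta'$, measures $\ket{x^\theta}$ to get $x$, and removes the pad $Wx$.

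For security I would invoke the same reduction $\Red[\Adv]$ from the proof of \Cref{theorem:coupled-ue}, now playing $\RandExpt_{\Red[\Adv]}(n,\secpar')$. The only --- but crucial --- change is that, absent the PRG, the reduction's simulation of $\cUEExpt$ becomes \emph{exact} rather than merely computationally close: on input $(\ket{x^\theta}, r^a, s^b)$ and messages $(m_A, m_B)$, the reduction forwards $(\ket{x^\theta}, T, m_A \oplus r^a, m_B \oplus s^b)$, which for $a=1$ is exactly $m_A \oplus Ux$ and for $a=0$ is $m_A \oplus r^0$ with $r^0$ uniform --- identically distributed to $m_A^0 \oplus Ux$ for uniform $m_A^0$, which is what the cUE challenger sends when its bit is $0$ (and likewise on the $B$-side). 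Since $\ket{x^\theta}$ is independent of $(a,b)$ in both games, and the wrappers $\tilde A_d, \tilde B_d$ that rebuild $\sk_A, \sk_B$ from the $\RandExpt$ keys are constructed exactly as before, we obtain $\Pr[\cUEExpt_{\enc,\Adv}(\secpar)=1] = \Pr[\RandExpt_{\Red[\Adv]}(n,\secpar')=1]$. By \Cref{theorem:rand-indep} the right-hand side is at most $\frac{1}{2} + \poly(n)\cdot 2^{-\Omega(\secpar')}$ \emph{even against computationally unbounded adversaries}, which is negligible in $\secpar$ for constant $n$; hence the scheme is cUE-secure with no computational assumption.

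I do not anticipate a genuine obstacle --- the corollary really is ``a direct inspection'' --- so the only thing needing care is the bookkeeping: checking that fixing $n$ to a constant still leaves room for $\secpar' = \Theta(\secpar)$ (it does, since the key length is linear in $\secpar'$ for fixed $n$), that truncating the uniform pad $Ux$ to $\abs{m_A}$ bits leaves it uniform, and that the disambiguation of the two $T$-preimages $\theta_A, \theta_B$ of $\theta$ via the index bit $d$ goes through as in \Cref{theorem:coupled-ue}. Everything else transfers line for line, with the single substitution of ``by security of the PRG'' by ``the two distributions are literally identical'' --- which is exactly what upgrades the conclusion from computational to unconditional.
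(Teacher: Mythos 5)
Your proposal is correct and takes essentially the same approach as the paper: drop the PRG, use $Ux$ and $Vx$ directly as one-time pads of the appropriate (now fixed) length, and observe that the reduction to $\RandExpt$ in the proof of \Cref{theorem:coupled-ue} becomes an exact simulation, so security follows unconditionally from \Cref{theorem:rand-indep}. The paper states this in three sentences; you have merely spelled out the bookkeeping it leaves implicit.
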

\begin{proof}
The construction is identical to \Cref{const:cUE}, except that $Ux$ and $Vx$ are used directly as one-time pads, without first applying a PRG. Since the messages are of fixed length, one can sample $U$ and $V$ of the appropriate size. The security reduction is analogous to that for \Cref{theorem:coupled-ue}.
\end{proof}

\subsection{Key testing} \label{subsec:key-testing}
For our applications it will be important that our UE and cUE schemes have an additional property that we call \emph{key testing}. This states that there should exist an efficient algorithm $\test$ that determines whether a given secret key is ``correct'' for a given encryption.

\begin{definition}
    A triple of efficient quantum algorithms $(\enc,\dec,\test)$ is an \emph{unclonable encryption scheme with key testing} if it satisfies the following conditions, for all $\lambda, n \in \mathbb{N}$:
    \begin{itemize}
        \item (Correctness) For all $\sk \in \{0,1\}^\lambda$ and $m \in \{0,1\}^n$,
        \begin{align*}
            \dec(\sk; \enc(\sk; m)) &\to m.
        \end{align*}
        \item (Security) For all polynomial-time adversaries $\Adv$,
        \[
            \Pr[\emph{\UEExpt}_{\enc,\Adv}(\secpar)=1] \le \frac{1}{2} + \negl.
        \]
        \item (Key testing) For all $\sk,\sk' \in \{0,1\}^\lambda$ and $m \in \{0,1\}^n$,
        \begin{align*}
            \test(\sk'; \enc(\sk; m)) &\to \delta_{\sk}(\sk') \,,
        \end{align*}
        where $\delta_{\sk}$ is the indicator function that is $1$ only at $\sk$.
    \end{itemize}
\end{definition}

\begin{definition}
    A triple of efficient quantum algorithms $(\enc,\dec,\test)$ is a \emph{coupled unclonable encryption scheme with key testing} if it satisfies the following conditions, for all $\lambda, n_A, n_B \in \mathbb{N}$:
    \begin{itemize}
        \item (Correctness) For all $\sk_A, \sk_B \in \{0,1\}^\lambda$ and $m_A \in \{0,1\}^{n_A}, m_B \in \{0,1\}^{n_B}$,
        \begin{align*}
            \dec(0, \sk_A; \enc(\sk_A, \sk_B; m_A, m_B)) &\to m_A \textnormal{ and} \\
            \dec(1, \sk_B; \enc(\sk_A, \sk_B; m_A, m_B)) &\to m_B.
        \end{align*}
        \item (Security) For all polynomial-time adversaries $\Adv$,
        \[
            \Pr[\emph{\cUEExpt}_{\enc,\Adv}(\secpar)=1] \le \frac{1}{2} + \negl.
        \]
        \item (Key testing) For all $\sk_A, \sk_B,\sk' \in \{0,1\}^\lambda$ and $m_A \in \{0,1\}^{n_A}, m_B \in \{0,1\}^{n_B}$,
        \begin{align*}
            \test(\sk'; \enc(\sk_A, \sk_B; m_A, m_B)) &\to \begin{cases}
                0, & \sk' = \sk_A \\
                1, & \sk' = \sk_B \\
                \bot, & \sk' \not\in \{\sk_A,\sk_B\}
            \end{cases} \,.
        \end{align*}
    \end{itemize}
\end{definition}


We can add key testing to any (coupled) unclonable encryption scheme using $\qsiO$ and injective one-way functions. The same construction and proof also work with classical indistinguishability obfuscation in place of $\qsiO$, but we only state the result for $\qsiO$ because all of our applications use it.\footnote{Note that $\qsiO$ does \emph{not} trivially imply $\iO$ since $\qsiO$ outputs a quantum implementation.} The main idea to upgrade a UE or cUE scheme to one with key testing is to append to the ciphertext a $\qsiO$ obfuscation of the program $\delta_{\sk}$ (which is zero everywhere except at $\sk$). Intuitively, this allows one to test the validity of a secret key, while at the same time preserving unclonability thanks to the properties of $\qsiO$.

\begin{theorem} \label{theorem:key-testing}
    If injective one-way functions and $\qsiO$ exist, then any UE or cUE scheme can be compiled into one with key testing.
\end{theorem}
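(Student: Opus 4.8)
The plan is to describe a generic compiler that, given a (c)UE scheme $(\enc_0, \dec_0)$ and an injective one-way function $g$, appends to every ciphertext a $\qsiO$ obfuscation of a point-function-like program that lets one check a candidate key while revealing nothing useful. I will present the construction for cUE; the UE case is a simpler special case with a single key. On encryption, the new algorithm $\enc$ runs $\enc_0(\sk_A,\sk_B; m_A, m_B)$ to get a ciphertext $\sigma_0$, and additionally outputs $\qsiO(P_{\sk_A,\sk_B})$, where $P_{\sk_A,\sk_B}$ is the classical circuit that on input $z$ outputs $0$ if $g(z) = g(\sk_A)$, outputs $1$ if $g(z) = g(\sk_B)$, and outputs $\bot$ otherwise (ties broken arbitrarily if $\sk_A = \sk_B$, or one can simply forbid that measure-zero event). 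Since $g$ is injective, $g(z) = g(\sk)$ iff $z = \sk$, so this program exactly computes the key-testing function, giving correctness of $\test$ by the correctness of $\qsiO$. Decryption ignores the obfuscated program and runs $\dec_0$, so decryption correctness is inherited unchanged.

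The heart of the argument is security. Here the key observation is that the program $P_{\sk_A,\sk_B}$ is functionally determined \emph{only by the images} $g(\sk_A), g(\sk_B)$, not by the preimages. So, given just $(g(\sk_A), g(\sk_B))$, one can efficiently construct a circuit $P'$ computing the same function $P_{\sk_A,\sk_B}$: namely $P'$ has $g(\sk_A), g(\sk_B)$ hardcoded, computes $g(z)$ on input, and compares. By $\qsiO$ security, $\qsiO(P_{\sk_A,\sk_B}) \approx \qsiO(P')$. Combined with the one-wayness of $g$, this will let us strip the obfuscated program out of the reduction: I would argue that the $\cUEExpt$ for the compiled scheme is indistinguishable from a modified game in which the challenger additionally hands the adversary $g(\sk_A), g(\sk_B)$ but \emph{not} the obfuscated program, and then the adversary itself can reconstruct $\qsiO(P')\approx\qsiO(P_{\sk_A,\sk_B})$. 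The remaining step is to observe that in the underlying $\cUEExpt_{\enc_0}$ game, we may assume without loss of generality that the challenger also reveals $g(\sk_A)$ to party $A$ and $g(\sk_B)$ to party $B$ along with the secret keys, since each party learns its own key anyway and can compute its own image — but the concern is a party computing the \emph{other} image. That is where one-wayness enters: the image $g(\sk_B)$ is needed by $A$, yet $A$ does not have $\sk_B$ at reconstruction time. The clean way around this is to have the challenger in the modified $\cUEExpt_{\enc_0}$ game publish $g(\sk_A), g(\sk_B)$ \emph{in the ciphertext}, which it can do since it knows both keys; this is a syntactic change to $\enc_0$'s output that cannot help the adversary — formally, a reduction to $\cUEExpt_{\enc_0}$ security simply samples the images itself after the fact, or more carefully, since both $\sk_A,\sk_B$ are sampled by the challenger, it can compute $g(\sk_A),g(\sk_B)$ and append them, and any adversary against this augmented game yields one against plain $\cUEExpt_{\enc_0}$ with the same advantage (the augmented information is a deterministic function of values the plain challenger holds).

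Putting it together, the reduction from $\cUEExpt_{\enc}$ (compiled scheme) to $\cUEExpt_{\enc_0}$ (augmented with published images) works as follows: given $\sigma_0$ together with $g(\sk_A), g(\sk_B)$ from the augmented challenger, the reduction constructs $P'$ from the two images, computes $\qsiO(P')$, and forwards $(\sigma_0, \qsiO(P'))$ to the cUE adversary; it relays the adversary's split $(A,B,\rho)$ back. By $\qsiO$ security the reduction's ciphertext is indistinguishable from a genuine compiled ciphertext $(\sigma_0, \qsiO(P_{\sk_A,\sk_B}))$, so the adversary's winning probability is preserved up to negligible error; hence any non-negligible advantage in the compiled game gives a non-negligible advantage in $\cUEExpt_{\enc_0}$, contradicting security of the base scheme. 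I would also remark that the same argument, replacing $\qsiO$ with $\iO$, gives a classical-obfuscation version, since $\qsiO$ applied to a classical circuit behaves like $\iO$ for the purposes of this reduction.

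\textbf{Main obstacle.} The subtle point — and the step I would be most careful about — is the interaction between the two non-communicating parties and the images: we must make sure that letting party $A$ effectively ``see'' $g(\sk_B)$ (via the obfuscated program, which it can evaluate) does not break the unclonability of the underlying scheme. This is precisely why we route everything through publishing the images \emph{in the base-scheme ciphertext} and reduce to a syntactically-augmented but equivalent $\cUEExpt_{\enc_0}$; one must check that this augmentation is truly without loss of generality, i.e.\ that revealing $g(\sk_A), g(\sk_B)$ to \emph{both} parties cannot help them jointly guess the challenge bits, which follows because these images are a fixed deterministic function of the keys that the original challenger already possesses and could have appended to its own output. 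Verifying this reduction carefully, together with the $\qsiO$ functional-equivalence step, is the crux of the proof.
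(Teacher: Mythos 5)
Your proposal identifies the right high-level goal --- replace the obfuscated point function by something simulatable and reduce to the base scheme --- but the central step has a genuine gap. You assert that it is ``without loss of generality'' to modify $\cUEExpt_{\enc_0}$ so that the challenger publishes $g(\sk_A), g(\sk_B)$ inside the ciphertext, on the grounds that these are ``a fixed deterministic function of the keys that the original challenger already possesses and could have appended to its own output.'' That reasoning proves too much: the challenger also possesses $\sk_A, \sk_B$ themselves, and by the same logic it ``could have appended'' those, which would obviously break the scheme. Appending new information to the ciphertext is not a syntactic change; it is additional leakage, and the security game makes no promise about it. What you actually need is a \emph{leakage-resilience} claim: that the base (c)UE scheme remains secure when the adversary is given $g(\sk_A), g(\sk_B)$ in addition to the encryption, before the split. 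One-wayness of $g$ alone does not give this --- one-wayness says $g(\sk)$ cannot be inverted in isolation, but says nothing about whether $g(\sk)$, combined with the base-scheme ciphertext (which itself depends on $\sk$), might reveal $\sk$ outright. Indeed one can build a contrived but valid UE scheme and a valid injective OWF $g$ for which $(\enc_0(\sk;m), g(\sk))$ determines $\sk$, so the claim is false in general.

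The paper's proof sidesteps this entirely by changing the key structure rather than arguing leakage-resilience of the base scheme. The compiled secret key is a fresh $s \in \F_2^{3\secpar}$, the base scheme is invoked with the derived key $As$ for a freshly sampled matrix $A$, and the obfuscated program is $\qsiO(\delta_s)$. Then the argument is: $\qsiO(\delta_s) \approx \qsiO(P[\delta_{\{0,s\}}])$ by $\qsiO$ security, then $\approx \qsiO(P[\delta_T])$ for a random $2\secpar$-dimensional subspace $T \ni s$ by Zhandry's subspace-hiding obfuscation (this is where injective OWFs enter, not as the comparison function inside the point program), and finally, conditioned on $T$, $s$ retains $2\secpar$ bits of min-entropy, so by the leftover hash lemma $As$ is statistically close to uniform even given $A, T$. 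Thus the base scheme sees an effectively fresh uniform key and its security applies directly, with no assumption that it tolerates leakage. If you want to rescue your construction, you would need to either (i) prove a leakage lemma for the specific $g$ you use, or (ii) adopt the paper's re-randomization trick; as written, the ``WLOG'' step does not hold.
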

\begin{proof}
    For simplicity we only describe the compiler and proof for UE. The compiler for cUE is analogous.
    
    Let $(\enc,\dec)$ be a UE scheme. We build a UE scheme with key testing $(\enc', \dec', \test)$ as follows:
    \begin{align*}
        \enc'(s; m) &= (A, \enc(A s; m), \qsiO(\delta_s)) \\
        \dec'(s; (A, \sigma, \tau)) &= \dec(A s; \sigma) \\
        \test(s; (A, \sigma, \tau)) &= \Eval(\tau, s) \,,
    \end{align*}
    where $\Eval$ is a universal quantum evaluation circuit, $A \from \F_2^{\secpar \times 3 \secpar}$ is a random matrix sampled by $\enc'$, and the secret key $\sk = s$ is interpreted as a vector in $\F_2^{3\secpar}$. 
    
    Correctness and key testing are clear from the construction, so we turn to proving UE security.

    For a circuit $f : \F_2^{3\secpar} \to \{0,1\}$, let $P[f] : \F_2^{3\secpar} \to \{0,1\}$ be defined by $P[f](x) = (1-\delta_0(x)) \cdot f(x)$. That is, $P[f]$ is a circuit that outputs $f(x)$ for all $x$ except 0, on which $P[f]$ always outputs 0.
    
    By the $\qsiO$ guarantee, $\qsiO(\delta_s) \approx \qsiO(P[\delta_{\{0,s\}}])$. By Zhandry's subspace-hiding obfuscation result \cite{Zha21}, which assumes the existence of injective one-way functions, we have $\qsiO(P[\delta_{\{0,s\}}]) \approx \qsiO(P[\delta_T])$ for a random subspace $T \subseteq \F_2^{3 \secpar}$ of dimension $2\secpar$ that contains $s$.

    Conditioned on $T$, observe that $s$ still has $2\secpar$ bits of min-entropy. Therefore the leftover hash lemma implies that $(A, T, A s)$ is $\negl$-close in statistical distance to $(A, T, u)$ for $u \from \F_2^\secpar$, so
    \[
        (A, \enc(A s; m), \qsiO(P[\delta_T])) \equiv (A, \enc(u; m), \qsiO(P[\delta_T])).
    \]
    Since $A$ and $\qsiO(P[\delta_T])$ can be sampled independently from $\enc(u; m)$, the UE security of $(\enc',\dec',\test)$ follows from the UE security of $(\enc,\dec)$.
\end{proof}

\section{Copy Protection}
\label{sec:cp}
In \Cref{sec:qsiO}, we showed that $\qsiO$ is ``'best-possible'' copy protection, and thus provides a principled heuristic for copy-protecting any functionality. In this section, our goal is to investigate which functionalities are \emph{provably} copy protected by $\qsiO$. We consider copy protection for three classes of functions, each with slightly different copy protection guarantees. All three security games begin with the challenger sending the adversary a quantum state that represents some copy-protected functionality; the adversary then applies some quantum channel to the received state, and creates a new state on two registers. The three security games differ from this point on:
\begin{enumerate}
    \item In \emph{decision} copy protection, each part of the adversary is given a uniformly random challenge input $x$, along with either (a) $f(x)$, or (b) $f(x')$ for a fresh random $x'$. The task is for both parts to correctly guess which case they are in.
    \item In \emph{search} copy protection, each part of the adversary is given a uniformly random input $x$, and asked to produce $y$ satisfying some condition $\Ver(x,y)$.
    \item In copy protection for point functions, each part of the adversary is given both the marked input and a uniformly random input. The task is for both parts to correctly guess which one is the marked input.
\end{enumerate}

Whereas point functions are a particular class of functions, the notions of decision and search copy protection are applicable to many classes of functions. We show that the classes of ``decision puncturable'' and ``search puncturable'' programs can be decision copy protected and search copy protected, respectively. Roughly, a \emph{decision puncturable program} does not reveal any information about the function value at the punctured point; a \emph{search puncturable program} may reveal some information, but an efficient adversary cannot compute from it any output that passes some (public or private) verification procedure at the punctured point. We define these notions of puncturable programs precisely in Section \ref{sec:puncturable-programs}.

Informally, our main results of this section are:
\begin{enumerate}
    \item Assuming injective OWFs, $\qsiO$ decision-copy-protects any decision-puncturable program.
    \item Assuming injective OWFs and UE, $\qsiO$ search-copy-protects any search-puncturable program.
    \item Assuming injective OWFs and UE, $\qsiO$ copy-protects point functions.
\end{enumerate}

\begin{remark}
    For clarity of presentation we assume throughout this section that all challenge input distributions in the copy protection security games are uniform. These results can be generalized to arbitrary distributions with high min-entropy using a randomness extractor.
\end{remark}

\subsection{Puncturable programs}
\label{sec:puncturable-programs}
A puncturing procedure for a class of programs $\calF$ is an efficient algorithm $\Puncture$ that takes as input a description of a program $f \in \calF$ and polynomially-many points $x_1, \dots, x_t \in \textnormal{Domain}(f)$, and outputs the description of a new program $f_{x_1, \dots, x_t}$. This program should satisfy $f_{x_1, \dots, x_t}(z) = f(z)$ for all $z \in \textnormal{Domain}(f) \setminus \{x_1, \dots, x_t\}$ as well as an additional security property:
\begin{itemize}
    \item For \emph{decision puncturing}, we require $(f_x, f(x)) \approx (f_x, f(x'))$ for a random $x'$. For instance, in \cite{SW21} it was shown that one-way functions imply the existence of decision puncturable pseudo-random functions.
    \item For \emph{search puncturing}, we require that no efficient adversary can compute, given $f_x$, an output $y$ such that $\Ver(f,x,y) = 1$, for some efficient (public or private) verification procedure $\Ver$. For example, if $f$ is a signing function with a hard-coded secret key or a message authentication code, $\Ver(f,x,y)$ would use the verification key to check that $y$ is a valid signature or authentication tag for $x$. In \cite{BSW16}, puncturable signatures were constructed from injective one-way functions and (classical) indistinguishability obfuscation.
\end{itemize}

\begin{definition}[Decision puncturable programs, \cite{SW21}] \label{def:decision-puncturing}
    Let $\calF = \{\calF_\secpar : \{0,1\}^{n(\secpar)} \to \{0,1\}^{m(\secpar)}\}_{\secpar \in \mathbb{N}}$ be a family of classical circuits. We say that $\calF$ is \emph{decision puncturable} if there exists an efficient algorithm $\Puncture$ such that, for each $\secpar \in \mathbb{N}$,
    \begin{itemize}
        \item For every $f \in \calF_\secpar$ and all $\poly(\secpar)$-sized sets $S \subseteq \{0,1\}^{n(\secpar)}$, $\Puncture(f,S)$ outputs a $\poly(\secpar)$-sized circuit $f_S$ such that for all $x \in \{0,1\}^{n(\secpar)} \setminus S$, $f_S(x) = f(x)$.
        \item For every QPT adversary $(\Adv_1, \Adv_2)$ such that $\Adv_1(\secparam)$ outputs a set $S \subseteq \{0,1\}^{n(\secpar)}$ and a state $\sigma$, if $f \from \calF_\secpar$, $f_S \from \Puncture(f,S)$, and $\hat{S} \subseteq \{0,1\}^{n(\secpar)}$ is a uniformly random set of the same size as $S$,
        \[
            \abs{\Pr[\Adv_2(\sigma, f_S, S, f(S)) = 1] - \Pr[\Adv_2(\sigma, f_S, S, f(\hat{S}) = 1]} = \negl \,,
        \]
        where $f(S) = \{f(x): x \in S\}$, and similarly for $f(\hat{S})$.
    \end{itemize}
\end{definition}

We only require search puncturable programs to be puncturable at a single point, because this definition suffices for our applications. This is also the definition given in \cite{BSW16}.

\begin{definition}[Search puncturable programs] \label{def:search-puncturing}
    Let $\calF = \{\calF_\secpar : \{0,1\}^{n(\secpar)} \to \{0,1\}^{m(\secpar)}\}_{\secpar \in \mathbb{N}}$ and $\Ver = \{\Ver_\secpar : \calF_\secpar \times \{0,1\}^{n(\secpar)} \times \{0,1\}^{m(\secpar)}\}_{\secpar \in \mathbb{N}}$ be families of $\poly(\secpar)$-sized classical circuits. We say that $\calF$ is \emph{search puncturable} with respect to $\Ver$ if, for each $\secpar \in \mathbb{N}$, there exists an efficient algorithm $\Puncture$ such that
    \begin{itemize}
        \item For every $f \in \calF_\secpar$ and all $x \in \{0,1\}^{n(\secpar)}$, $\Puncture(f,x)$ outputs a $\poly(\secpar)$-sized circuit $f_x$ such that for all $x' \in \{0,1\}^{n(\secpar)} \setminus \{x\}$, $f_S(x') = f(x')$.
        \item For every QPT adversary $(\Adv_1, \Adv_2)$ such that $\Adv_1(\secparam)$ outputs a point $x \in \{0,1\}^{n(\secpar)}$ and a state $\sigma$, if $f \from \calF_\secpar$ and $f_x \from \Puncture(f,x)$,
        \[
            \Pr[\Ver_\secpar(f, x, \Adv_2(\sigma, f_x, x)) = 1] = \negl \,.
        \]
    \end{itemize}
\end{definition}

\subsection{Decision copy protection} \label{subsec:decision-cp}
All of the copy protection variants that we define in Section \ref{sec:cp} have the same correctness definition (Definition \ref{def:cp-correctness}). They only differ in their definition of security. 



\begin{figure}[H]
\pcb{
    \textbf{Challenger} \< \< \textbf{Adversary} \\[][\hline]
    \< \< \\[-0.5\baselineskip]
    f \from \calF_\secpar \<\< \\
    x_A, x_B, x_A', x_B' \from \{0,1\}^{\secpar} \<\< \\
    y_A^0 := f(x_A'),\ y_B^0 := f(x_B') \<\< \\
    y_A^1 := f(x_A),\ y_B^1 := f(x_B) \<\< \\
    \sigma := \CP(f) \<\< \\
    \< \sendmessageright*{\sigma} \< \\
    \<\< (A, B, \rho_{\calA,\calB}) \from \Adv(\sigma) \\
    \< \sendmessageleft*{A, B, \rho_{\calA,\calB}} \< \\
    a,b \from \{0,1\} \<\< \\
    (a',b') \from A^{x_A,y_A^a} \otimes B^{x_B,y_B^b} \,\rho_{\calA\calB} \<\< \\
    \textnormal{Output 1 if $a'=a$ and $b'=b$; otherwise output 0}
}
\caption{$\CPExptDecision_{\CP,\Adv,\calF}(\secpar)$. The challenger samples a function $f$ from $\calF_\secpar$ and $x_A, x_B, x_A', x_B' \from \{0,1\}^{\secpar}$. It computes function values for pairs $(x_A, x_A')$, $(x_B, x_B')$ to produce challenge pairs, and sends the copy protection $\CP(f)$ to the adversary. The adversary returns a quantum state and quantum circuit descriptions $A$ and $B$. The challenger measures these with chosen challenge pairs, deciding which function value to use based on random bits $a$, $b$. The adversary wins if the measured values match $a$, $b$.} \label{game:CP-decision}
\end{figure}

\begin{definition}[Decision copy protection security] \label{definition:CP-decision}
Let $\calF = \{\calF_{\secpar} : \{0,1\}^\secpar \to \{0,1\}^{m(\secpar)}\}_{\secpar \in \mathbb{N}}$ be a family of $\poly(\secpar)$-sized classical circuits. Let $\CP$ be a copy protection scheme for $\mathcal{F}$ (as in Definition \ref{def:cp-correctness}). We say that $\CP$ is decision copy protection secure if, for all QPT algorithms $\Adv$, there exists a negligible function $\mathsf{\negl}$ such that, for all $\secpar$,
    \[
        \Pr[\emph{\CPExptDecision}_{\CP,\Adv,\calF}(\secpar)=1] \le \frac{1}{2} + \negl \,,
    \]
where \textnormal{$\CPExptDecision_{\CP,\Adv,\calF}(\secpar)$} is defined in \Cref{game:CP-decision}.
\end{definition}

\begin{theorem} \label{theorem:decision-cp}
    Let $\qsiO$ be a secure $\qsiO$ scheme. Let $\calF$ be any family of decision-puncturable programs. Then, assuming injective one-way functions exist, $\qsiO$ is a copy protection scheme for $\calF$ that is decision copy protection secure.
\end{theorem}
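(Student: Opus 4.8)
\emph{Approach.} The plan is to generalize the PRF hybrid argument from \Cref{subsec:techo}, using a cUE scheme with key testing to embed the values of $f$ at the challenge points into a $\qsiO$-obfuscated punctured copy of the program, so that copying the obfuscation would require splitting a cUE ciphertext. First I would invoke \Cref{theorem:coupled-ue} (cUE from one-way functions) and \Cref{theorem:key-testing} (adding key testing using $\qsiO$ and injective one-way functions) to get a cUE scheme $(\enc,\dec,\test)$ with key testing. Then I would note that, since $\Adv$'s message $(A, B, \rho_{\calA\calB})$ in $\CPExptDecision$ is sent before $a, b$ and the challenge inputs are drawn, we may without loss of generality move the sampling of $a, b, x_A, x_B, x_A', x_B'$ to the start of the game, so that the challenger's first message is allowed to depend on them.

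\emph{Hybrids.} Write $f_{x_A,x_B} \from \Puncture(f, \{x_A, x_B\})$ and $\sigma := \enc(x_A, x_B; f(x_A), f(x_B))$, and let $P[f_{x_A,x_B}, \sigma]$ be the quantum implementation that on input $z$ runs $\test(z;\sigma)$, outputs $\dec(p, z; \sigma)$ if that returns some $p \in \{0,1\}$, and otherwise outputs $f_{x_A,x_B}(z)$. Key testing recognizes exactly $x_A$ (as slot $0$) and $x_B$ (as slot $1$) and decryption then returns $f(x_A), f(x_B)$, so $P[f_{x_A,x_B}, \sigma]$ is a $(1-\negl)$-quantum implementation of $f$. \textbf{Hybrid 1}: replace $\qsiO(f)$ in the challenger's first message by $\qsiO(P[f_{x_A,x_B}, \sigma])$; by security of $\qsiO$ (same function) this changes $\Adv$'s success probability negligibly. \textbf{Hybrid 2}: using that $\calF$ is decision puncturable (\Cref{def:decision-puncturing}), and that a reduction holding $f_{x_A,x_B}$ and the set $\{x_A, x_B\}$ can simulate everything else in the game (the bad event $\{x_A', x_B'\} \cap \{x_A,x_B\} \neq \emptyset$ is negligible), replace $f(x_A), f(x_B)$ --- both inside $\sigma$ \emph{and} as the ``$a{=}1$/$b{=}1$'' challenge values --- by $f(\hat x_A), f(\hat x_B)$ for fresh uniform $\hat x_A, \hat x_B$; call the resulting ciphertext $\tilde\sigma$. \textbf{Hybrid 3}: replace $\qsiO(P[f_{x_A,x_B}, \tilde\sigma])$ by $\qsiO(P[f, \tilde\sigma])$; the two programs are functionally equivalent (they differ only at $x_A, x_B$, where both decrypt $\tilde\sigma$), so security of $\qsiO$ applies --- and crucially $\qsiO(P[f, \tilde\sigma])$ can now be prepared from $\tilde\sigma$ together with a fresh $f \from \calF_\secpar$, with no knowledge of $x_A, x_B$. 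Finally, a reduction $\Red$ plays $\cUEExpt$: it samples $f \from \calF_\secpar$ and $\hat x_A, \hat x_B$, submits $m_A := f(\hat x_A), m_B := f(\hat x_B)$, receives the cUE ciphertext $\sigma^\star$, feeds $\qsiO(P[f, \sigma^\star])$ to $\Adv$, and upon being handed the keys $\sk_A = x_A, \sk_B = x_B$ runs $\Adv$'s circuits on challenge inputs $(x_A, m_A)$ and $(x_B, m_B)$, forwarding their bits. Matching distributions, $\Red$ wins $\cUEExpt$ with essentially the probability that $\Adv$ wins the final hybrid, which is $\le \tfrac12 + \negl$ by \Cref{theorem:coupled-ue}.

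\emph{Main obstacle.} The delicate point is arranging the hybrids so that the last one is \emph{exactly} an instance of $\cUEExpt$. In $\cUEExpt$ the challenge bits control which message is \emph{encrypted} (the alternative being a uniformly random string), whereas in $\CPExptDecision$ they control which value is \emph{handed to} $A$ and $B$ (the alternative being $f$ at a fresh random point); reconciling these --- in particular using the decision-puncturing step to turn ``the value at the punctured point'' into ``the value at a fresh random point'' and lining up the joint distribution of the encrypted value, the challenge value, and the bits in every case of $(a,b)$ --- is what requires care and may call for an extra intermediate hybrid. A secondary point is verifying that $P[\cdot,\cdot]$ is genuinely a $(1-\negl)$-quantum implementation of $f$, which rests on key testing cleanly separating the two honest keys from all other inputs, so that the first $\qsiO$ swap is legitimate. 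The same template, using the multi-point version of decision puncturing, extends the argument from two challenge points to polynomially many.
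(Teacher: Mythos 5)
Your hybrids 1 and 2 are exactly the paper's, and you correctly flagged the delicate spot — but the gap you named is genuine and your Hybrid 3 together with the final reduction does not close it. The problem is distributional. In your last hybrid the ciphertext always encrypts $(f(\hat x_A), f(\hat x_B))$, and your reduction submits $m_A := f(\hat x_A)$, $m_B := f(\hat x_B)$ to the cUE challenger while running $A, B$ on the fixed inputs $(x_A, m_A)$, $(x_B, m_B)$. When, say, $a=0$, your reduction's adversary sees a ciphertext encrypting a \emph{uniform} string $m_A^0$ while being handed $(x_A, f(\hat x_A))$; but in your Hybrid 3 with $a=0$, the ciphertext encrypts $f(\hat x_A)$ and the adversary is handed $(x_A, f(x_A'))$. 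These two joint distributions are equal only if $f$ evaluated at a random input were itself uniformly random, which decision puncturability does not give you. (Decision puncturing only promises $(f_S, f(S)) \approx (f_S, f(\hat S))$ — it says nothing about $f(\hat S)$ looking uniform.) Concretely, an adversary whose circuit $A$ decrypts the embedded ciphertext at $x_A$ and tests whether the plaintext lies in the range of $f$ has different success probability in the two games, so the reduction's cUE advantage does not bound $\Adv$'s Hybrid 3 advantage.

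The fix, which is what the paper actually does, is the "extra hybrid" you suspected you might need, and it requires changing both the \emph{message} and the \emph{program} at once. Replace $P$ by a program $\tilde P[g,\sigma]$ that, on a key-matching input $z$, computes $y := \Dec(r, z; \sigma)$ and outputs $g(y)$ rather than $y$; and replace the ciphertext by $\Enc(x_A, x_B; \tilde x_A, \tilde x_B)$, i.e.\ encrypt the uniformly random \emph{inputs} instead of $f$ of them. Then $\tilde P[f, \Enc(x_A,x_B;\tilde x_A,\tilde x_B)]$ is functionally identical to your Hybrid 2 program, so the swap is justified by $\qsiO$ security, and the final reduction submits the uniform strings $\tilde x_A, \tilde x_B$ as cUE messages and sets $\tilde A^{\sk_A} := A^{\sk_A, f(\tilde x_A)}$. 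Now the $a=0$ case of $\cUEExpt$ encrypts a fresh uniform string, which under $\tilde P$ corresponds to revealing $f$ at a fresh uniform point — exactly matching $y^0_A = f(x_A')$ in $\CPExptDecision$. With that one change the distributions line up for all four values of $(a,b)$ and the cUE bound of \Cref{theorem:coupled-ue} finishes the proof.
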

\begin{proof}
    The proof proceeds via a sequence of hybrids. Fix an adversary $\Adv$ for $\CPExptDecision_{\qsiO,\Adv,\calF}(\secpar)$. We will show that the success probability of $\Adv$ is preserved across the hybrids, up to $\negl$. We will then argue that the final hybrid is secure by invoking the security of a cUE scheme built from an injective one-way function.

    \vspace{2mm}
    \noindent \textit{Hybrid 0}: The original security game $\CPExptDecision_{\qsiO,\Adv,\calF}(\secpar)$, for random $a,b$.
    
    \vspace{2mm}
    \noindent \textit{Hybrid 1}: Let $(\enc,\dec,\test)$ be any cUE scheme with key-testing. By \Cref{theorem:key-testing}, such a scheme can be built from $\qsiO$ and any injective one-way function. Let $P[g,\sigma]$ be the following program, for a classical circuit $g$ and a quantum state $\sigma$ (which is supposed to be an output of $\enc$):
    
    \indent $P[g,\sigma](z)$:
    \begin{enumerate}
        \item Compute $\test(z; \sigma) \to r$. If $r=\bot$, terminate and output $g(z)$; otherwise continue to step 2.
        \item Compute $\Dec(r, z; \sigma)$ and output the result.
    \end{enumerate}
    Hybrid 1 is same as Hybrid 0, except the challenger sends
    \[
        \qsiO(P[f_{x_A,x_B}, \Enc(x_A,x_B;f(x_A),f(x_B))])
    \]
    as the first message instead of $\qsiO(f)$, where $f_{x_A,x_B} \leftarrow \Puncture(f,\{x_A, x_B\})$. Crucially, notice that $P[f_{x_A,x_B}, \Enc(x_A,x_B;f(x_A),f(x_B))]$ is functionally equivalent to $f$. More precisely, it can be viewed as quantum implementation of $f$ (as in Definition \ref{def:quantum-implementation}). Then, the fact that $\Adv$ wins Hybrid 1 with probability at most $\negl$ higher than in Hybrid 0 follows directly from the security guarantee of $\qsiO$.

    Note that if there was a bound on the size of the messages that $\enc$ could encrypt in terms of the length of the secret keys, then we would only be able to encrypt the output of functions $f : \{0,1\}^\secpar \to \{0,1\}^{m(\secpar)}$ with sufficiently small output length $m(\secpar)$. Fortunately, in the definition of cUE we allow the adversary to choose the length of the message (in this case $m(\secpar)$).

    \vspace{2mm}
    \noindent \textit{Hybrid 2}: The challenger samples $\tilde{x}_A, \tilde{x}_B \from \{0,1\}^{\secpar}$, sends
    \[
        \qsiO(P[f_{x_A,x_B},\Enc(x_A,x_B;f(\tilde{x}_A),f(\tilde{x}_B))])
    \]
    as the first message, and uses $y_A^1 := f(\tilde{x}_A), y_B^1 := f(\tilde{x}_B)$ instead of $y_A^1 := f(x_A), y_B^1 := f(x_B)$. Suppose for a contradiction that an adversary $\Adv$ had non-negligibly higher success probability in Hybrid 2 than in Hybrid 1. Then, we can use $\Adv$ to break the decision puncturing security (\Cref{def:decision-puncturing}) of $f$ as follows: 
    \begin{itemize}
        \item Sample $x_A,x_B \leftarrow \{0,1\}^{\lambda}$.
        \item Receive $(f_{x_A,x_B}, y_A^1, y_B^1)$ where either
        $(y_A^1, y_B^1) = (f(x_A), f(x_B))$ or $(y_A^1, y_B^1) = (f(\tilde{x}_A), f(\tilde{x}_B))$ for uniformly random $\tilde{x}_A, \tilde{x}_B$.
        \item Simulate the rest of the Hybrid 1 experiment with $\Adv$ using $(f_{x_A,x_B}, y_A^1, y_B^1)$.
    \end{itemize}
    When $(y_A^1, y_B^1) = (f(x_A), f(x_B))$, the simulated experiment is distributed as in Hybrid 1. When $(y_A^1, y_B^1) = (f(\tilde{x}_A), f(\tilde{x}_B))$, it is distributed as in Hybrid 2. 

    \vspace{2mm}
    \noindent \textit{Hybrid 3}: Let $\tilde{P}[g,\sigma]$ be the following program, for a classical circuit $g$ and a quantum state $\sigma$ (which is supposed to be an output of $\enc$).
    
    \indent $\tilde{P}[g,\sigma](z)$:
    \begin{enumerate}
        \item Compute $\test(z; \sigma) \to r$. If $r=\bot$, terminate and output $g(z)$; otherwise continue to step 2.
        \item Compute $\Dec(r, z; \sigma) \to y$ and output $g(y)$.
    \end{enumerate}
    Hybrid 3 is the same as Hybrid 2, except the challenger sends
    \[
        \qsiO(\tilde{P}[f, \Enc(x_A,x_B;\tilde{x}_A,\tilde{x}_B)])
    \]
    as the first message. That Hybrid 3 has the same success probability as Hybrid 2 follows from the security guarantee of $\qsiO$.

    We complete the proof by giving a reduction from an adversary $\Adv$ for Hybrid 3 to an adversary $\Red$ for the cUE game.
    
    \indent $\cUEExpt_{\enc,\Red[\Adv]}(\secpar)$:
    \begin{enumerate}
        \item $\Red$ samples $\tilde{x}_A, \tilde{x}_B \from \{0,1\}^\secpar$ and sends these to the challenger.
        \item The challenger samples $x_A, x_B \from \{0,1\}^\secpar$ and $m_A^0, m_B^0 \from \{0,1\}^\secpar$, and lets $m_A^1 = \tilde{x}_A, m_B^1 = \tilde{x}_B$.
        \item The challenger samples $a,b \from \{0,1\}$, computes $\sigma = \enc(x_A, x_B; m_A^a, m_B^b)$ and sends $\sigma$ to $\Red$.
        \item $\Red$ samples $f \from \calF_\secpar$ and sends $\qsiO(\tilde{P}[f, \sigma])$ to $\Adv$. Thus, $\Red$ obtains $(A,B,\rho_{\calA,\calB}) \from \Adv(\qsiO(\tilde{P}[f, \sigma]))$.
        \item Let $\tilde{A}^x = A^{x,f(\tilde{x}_A)}$ and $\tilde{B}^x = B^{x, f(\tilde{x}_B)}$.
        \item $\Red$ sends $(\tilde{A}, \tilde{B}, \rho_{\calA,\calB})$ to the challenger.
        \item The challenger measures $\tilde{A}$ and $\tilde{B}$ on $\rho_{\calA,\calB}$, obtaining $a'$ and $b'$. The reduction wins if $a'=a$ and $b'=b$.
    \end{enumerate}
    The view of $\Adv$ in $\cUEExpt_{\enc,\Red[\Adv]}(\secpar)$ is exactly the same as in Hybrid 3. Hence, the cUE security of $\enc$ implies that the success probability of $\Adv$ Hybrid 3 is at most $1/2+\negl$.
\end{proof}

\subsection{Search copy protection} \label{subsec:search-cp}

\begin{figure}[H]
\pcb{
    \textbf{Challenger} \<\< \textbf{Adversary} \\[][\hline]
    \< \< \\[-0.5\baselineskip]
    f \from \calF_\secpar \<\< \\
    \sigma := \CP(f) \<\< \\
    \< \sendmessageright*{\sigma} \< \\
    \<\< (A, B, \rho_{\calA,\calB}) \from \Adv(\sigma) \\
    \< \sendmessageleft*{A, B, \rho_{\calA,\calB}} \< \\
    x \from \{0,1\}^\secpar \<\< \\
    (y_A, y_B) \from A^{x} \otimes B^{x} \rho_{\calA\calB} \<\< \\
    \textnormal{Output 1 if $\Ver_\secpar(f, x, y_A) = 1$ and} \<\< \\
    \textnormal{\hspace{0.5cm} $\Ver_\secpar(f, x, y_B) = 1$; otherwise output 0} \<\<
}
\caption{$\CPExptSearch_{\CP,\Adv,\calF,\Ver}(\secpar)$. The challenger samples a function $f \from \calF_\secpar$ and sends $\CP(f)$ to the adversary. The adversary responds with two quantum circuits $A$, $B$ and a quantum state $\rho_{\calA,\calB}$. The challenger samples a random string $x$ and measures $A,B$ on $\rho_{\calA,\calB}$ to obtain $y_A$, $y_B$. The adversary wins if both $y_A$ and $y_B$ pass the verification using function $f$, input $x$, and the measured output.} \label{game:CP-search}
\end{figure}

\begin{definition} \label{definition:CP-search}
Let $\calF = \{\calF_\secpar : \{0,1\}^{\secpar} \to \{0,1\}^{m(\secpar)}\}_{\secpar \in \mathbb{N}}$ and $\Ver = \{\Ver_\secpar : \calF_\secpar \times \{0,1\}^{\secpar} \times \{0,1\}^{m(\secpar)}\}_{\secpar \in \mathbb{N}}$ be families of $\poly(\secpar)$-sized classical circuits. Let $\CP$ be a copy protection scheme for $\mathcal{F}$ (as in Definition \ref{def:cp-correctness}). We say that $\CP$ is search copy protection secure with respect to $\Ver$ if, for all QPT algorithms $\Adv$, there exists a negligible function $\mathsf{\negl}$ such that, for all $\secpar$,
    \[
        \Pr[\emph{\CPExptSearch}_{\CP,\Adv,\calF,\Ver}(\secpar)=1] \le \negl \,,
    \]
where \textnormal{$\CPExptSearch_{\CP,\Adv,\calF,\Ver}(\secpar)$} is defined in \Cref{game:CP-search}.
\end{definition}

\begin{theorem} \label{theorem:search-cp}
Let $\qsiO$ be a secure $\qsiO$ scheme. Let $\calF$ be any family of search-puncturable programs with respect to $\Ver$. Then, assuming injective one-way functions and unclonable encryption exist, $\qsiO$ is a copy protection scheme for $\calF$ with respect to $\Ver$ that is search copy protection secure.
\end{theorem}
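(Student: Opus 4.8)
The plan is to follow the template of \Cref{theorem:decision-cp}, with two substitutions: coupled unclonable encryption is replaced by plain unclonable encryption --- in $\CPExptSearch$ the two pirate halves are handed the \emph{same} challenge input $x$, so that single input plays the role of the one secret key of a UE scheme, whereas the decision game's two distinct inputs $x_A, x_B$ called for two keys --- and decision puncturing is replaced by search puncturing. Fix a QPT adversary $\Adv$ for $\CPExptSearch$; since the first message $\qsiO(f)$ does not depend on $x$ and $x$ is drawn only after the split, we may assume $x \from \{0,1\}^\secpar$ is sampled at the very start of the game, and we bound $\Adv$'s success probability through a short chain of hybrids.

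First I would move to the hybrid in which the challenger, using a UE scheme $(\enc,\dec,\test)$ with key testing (which exists from injective one-way functions, $\qsiO$, and the assumed UE scheme by \Cref{theorem:key-testing}), sends $\qsiO(P[f_x,\enc(x;f(x))])$ in place of $\qsiO(f)$, where $f_x \from \Puncture(f,x)$ and $P[g,\sigma]$ on input $z$ runs $\test(z;\sigma)$ and outputs $g(z)$ if the test returns $\bot$ (resp.\ $0$) and $\dec(z;\sigma)$ otherwise. This program is a quantum implementation of $f$ (at $z=x$ the key test passes and decryption recovers $f(x)$; elsewhere it equals $f_x=f$), so the change is invisible up to $\negl$ by $\qsiO$ security; as in the decision case one uses here that UE ciphertexts may carry messages of any length chosen later, so $f$'s output length is unconstrained. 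Next I would rewrite the punctured program in terms of the full circuit $f$ together with the key-testing procedure (the analogue of the $\tilde P$-rewriting in \Cref{theorem:decision-cp}), again at no cost by $\qsiO$ security, so that the final reduction can assemble the state from a UE challenge ciphertext and $f$ alone, without needing to know $x$ --- the quantity the UE challenger keeps secret. The endgame is a reduction to UE security of $(\enc,\dec,\test)$: search puncturing says $f_x$ together with $x$ is useless for producing any $y$ with $\Ver(f,x,y)=1$, so a pirate half that produces a verifying output must in effect decrypt the embedded ciphertext; a winning pirate forces both halves to do so, and hard-coding enough of the reduction's randomness into $\tilde A,\tilde B$ turns ``$A$/$B$ produces a verifying output'' into ``$A$/$B$ guesses the UE challenge bit,'' contradicting UE security.

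The hard part, I expect, is exactly this endgame, and it is where search puncturing is genuinely weaker than decision puncturing: decision puncturing supplies indistinguishability \emph{at} the punctured point, which in \Cref{theorem:decision-cp} is precisely what lets the hard-coded value be swapped for a fresh one \emph{before} a clean black-box reduction to the UE game is invoked, whereas search puncturing offers only unforgeability, so the final reduction must invoke search puncturing and UE security together rather than in sequence. A further subtlety is that a single honest party holding $x$ can legitimately decrypt the embedded ciphertext (in the key-testing compiler the inner UE key is $Ax$, a public function of $x$), so mere semantic security of the underlying encryption does not suffice --- one really needs the unclonability guarantee, together with the $\qsiO$/subspace-hiding/leftover-hash bookkeeping internal to \Cref{theorem:key-testing}, to rule out the pirate duplicating the decryption ability into both halves. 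Making this combined argument precise, and checking that the rewritten program and the hard-coded circuits faithfully reproduce the hybrid distribution, is the bulk of the work.
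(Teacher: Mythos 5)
Your high-level template is the right one --- a single UE key in place of the two cUE keys, search puncturing in place of decision puncturing, and a reduction that hard-codes a fresh random bit so that ``pirate produces a verifying output'' is converted into ``pirate guesses the UE challenge bit'' --- and you correctly anticipate the central obstacle: the UE challenger keeps $\sk = x$ secret, so whatever plaintext the reduction asks the challenger to encrypt must be chosen independently of $x$. But you never resolve that obstacle, and the resolution is the crux of the paper's proof. Your first hybrid encrypts $f(x)$, a quantity the reduction cannot compute. You then gesture at ``the analogue of the $\tilde{P}$-rewriting,'' but in \Cref{theorem:decision-cp} that rewriting is only available \emph{after} decision puncturing has already swapped $f(x_A), f(x_B)$ for $f(\tilde{x}_A), f(\tilde{x}_B)$; that swap is precisely what makes the ultimately-encrypted plaintext $(\tilde{x}_A,\tilde{x}_B)$ independent of the keys. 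Search puncturing is a one-wayness statement, not an indistinguishability statement, so there is no analogue of Hybrid~2 and hence no candidate plaintext for your $\tilde{P}$ step; your plan stalls exactly where you flag ``the hard part.''

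The paper's fix is a different program \emph{and} a different plaintext: the UE ciphertext encrypts $0^\secpar \| f$ --- a flagged encoding of the description of $f$ itself --- and $P[g,\sigma](z)$, on a $z$ that passes the key test, decrypts $\sigma$, checks the $0^\secpar$ prefix, interprets the remainder as a circuit $g'$, and outputs $g'(z)$. This plaintext is independent of $x$, so the reduction can commit to it before the key is sampled. When the challenge bit is $1$, $P[f,\enc(x;0^\secpar\|f)]$ is functionally equal to $f$, so $\qsiO$ security transfers the pirate's success probability $\varepsilon$. When the challenge bit is $0$, the random plaintext fails the flag check with overwhelming probability, making $P[f,\sigma]$ functionally equal to $P[f_x,\sigma]$; $\qsiO$ security then reduces the pirate's view to $(f_x, x)$ and search puncturability bounds the verifying probability by $\negl$. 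There is no external chain of hybrids and no ``in sequence'' invocation: puncturing and UE security are invoked \emph{inside} the single UE reduction, one per branch of the challenge bit. Your digression into the internals of the key-testing compiler ($Ax$, subspace hiding, leftover hash) is also unnecessary --- those details are sealed behind \Cref{theorem:key-testing}, and the proof only uses the abstract $(\enc,\dec,\test)$ interface together with UE security.
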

\begin{proof}
    Let $(\enc,\dec,\test)$ be any UE scheme with key-testing. By \Cref{theorem:key-testing}, such a scheme can be built from $\qsiO$, UE, and any injective one-way function. Let $P[g,\sigma]$ be the following program, for a classical circuit $g$ and a quantum state $\sigma$ (which is supposed to be an output of $\enc$):
    
    \indent $P[g,\sigma](z)$:
    \begin{enumerate}
        \item Compute $\test(z; \sigma)$. If it rejects, terminate and output $g(z)$; otherwise continue to step 2.
        \item Compute $\Dec(z; \sigma)$. If the first $\secpar$ bits of the decryption are not $0^\secpar$, terminate and output $\bot$; otherwise interpret the remaining bits as the description of a circuit $g'$ and output $g'(z)$.
    \end{enumerate}

    We reduce from the UE game to $\CPExptSearch$ as follows:
    
    \indent $\UEExpt_{\enc,\Red[\Adv]}(\secpar)$:
    \begin{enumerate}
        \item $\Red$ samples $f \from \calF_\secpar$ and sends $0^\secpar || f$ to the challenger. Let $\abs{f}$ be the number of bits in the description of $f$.
        \item The challenger samples $x \from \{0,1\}^\secpar$, $m^0 \from \{0,1\}^{\secpar + \abs{f}}$, and lets $m^1 = 0^\secpar || f$.
        \item The challenger samples $c \from \{0,1\}$, computes $\sigma = \enc(x; m^c)$ and sends $\sigma$ to $\Red$.
        \item $\Red$ sends $\qsiO(P[f, \sigma])$ to $\Adv$.
        \item $(A,B,\rho_{\calA,\calB}) \from \Adv(\qsiO(\tilde{P}[f, \sigma]))$
        \item $\Red$ samples a random bit $r$ and defines $\tilde{A}, \tilde{B}$ as follows:
        \begin{itemize}
            \item[$\tilde{A}^x$:] Run $A^x \to y$. If $\Ver(f,x,y) = 1$, output $a'=1$; otherwise output $a'=r$.
            \item[$\tilde{B}^x$:] Run $B^x \to y$. If $\Ver(f,x,y) = 1$, output $b'=1$; otherwise output $b'=r$.
        \end{itemize}
        \item $\Red$ sends $(\tilde{A}, \tilde{B}, \rho_{\calA,\calB})$ to the challenger.
        \item The challenger measures $\tilde{A}$ and $\tilde{B}$ on $\rho_{\calA,\calB}$, obtaining $a'$ and $b'$. The reduction wins if $a'=b'=c$.
    \end{enumerate}
    By security of $\Enc$, we have that
    \begin{equation} \label{ineq:scp-1}
        \Pr[\UEExpt_{\enc,\Red[\Adv]}(\secpar)=1] \le \frac{1}{2}+\negl.
    \end{equation}
    
    Now suppose that $\Adv$ wins $\CPExptSearch$ with probability $\varepsilon$. We consider the $c=1$ and $c=0$ cases separately. For $\tilde{c} \in \{0,1\}$, let $\UEExpt_{\enc,\Adv,\tilde{c}}$ denote the $c=\tilde{c}$ version of $\UEExpt_{\enc,\Adv}$.
    
    \paragraph{The $c=1$ case.} By the $\qsiO$ guarantee,
    \[
        \qsiO(f) \approx \qsiO(P[f, \enc(x; 0^\secpar || f)]),
    \]
    so both $A$ and $B$ output $y$'s such that $\Ver(f,x,y)=1$ with probability at least $\varepsilon - \negl$ in $\UEExpt_{\enc,\Red[\Adv]}(\secpar)$. By construction of $\tilde{A}, \tilde{B}$, it follows that
    \begin{equation} \label{ineq:scp-2}
        \Pr[\UEExpt_{\enc,\Red[\Adv],1}(\secpar)=1] \ge \varepsilon + \frac{1-\varepsilon}{2} - \negl = \frac{1+\varepsilon}{2} - \negl.
    \end{equation}
    
    \paragraph{The $c=0$ case.} When $c=0$, $\Red$ receives $\qsiO(P[f, \Enc(x; m^0)])$ for $m^0 \from \{0,1\}^{\secpar+\abs{f}}$. Since $m^0$ begins with something other than $0^\secpar$ with probability $1-\negl$, the $\qsiO$ guarantee implies that
    \[
        \qsiO(P[f, \Enc(x; m^0)]) \approx \qsiO(P[f_x, \Enc(x; m^0)]).
    \]
    By the search puncturing security of $f_x$ (\Cref{def:search-puncturing}), $\Adv(\qsiO(P[f_x, \Enc(x; m^0)]))$ cannot produce a $y$ satisfying $\Ver(f,x,y) = 1$ with probability greater than $\negl$. Therefore, $\tilde{A}^x, \tilde{B}^x$ both output $r$ with probability $1-\negl$, and
    \begin{equation} \label{ineq:scp-3}
        \Pr[\UEExpt_{\enc,\Red[\Adv],0}(\secpar)=1] \ge \frac{1}{2} - \negl.
    \end{equation}
    Putting together Inequalities \ref{ineq:scp-1}, \ref{ineq:scp-2}, and \ref{ineq:scp-3}, we find that
    \[
        \frac{1}{2} + \frac{\varepsilon}{4} \le \frac{1}{2} + \negl
    \]
    and therefore $\varepsilon \le \negl$.
\end{proof}

\subsection{Copy protection for point functions}
We show that $\qsiO$ and unclonable encryption with key-testing yield copy protection for point functions with perfect correctness. The precise security notion that is achieved here is possibly stronger than those considered in previous works, e.g. \cite{AMP20,AKL+22}. In particular, in the security game that we consider, each of Alice and Bob receive the \emph{same} challenge, which consists of \emph{both} the marked input and a uniformly random input (in a random order), and they have to guess which one is the marked input.

\begin{figure}[H]
\pcb{
    \textbf{Challenger} \< \< \textbf{Adversary} \\[][\hline]
    \< \< \\[-0.5\baselineskip]
    x^0, x^1 \from \{0,1\}^\secpar \<\< \\
    c \from \{0,1\} \<\< \\
    \sigma := \CP(\delta_{x^c}) \<\< \\
    \< \sendmessageright*{\sigma} \< \\
    \<\< (A, B, \rho_{\calA,\calB}) \from \Adv(\sigma) \\
    \< \sendmessageleft*{A, B, \rho_{\calA,\calB}} \< \\
    a' \from A^{x^0,x^1}(\rho_{\calA}); b' \from B^{x^0,x^1}(\rho_{\calB}) \<\< \\
    \textnormal{Output 1 if $a'=b'=c$; otherwise output 0} \<\<
}
\caption{$\CPExptPF_{\CP,\Adv}(\secpar)$. The challenger samples two random strings $x^0, x^1 \from \{0,1\}^\secpar$, selects a random bit $c$, and sends $\CP(\delta_{x^c})$ to the adversary. The adversary generates a state $\rho_{\calA,\calB}$ and quantum circuit descriptions $A,B$ and sends them back. The challenger applies $A^{x^0,x^1}$ to $\rho_{\calA}$, giving $a'$, and $B^{x^0,x^1}$ to $\rho_{\calB}$, giving $b'$. The adversary wins if $a'=b'=c$.} \label{game:CP-PF}
\end{figure}

\begin{definition} \label{definition:CP-PF}
    An efficient algorithm $\CP$ is a copy protection scheme for point functions if, for all efficient adversaries $\Adv$,
    \[
        \Pr[\emph{\CPExptPF}_{\CP,\Adv}(\secpar)=1] \le \frac{1}{2} + \negl.
    \]
\end{definition}

\begin{theorem} \label{theorem:point-functions-cp}
    Assuming injective one-way functions and unclonable encryption, $\qsiO$ copy-protects point functions.
\end{theorem}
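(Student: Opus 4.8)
The strategy mirrors the proofs of \Cref{theorem:decision-cp} and \Cref{theorem:search-cp}: using the $\qsiO$ guarantee, rewrite the copy-protected point function as a functionally equivalent program whose marked point is locked inside an unclonable-encryption ciphertext, and then reduce to the security of that scheme. First I would fix a UE scheme with key testing $(\Enc,\Dec,\test)$, which exists by \Cref{theorem:key-testing} under the hypotheses (injective one-way functions, $\qsiO$, and UE). For a ciphertext $\sigma$, let $P[\sigma]$ be the program that on input $z$ computes $\test(z;\sigma)$, outputs $0$ if this rejects, and otherwise decrypts $\sigma$ with key $z$ and outputs $1$ precisely when the recovered plaintext equals the fixed string $1^\secpar$. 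Since $\test(z;\Enc(\sk;m))=\delta_{\sk}(z)$ and $\Dec(\sk;\Enc(\sk;1^\secpar))=1^\secpar$, the program $P[\Enc(x^c;1^\secpar)]$ is (a quantum implementation of) $\delta_{x^c}$; hence by $\qsiO$ security the challenger's message $\qsiO(\delta_{x^c})$ in $\CPExptPF$ can be replaced by $\qsiO(P[\Enc(x^c;1^\secpar)])$ while changing $\Adv$'s winning probability by at most $\negl$.

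The core of the argument is a reduction $\Red$ that plays $\UEExpt$ against $(\Enc,\Dec)$. $\Red$ submits the message $1^\secpar$, and the UE challenger returns $\sigma=\Enc(\sk;m^{c'})$ with $m^1=1^\secpar$ and $m^0$ uniformly random. The key observation is that $P[\sigma]$ is functionally $\delta_{\sk}$ when $c'=1$ and is the constant-zero program when $c'=0$ (except with probability $2^{-\secpar}$). $\Red$ feeds $\qsiO(P[\sigma])$ to $\Adv$, obtains the split $(A,B,\rho_{\calA\calB})$, and then receives $\sk$. It defines two non-communicating circuits $\tilde A^{\sk},\tilde B^{\sk}$ using shared randomness consisting of a bit $\hat c$, a string $w\from\{0,1\}^\secpar$, and a tiebreaker bit $r$: each circuit sets $x^{\hat c}:=\sk$ and $x^{1-\hat c}:=w$, runs its half of $\Adv$ on the pair $(x^0,x^1)$ to obtain a guess, and outputs $1$ if the guess is $\hat c$ and $r$ otherwise. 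When $c'=1$, the view $\Red$ presents to $\Adv$ is (up to $\negl$) an honest instance of $\CPExptPF$ with marked point $\sk$ and challenge bit $\hat c$, so both halves output $\hat c$ with probability at least $\frac{1}{2}+\varepsilon-\negl$, which is a UE win; when $c'=0$, $\qsiO(P[\sigma])$ is computationally indistinguishable from a $\qsiO$ obfuscation of the constant-zero program and is therefore independent of $\sk$ --- hence of $\hat c$ --- so the halves have no information about $\hat c$ and the tiebreaker $r$ makes them agree. Combining the two branches against the UE security bound should force $\varepsilon=\negl$.

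The step I expect to be the main obstacle is the $c'=0$ branch. There the obfuscation handed to $\Adv$ is of the all-zero function rather than of a point function, so $\Adv$ lies outside the family it was designed for and may behave arbitrarily; in particular it may arrange for its two halves to \emph{disagree}, which is precisely what prevents the naive tiebreaker from giving success probability $\frac{1}{2}$ on this branch and can cancel the $\frac{1}{2}+\varepsilon$ advantage won on the $c'=1$ branch. Making the reduction go through will therefore require a more careful design of how $\Red$ post-processes the halves' outputs, together with a correspondingly careful accounting of $\Adv$'s conditional success probabilities --- in the spirit of the case analysis in the proof of \Cref{theorem:search-cp}, but more involved because the point-function game already has baseline advantage $\frac{1}{2}$. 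The remaining ingredients (instantiating key testing via \Cref{theorem:key-testing}, the functional-equivalence swaps justified by $\qsiO$ security, and bookkeeping the negligible error terms) are routine.
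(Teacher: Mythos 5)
Your high-level strategy---swap $\qsiO(\delta_{x^c})$ for a $\qsiO$ of a ciphertext-locked program and reduce to UE security---matches the paper, and you have correctly located the gap: in the $c'=0$ branch of your reduction, $\qsiO(P[\sigma])$ is an obfuscation of the constant-zero function, which lies outside the family of point functions, so the CP adversary has no guarantees and the shared-tiebreaker trick does not yield baseline success $1/2$. But the fix is not, as you suggest, a cleverer post-processing of the halves' outputs. The issue is structural, and the paper resolves it by choosing a different encoding so that \emph{both} branches of the UE experiment present $\Adv$ with a genuine point function.

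Concretely, the paper encrypts a single bit rather than a full point, and arranges for the secret key to have two preimages that play the roles of the two challenge inputs $x^0, x^1$. It fixes a random rank-$\secpar$ matrix $T \in \F_2^{\secpar \times (\secpar+1)}$ and defines $P_{T,\sigma}(z)$ to: (1) compute $\test(Tz; \sigma)$, output $0$ if this rejects; (2) compute the two preimages $x^0 \ne x^1$ with $Tx^0 = Tx^1 = Tz$ and, writing $i$ for the first coordinate on which they differ, set $x(c)$ to whichever preimage has bit $c$ at position $i$; (3) decrypt $\sigma$ to a bit $c$ and output $1$ iff $z = x(c)$. Then $P_{T,\Enc(\sk;c)} = \delta_{x(c)}$ for \emph{both} $c=0$ and $c=1$, so the reduction's message to $\Adv$ is always a $\qsiO$ obfuscation of an honest point function with challenge pair $(x^0, x^1)$. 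The reduction then uses $\sk$ to recover $x^0, x^1$, runs each half of $\Adv$ on the pair, and maps the guessed index $a'$ back to the bit $\tilde{a}$ with $x(\tilde{a}) = x^{a'}$. This translation is exact, with no problematic branch, so UE security bounds the copy-protection advantage directly. Your proposal as written cannot be repaired by bookkeeping alone; the missing idea is this re-encoding of the marked point as a bit indexed against two $T$-preimages of the secret key.
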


A new difficulty arises here that was not present when copy-protecting puncturable programs: For point functions, the copy protection security game is about distinguishing between \emph{inputs}. A naive application of the techniques from \Cref{theorem:decision-cp,theorem:search-cp} would therefore involve using UE to encrypt the secret keys. Instead, we will use UE to encrypt \emph{the first bit} at which the two challenges differ.

\begin{proof}
    Let $T$ be a rank-$\secpar$ matrix in $\F_2^{\secpar \times (\secpar+1)}$ and $\sigma$ be a quantum state. For input $z \in \F_2^{\secpar+1}$, we define a program $P_{T,\sigma}$ as follows.

    \indent $P_{T,\sigma}(z)$:
    \begin{enumerate}
        \item Compute $\test(Tz; \sigma)$. If it rejects, terminate and output 0.
        \item Compute $x^0 \ne x^1$ such that $Tx^0 = Tx^1 = Tz$. For $c \in \{0,1\}$, let
        \begin{equation} \label{eq:bit-diff}
            x(c) = \begin{cases}
                x^0, & (x^0)_i = c \textnormal{ where } i = \min\{j \in [\secpar+1] \mid (x^0)_j \ne (x^1)_j\} \\
                x^1, & \textnormal{ otherwise}.
            \end{cases}
        \end{equation}
        \item Compute $\Dec(Tz; \sigma) \to c$. If $x(c) = z$, output 1; otherwise output 0.
    \end{enumerate}

    Observe that $P_{T, \Enc(\sk; c)}(z) = \delta_{x(c)}(z)$, so $\qsiO(P_{T, \Enc(\sk; c)}) \approx \qsiO(\delta_{x(c)})$.

    We now describe a reduction $\Red$ that plays $\UEExpt$ using an adversary for the point function copy-protection game. We use a slight variant of $\UEExpt$ where the challenger encrypts a random bit, which is equivalent to the game presented in \Cref{game:UE} in the case of single bit messages.
    
    \indent $\UEExpt_{\enc,\Red[\Adv]}(\secpar)$:
    \begin{enumerate}
        \item The challenger samples $\sk \from \{0,1\}^\secpar$.
        \item The challenger samples $c \from \{0,1\}$, computes $\sigma = \enc(\sk; c)$ and sends $\sigma$ to $\Red$.
        \item $\Red$ samples a random rank-$\secpar$ matrix $T \from \F_2^{\secpar \times (\secpar+1)}$ and sends $\qsiO(P_{T, \sigma})$ to $\Adv$.
        \item $(A,B,\rho_{\calA,\calB}) \from \Adv(\qsiO(P_{T, \sigma}))$
        \item Let $\tilde{A}^\sk$ do the following on input $\rho_{\calA}$:
        \begin{enumerate}
            \item Compute $x^0 \ne x^1$ such that $Tx^0 = Tx^1 = \sk$.
            \item Run $A^{x^0,x^1}(\rho_{\calA}) \to a'$.
            \item Let $x(\cdot)$ be defined as in \eqref{eq:bit-diff}. Output $\tilde{a} \in \{0,1\}$ such that $x(\tilde{a}) = x^{a'}$.
        \end{enumerate}
        Define $\tilde{B}^\sk$ similarly but with $B, \rho_{\calB}$ instead.
        \item $\Red$ sends $(\tilde{A}, \tilde{B}, \rho_{\calA,\calB})$ to the challenger.
        \item The challenger measures $\tilde{A}$ and $\tilde{B}$ on $\rho_{\calA,\calB}$, obtaining $a'$ and $b'$. The reduction wins if $a'=b'=c$.
    \end{enumerate}
    By the security of $\qsiO$,
    \begin{align*}
        & \Pr[\CPExptPF_{\CP,\Adv}(\secpar+1)=1] \\
        &\le \Pr[\UEExpt_{\enc,\Red[\Adv]}(\secpar)=1] + \negl \\
    \intertext{and by the security of $\Enc$,}
        & \Pr[\UEExpt_{\enc,\Red[\Adv]}(\secpar)=1] \\
        &\le \frac{1}{2} + \negl. \qedhere
    \end{align*}
\end{proof}

\nocite{*}
\bibliographystyle{alpha}
\bibliography{references}

\end{document}